\newcommand*{\bx}{bx}
\newcommand*{\IfBold}{
  \ifx\f@series\bx
    \expandafter\@firstoftwo
  \else
    \expandafter\@secondoftwo
  \fi
}
\newcommand{\alg}[2][]{{\IfBold{\MakeUppercase{#2}}{\textup{\textsc{#2}}}}{#1}\xspace}
\newcommand{\trycolor}[1][]{\alg[#1]{TryColor}}
\newcommand{\tryrandomcolor}{\alg{TryRandomColor}}
\newcommand{\computeacd}{\alg{ComputeACD}}
\newcommand{\slackgeneration}[1][]{\alg[#1]{GenerateSlack}}
\newcommand{\putaside}[1][]{\alg[#1]{PutAside}}
\newcommand{\synchronizedcolortrial}{\alg{SynchColorTrial}}
\newcommand{\multitrial}[1][]{\alg[#1]{MultiTrial}}
\newcommand{\slackcolor}[1][]{\alg[#1]{SlackColor}}
\newcommand{\buddy}[1][]{\alg[#1]{Buddy}}
\newcommand{\friend}[1][]{\alg[#1]{Friend}}
\newcommand{\estimatesimilarity}[1][]{\alg[#1]{EstimateSimilarity}}
\newcommand{\jointsample}[1][]{\alg[#1]{JointSample}}
\newcommand{\estimatesparsity}[1][]{\alg[#1]{EstimateSparsity}}
\newcommand{\model}[1]{{\IfBold{\MakeUppercase{#1}}{\textup{\textsc{#1}}}}\xspace}
\newcommand{\LOCAL}{\model{Local}}
\newcommand{\CONGEST}{\model{Congest}}
\newcommand{\Ppolyclass}{\ensuremath{\mathbf{P}{/}\mathsf{poly}}\xspace}
\newcommand{\Pclass}{\ensuremath{\mathbf{P}}\xspace}
\newcommand{\eps}{\varepsilon}
\renewcommand{\epsilon}{\eps}
\DeclareMathOperator{\poly}{poly}
\definecolor{darkgreen}{rgb}{0,0.5,0}
\definecolor{darkblue}{rgb}{0,0,0.6}
\Crefname{remark}{Remark}{Remarks}
\Crefname{observation}{Observation}{Observations}
\crefname{section}{Sec.}{Sec.}
\crefname{algorithm}{Alg.}{Alg.}
\newtheorem{theorem}{Theorem}
\newtheorem{lemma}{Lemma}
\newtheorem{claim}{Claim}
\newtheorem{definition}{Definition}
\newtheorem{proposition}{Proposition}
\DeclareMathOperator*{\Exp}{\mathbb{E}}
\DeclarePairedDelimiter{\abs}{\lvert}{\rvert}
\DeclarePairedDelimiter{\card}{\lvert}{\rvert}
\DeclarePairedDelimiter{\set}{\lbrace}{\rbrace}
\DeclarePairedDelimiter{\event}{\lbrack}{\rbrack}
\DeclarePairedDelimiter{\range}{\lbrack}{\rbrack}
\DeclarePairedDelimiter{\parens}{\lparen}{\rparen}
\DeclarePairedDelimiter{\ceil}{\lceil}{\rceil}
\newcommand{\knuthupuparrow}{\mathbin{\uparrow\uparrow}}
\newcommand{\bbN}{\mathbb{N}}
\newcommand{\bbZ}{\mathbb{Z}}
\newcommand{\BAD}{\ensuremath{\mathtt{BAD}}} 
\newcommand{\YES}{\ensuremath{\mathtt{true}}}
\newcommand{\NO}{\ensuremath{\mathtt{false}}}
\newcommand{\colSpace}{\mathcal{C}} 
\newcommand{\colspace}{\colSpace} 
\newcommand{\pal}{\Psi} 
\newcommand{\col}{\psi} 
\newcommand{\acset}{\mathcal{S}_{\mathrm{ac}}} 
\newcommand{\core}{I} 
\newcommand{\disc}{\bar{\eta}} 
\newcommand{\unev}{\eta} 
\newcommand{\barsigma}{\bar{\sigma}}
\newcommand{\spar}{\zeta} 
\newcommand{\lspar}{\rmsup{\spar}{\scriptscriptstyle[d]}} 
\newcommand{\gspar}{\rmsup{\spar}{\scriptscriptstyle[\Delta]}} 
\newcommand{\rmsup}[2]{#1^{\mathrm{#2}}}
\newcommand{\discC}{\rmsup{\disc}{C}} 
\newcommand{\pgen}{p_{\mathrm{g}}} 
\newcommand{\pdisj}{p_{\mathrm{s}}} 
\newcommand{\cext}{c_{\mathrm{e}}} 
\newcommand{\cant}{c_{\mathrm{a}}} 
\newcommand{\rmsub}[2]{#1_{\mathrm{#2}}}
\newcommand{\Vrm}[1]{\rmsup{V}{#1}}
\newcommand{\Vsparse}{\Vrm{sparse}}  
\newcommand{\Vsp}{\Vsparse}          
\newcommand{\Vdense}{\Vrm{dense}}    
\newcommand{\Vde}{\Vdense}           
\newcommand{\Vun}{\Vrm{uneven}}      
\newcommand{\Vst}{\Vrm{start}}       
\newcommand{\epsrm}[1]{\rmsub{\eps}{#1}}
\newcommand{\eacd}{\epsrm{ac}}  
\newcommand{\espa}{\epsrm{sp}}  
\newcommand{\ehat}{\hat{\eps}}  
\newcommand{\DeloC}{\ensuremath{\mathrm{\Delta1C}}\xspace}
\newcommand{\degoLC}{\ensuremath{\mathrm{D1LC}}\xspace}
\newcommand{\degoC}{\ensuremath{\mathrm{D1C}}\xspace}
\newcommand{\HFset}{\mathcal{H}} 
\newcommand{\HFpwi}{\HFset_{\mathrm{pwi}}} 
\newcommand{\inj}[3][]{#2|^{#1}_{#3}} 
\newcommand{\collide}[4][]{#2\wedge^{#1}_{#3}#4} 
\newcommand{\hitsymbol}{\mathbin{\neg}}
\newcommand{\hit}[4][]{#2\hitsymbol^{#1}_{#3}#4} 
\newcommand{\samp}{\sigma} 
\newcommand{\out}{\lambda} 
\newcommand{\rhfSpace}{\mathcal{U}} 
\newcommand{\rhfspace}{\rhfSpace} 
\newcommand{\famsize}{F} 
\newcommand{\smin}{s_{\min}}
\newcommand{\sminpow}{\rho}
\newcommand{\bw}{b} 
\newcommand{\Sampler}{\textsc{Samp}} 
\newcommand{\enc}{\mathtt{enc}} 
\begin{document}

\title{Overcoming Congestion in Distributed Coloring\thanks{This paper incorporates results from the technical report \cite{HNT21} on adapting \LOCAL algorithms to \CONGEST. This excludes the other results in \cite{HNT21}, which were refactored in \cite{HKNT21}.}}

\author[1]{Magn\'us M. Halld\'orsson}
\author[1]{Alexandre Nolin}
\author[2]{Tigran Tonoyan}
\affil[1]{ICE-TCS, Department of Computer Science, Reykjavik University, Iceland} 
\affil[2]{Krisp Technologies Inc., Armenia} 

\affil[ ]{ \href{mailto:mmh@ru.is}{mmh@ru.is}; \href{mailto:alexandren@ru.is}{alexandren@ru.is}; \href{mailto:ttonoyan@gmail.com}{ttonoyan@gmail.com}}

\date{}

\maketitle

\begin{abstract}
    We present a new technique to efficiently sample and communicate a large number of elements from a distributed sampling space.
    When used in the context of a recent \LOCAL algorithm for $(\operatorname{degree}+1)$-list-coloring ($\degoLC$), this allows us to solve $\degoLC$ in $O(\log^5 \log n)$ \CONGEST rounds, and in only $O(\log^* n)$ rounds when the graph has minimum degree $\Omega(\log^7 n)$, w.h.p. 
    
    The technique also has immediate applications in testing some graph properties locally, and for estimating the sparsity/density of local subgraphs in $O(1)$ \CONGEST rounds, w.h.p.
\end{abstract}

\section{Introduction and Related Work}
\label{sec:intro}


We explore ways to reduce bandwidth, particularly for the fundamental vertex coloring problem.
Bandwidth is the key difference between the otherwise similar models of locality that are \LOCAL and \CONGEST: while nodes can send messages of arbitrary size in \LOCAL (and thus, the round complexity of a problem only depends on how far in the graph its nodes need to see), in \CONGEST the nodes are restricted to messages of size $O(\log n)$, where $n$ is the number of nodes in the graph. While some classical algorithms designed without bandwidth in mind nonetheless immediately work in both models, this is not true of many recent algorithms for distributed coloring.

In this paper, we introduce a technique to implement some general sampling and estimation tasks in \CONGEST. At its heart are families of hash functions we call \emph{representative} for having certain statistical properties. The technique improves on the bandwidth cost of a na\"ive approach by optimizing the use of randomness, taking ideas from the pseudorandomness literature. In the context of distributed coloring, the technique allows us to adapt crucial parts of recent randomized \LOCAL algorithms to \CONGEST. Through this and some additional ideas, we construct randomized \CONGEST algorithms for some of the most common distributed coloring problems that almost match the complexity of the current best \LOCAL algorithms for the same problems. Aside from our results for vertex coloring, our technique has direct applications of independent interest related to testing for the presence of some graph structure, and constructing structural decompositions known as \emph{almost-clique decompositions} in \CONGEST.

The key idea behind our technique is a combination of using hashing and optimizing the amount of randomness we use when selecting a hash function. In bandwidth-constrained contexts, hashing is a natural hammer for a number of nails. It allows to exchange information about sparse data living in a very large space by exchanging much smaller images. Random hash functions are also useful tools to sample elements, by taking as sample the elements that hash to certain values. However, the full description of an arbitrary function requires a lot of bits, necessarily more than the data we want to hash. Only by making the hash functions we use less random can they become useful tools in \CONGEST. Overall, using hashing in \CONGEST is a balancing act between using enough randomness for the random hash functions to have the statistical properties we need, and using randomness efficiently and sparingly so as to make the hash functions communicable within our bandwidth constraints.

When analyzing recent \LOCAL algorithms for vertex coloring focusing on the bandwidth use of each of their steps, two steps in particular stand out. One is related to the computation of so-called almost-clique decompositions. In such decompositions, nodes decide to join their neighbors in so-called \emph{almost-cliques} depending on how similar their neighborhood is to that of their neighbors. Since in a graph of maximum degree $\Delta$, the description of a node's neighborhood requires up to $\Delta \log n$ bits, a na\"ive approach would require $\Delta$ \CONGEST rounds. 
The other seemingly high-bandwidth step has to do with a procedure named \multitrial, which has nodes send multiple colors (up to $\Theta(\log n)$) to their neighbors. With colors living in a set $\colspace$, describing an arbitrary set of this many colors takes $\log |\colspace| \log n$ bits, i.e., $\log |\colspace|$ rounds.

Both steps have the shared quality of essentially reducing to some sampling task. In the case of the almost-clique decomposition, two nodes can get a sense of how similar their neighborhoods are by sending each other random elements from their respective neighborhoods.
In the case of \multitrial, the goal of each node is to sample random colors jointly with its neighbors such that each color has a good chance of being both a valid color for it and not simultaneously sampled by one of its neighbors.
This and the fact that some sort of random sampling also appears in a variety of other randomized algorithms motivates looking for an efficient implementation in \CONGEST.

The cost of communicating a random sample is intrinsically linked to how random it is: for example, it requires $10$ times less communication to communicate $10$ random elements if they are guaranteed to be all equal instead of being all independent. Thus, it might be tempting to use less randomness to save on communication in a distributed setting. But modifying a working algorithm by making it use less randomness runs the risk of skewing the probabilities to a point where the algorithm no longer works. The field of pseudorandomness has come up with techniques to save on randomness while keeping an algorithm functional, for example in the context of repeating a randomized algorithm to boost the probability of computing the correct answer. Ideas related to pseudorandomness have also previously made their way in fields focusing on the communication cost of algorithms, notably in the form of a seminal result in 2-party communication complexity known as Newman's theorem~\cite{Newman91}. In our setting where we use hash functions as tools for sampling elements, techniques from pseudorandomness allow us to construct families of hash functions that strike a balance between being random enough to produce useful samples, but small enough for our communication constraints.

One of the most general versions of the vertex coloring problem is the \emph{(degree+1)-list-coloring} problem ($\degoLC$). In this version, each vertex $v$ is given a list of $d_v + 1$ colors at the beginning of the algorithm, where $d_v$ is the degree of $v$. Each node must then color itself with a color from its list that is distinct from the colors adopted by its neighbors. Compared to versions of the problem where nodes all receive a list of $\Delta+1$ colors, where $\Delta$ is the maximum degree in the graph, $\degoLC$ requires to find procedures that can be parameterized to work with nodes that differ greatly in the number of colors they can choose from. Our techniques using pseudorandom hash functions allows such parameterization, by using different sets of hash functions depending on the size of the space we are trying to sample from.
Once all steps reducible to a sampling task are implemented using our pseudorandom family of hash functions, few steps of recent randomized \LOCAL algorithms remain to be adapted to work in \CONGEST. We give a complete \CONGEST adaptation of a recent \LOCAL algorithm for $\degoLC$. On graphs of large minimum degree, the resulting algorithm matches the ultrafast complexity of the \LOCAL algorithm it draws from. On graphs containing nodes of lower degree, our algorithm has a higher complexity but remains of order polynomial in $\log \log n$.

As our main tool for tackling $\degoLC$ in \CONGEST is a general technique for different types of sampling tasks, and sampling is extensively used in randomized algorithms, it might find uses in a variety of other problems.
In fact, we give two simple direct applications in the context of subgraph detection.
Our $O(1)$ \CONGEST algorithm for computing an almost-clique decomposition might also prove useful in problems other than vertex coloring.

As our main tool is only shown to exist through an existential proof, our algorithms are not uniform in their default form, in the sense that they require that the nodes either perform massive computation exploring the set of all hash functions, or are given some common advice bits that only depends on the size of the input. We provide explicit, uniform implementations of the subroutines crucial to our vertex coloring results. These implementations are ad hoc, but can be taken as indications that explicit constructions of our general tool of representative hash functions might be possible.

\subsection{Related Work}

The literature of distributed graph coloring is vast and we only mention those with direct implications for our work.
In the paper introducing the \LOCAL model \cite{linial92}, Linial showed that $(\Delta+1)$-coloring constant-degree graphs requires $\Omega(\log^* n)$ rounds, and gave a matching deterministic algorithm. This remains the only lower bound known when this many colors are available. The best upper bounds known on deterministic algorithms in general graphs are $O(\log^2 \Delta \log n)$ \cite{GK21} and $\tilde{O}(\sqrt{\Delta}) + O(\log^* n)$ \cite{Barenboim16}.

The \textsf{MultiTrial} technique was introduced in \cite{SW10} and further developed in \cite{EPS15} and \cite{CLP20}. Slack generation via sparsity was introduced in \cite{EPS15} (though it traces back to \cite{Reed98} in graph theory), where it was used to give ultrafast algorithms for edge coloring graphs of high degree. The shattering framework for distributed algorithms was proposed in \cite{BEPSv3}. 
The almost-clique decomposition was introduced to distributed computing in \cite{HSS18} (while a similar one was known in graph theory \cite{MolloyR98}), leading to an $O(\sqrt{\log n})$-round algorithm for $(\Delta+1)$-coloring. An implementation of ACD with random sampling was proposed in \cite{ACK19}. The best randomized $(\Delta+1)$-coloring algorithm known, given in \cite{CLP20}, has complexity $O(\log^3 \log n)$. 


For the $(\deg+1)$-list coloring (\degoLC) problem, the best bound until recently was $O(\log n)$ of the early algorithms of 
\cite{johansson99,alon86,luby86} and the more refined bound of $O(\log \Delta + \poly(\log\log n))$ \cite{BEPSv3}.
This year, the bound was improved to $O(\log^3 \log n)$, which drops to $O(\log^* n)$ when all nodes have degree $\Omega(\log^{7} n)$ \cite{HKNT21}.

All the works above (except the recent \cite{GK21}) were stated for the \LOCAL model, while the $O(\log n)$-round algorithms also run in \CONGEST. The complexity of $(\Delta+1)$-list-coloring in \CONGEST was recently improved to $O(\log^5 \log n)$ rounds in \cite{HKMT21}.

For the Congested Clique, $O(1)$-round algorithms are known for $(\Delta+1)$-coloring, both randomized \cite{CFGUZ19} and deterministic \cite{CDP20}. An earlier $O(1)$-round algorithm was given for semi-linear MPC \cite{ACK19}.

$(1+\eps)\Delta$ coloring, $(2\Delta-1)$-edge coloring, and $(1+\eps)\Delta^2$ distance-2 coloring were recently shown to respectively admit an $O(\log^3 \log n)$, $O(\log^4 \log n)$, and $O(\log^4 \log n)$ rounds algorithm in \CONGEST \cite{HN21}. When $\Delta \geq \log^{1+1/\log^*n} n$, the complexities drop to $O(\log^* n)$ and the edge-coloring algorithm can restrict itself to $(1+\eps)\Delta$ colors. The efficiency of these algorithms at high degrees comes from a pseudorandom construct called representative set families. Such families are built so that a random member is likely to intersect any large fraction of the space. This allows nodes to efficiently sample up to $\Theta(\log n)$ colors in $O(1)$ rounds when given a constant fraction of their degree as extra colors, making $O(\log^* n)$ algorithms possible. The technique has an important limit, however: it does not work when the colors live in a color space much larger than the nodes' degrees. This prevents the technique from being useful in list-coloring settings, and when nodes have an amount of slack that is comparable to their degree late in the algorithm, but much smaller than their original degree. We build on the technique in this paper, and tackle scenarios in which representative sets could not be applied. Obtaining this more general result requires a significant leap, as exchanging information about and sampling from elements living in a very large universe efficiently requires a high level of succinctness 
in our communication.


Distributed property testing was introduced in \cite{BP11} and formalized in \cite{CFSV19}. The usual model is to distinguish graphs that satisfy a given property (e.g., triangle-freeness), from those for which an $\epsilon$-fraction of the edges must be removed for the property to hold. Distributed testing algorithms were given for detecting triangles in \cite{CFSV19} and $C_4$ in \cite{FRST16}, with the best known round complexity of $O(1/\eps)$ obtained for both problems (and other cycles) in \cite{FO19}.


\section{Results}

\paragraph{Efficient sampling and estimation.}
We give a communication-efficient procedure for two parties each possessing a set to estimate how similar their sets are, and sample an element in their intersection or their difference (\cref{sec:estimate-and-sample}). The technique also works with more parties, allowing, e.g., a party to sample an element in the difference between her set and the union of all her neighbors' sets. The technique is quite general, and might find applications in problems other than those studied in this paper.

\paragraph{Coloring.}
We bring to \CONGEST recent \LOCAL $\poly(\log \log n)$-round randomized algorithms for coloring, at the cost of a moderate increase in complexity -- our algorithm uses $O(\log^5 \log n)$ \CONGEST rounds in general, up from $O(\log^3 \log n)$ \LOCAL rounds. Our algorithm is an adaptation of a recent \LOCAL algorithm for $\degoLC$ of~\cite{HKNT21}.

\begin{restatable}{theorem}{congestDegolcTheorem}\label{thm:congestDegolc}
The \degoLC problem can be solved w.h.p.\ in $O(\log^5\log n)$ rounds in the \CONGEST model with $\log n$ bandwidth. When all nodes have degree at least $\log^7 n$, the algorithm uses only $O(\log^* n)$ rounds.
\end{restatable}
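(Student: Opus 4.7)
The plan is to take the \LOCAL algorithm of \cite{HKNT21} for $\degoLC$ and re-implement each of its phases in \CONGEST using the representative hash function family and the estimation/sampling subroutines announced in \cref{sec:estimate-and-sample}. The outer structure of the \LOCAL algorithm --- compute an almost-clique decomposition (ACD), generate slack for sparse nodes, run a bounded number of \multitrial-like color trials to handle all but a shattered residual set, and finish the residual by a deterministic fallback --- would be preserved; the work lies in making each phase bandwidth-efficient while keeping the probabilistic guarantees the analysis needs.

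First, for the ACD, I would invoke the pairwise similarity-estimation procedure on every edge, letting adjacent nodes decide in $O(1)$ rounds whether to join a common almost-clique, instead of the naive $\Omega(\Delta)$-round exchange of full neighborhoods. For each \multitrial step, in which a node would normally transmit $\Theta(\log n)$ sampled colors from a potentially much larger list, I would instead broadcast an $O(\log n)$-bit seed selecting a representative hash function; neighbors reconstruct the sampled set from the seed, and the representativeness property of the family guarantees that the joint distribution seen by the analysis (i.e., a color being free at $v$ and simultaneously not sampled by any neighbor of $v$) is close enough to the fully random one. The same hash toolkit handles slack generation and the other random-sampling subroutines by re-parameterizing to the relevant sampling-space size, which is essential for $\degoLC$ because list sizes and slack can vary widely relative to node degrees --- precisely the regime where the older representative sets of \cite{HN21} break down. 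After shattering, the residual problem lives on low-degree components whose coloring is completed by the \CONGEST $(\Delta+1)$-list-coloring algorithm of \cite{HKMT21}, contributing the dominant $O(\log^5\log n)$ term.

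For the high-degree case, when all $d_v \ge \log^7 n$, the slack and sparsity guarantees of the base \LOCAL algorithm already suffice to color every vertex without relying on shattering, so no \cite{HKMT21}-style fallback is invoked; the algorithm then consists of a constant number of ACD, slack-generation, and \multitrial phases, each implementable in $O(1)$ \CONGEST rounds by the above, plus $O(\log^* n)$ rounds of Linial-style coloring inside the small-diameter almost-cliques, yielding the $O(\log^* n)$ bound. The main obstacle I expect is verifying that the reduced-randomness provided by representative hash families still supports all the correlated events tracked by the analysis of \cite{HKNT21} --- joint events across a node and its neighbors, and compositions across the $O(\log^3 \log n)$ iterations of the base algorithm --- while keeping seed descriptions short enough to fit in a single $O(\log n)$-bit message, uniformly across the wide range of list sizes, degrees, and slack values that coexist within one execution.
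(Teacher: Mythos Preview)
Your high-level plan---adapt the \LOCAL algorithm of \cite{HKNT21}, implement the ACD via pairwise similarity estimation and \multitrial via representative hash functions---matches the paper's approach. But several of the specifics are off or missing.

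The post-shattering step is wrong as stated. You invoke \cite{HKMT21}, but that is itself a \emph{randomized} $(\Delta+1)$-list-coloring algorithm; the shattering framework needs a \emph{deterministic} fallback on the residual polylog-sized components, and the residual problem is still \degoLC with lists from a possibly huge color space. The paper instead computes a network decomposition via \cite{GGR20} on the shattered graph (this is the actual source of the $O(\log^5\log n)$ term), performs a derandomized colorspace reduction inside each cluster, and only then applies the deterministic list-coloring of \cite{GK21}, whose \CONGEST complexity depends on $\log|\colspace|$.

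The $O(\log^* n)$ in the high-degree case is not ``Linial-style coloring inside the small-diameter almost-cliques.'' Almost-cliques are handled by \synchronizedcolortrial (the leader permutes its palette and hands one color to each inlier) plus put-aside sets; the $O(\log^* n)$ is the number of \multitrial iterations inside \slackcolor, where the number of colors tried grows tower-fast. You also omit three adaptations the paper treats as non-trivial: (i) leader selection---exact slackability is not computable in \CONGEST, so the paper uses a surrogate (external degree $+$ anti-degree $+$ ``chromatic slack'' realized during \slackgeneration) and argues it yields a leader of slackability $O(\max(\ell,\barsigma_C))$; (ii) coloring the put-aside sets, which requires shrinking them to $\Theta(\ell)$ and relaying palette fragments to the leader through disjoint intervals of inliers; (iii) handling large colors throughout (up to $\exp(n^{\Theta(1)})$) via node-local approximately-universal hash functions broadcast once at the start. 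Without these pieces the plan does not go through.
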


This coloring algorithm uses only polynomial local computation.

The round complexity reduces to $O(\log^3 \log n)$ when the size of the color space $\colspace$ (and therefore, the degrees) is of order $\poly(\log n)$.
This, in combination with the 
$O(\log^* n)$ complexity on large degree nodes, immediately yields an $O(\log^3 \log n)$ algorithm for $(\operatorname{degree}+1)$-coloring (\degoC).
Note that this also improves on the state of the art for the $(\Delta+1)$-coloring (\DeloC) problem in \CONGEST.

\begin{restatable}{corollary}{betterloglogdelta}\label{cor:congestDeloC}
\degoC can be solved w.h.p.\ in $O(\log^3 \log n)$ rounds in the \CONGEST model.
\end{restatable}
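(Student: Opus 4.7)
The plan is to reduce $\degoC$ to two applications of Theorem~\ref{thm:congestDegolc}, exploiting the fact that in $\degoC$ each node $v$ receives a list of exactly $d_v + 1$ colors, so $v$'s list size is tied to its degree in the original graph. First, I would partition the vertex set into $V_H := \{v : d_v \geq \log^7 n\}$ and $V_L := \{v : d_v < \log^7 n\}$; nodes in $V_L$ all have lists of size at most $\log^7 n + 1 = \poly(\log n)$.

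The first stage colors $V_H$. Since every node of $V_H$ has degree at least $\log^7 n$ in the original graph, the $O(\log^* n)$-round regime of Theorem~\ref{thm:congestDegolc} applies. After this stage the uncolored vertices lie in $V_L$, and the residual problem is a $\degoLC$ instance in which every list has size at most $\log^7 n + 1 = \poly(\log n)$. By the refined bound stated immediately after Theorem~\ref{thm:congestDegolc}, a second invocation solves this polylog-color-space instance in $O(\log^3 \log n)$ rounds. The total round complexity is $O(\log^* n) + O(\log^3 \log n) = O(\log^3 \log n)$.

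The main obstacle I anticipate is justifying the $O(\log^* n)$ guarantee for Stage~1 even though some nodes of $G$ (namely those in $V_L$) have degree below $\log^7 n$. The cleanest route is to rely on the fact that the algorithm underlying Theorem~\ref{thm:congestDegolc} already isolates a high-degree sub-procedure that processes only vertices of degree at least $\log^7 n$, while low-degree neighbors remain passive during this phase; alternatively, one can first let $V_L$-nodes defer all decisions and forward relevant information so that $V_H$-nodes perceive the same local picture they would in a graph consisting only of $V_H$. One must also verify that the $\degoLC$ invariant (residual list size at least residual degree plus one) is preserved when passing to Stage~2, which holds because each color removed from a $V_L$-node's residual list corresponds to a distinct just-colored $V_H$-neighbor, so both sides of the inequality decrease by the same amount.
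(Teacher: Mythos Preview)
Your proposal is correct and matches the paper's own reasoning: the paper states (just before the corollary) that the $O(\log^* n)$ high-degree phase together with the $O(\log^3\log n)$ bound for a $\poly(\log n)$-sized color space immediately gives the result for \degoC, which is exactly your two-stage argument. Your anticipated obstacle is also handled as you suggest, since the underlying algorithm already processes nodes in decreasing degree ranges, so low-degree nodes are naturally deferred during the high-degree phase.
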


\paragraph{Uniform implementations.}
The implementation that follows from our main technique is non-uniform in the sense that it requires that the nodes perform very large computations locally, or that they have access to some advice only dependent on the size of the input (similar to how the complexity class \Ppolyclass is enhanced compared to \Pclass). To reduce the total computational demand to polynomial, we provide alternative uniform implementations of our main procedures in \cref{sec:uniform}.

\paragraph{Other results.}
On our way to proving our results for vertex coloring, we give an algorithm for computing a $(\operatorname{degree}+1)$ almost-clique decomposition (\cref{sec:acd}). Our general technique for sampling and estimation also has some immediate applications in testing for the presence certain graph structures, e.g., triangle-rich neighborhoods (\cref{sec:sparsity,sec:triangles,sec:cycles}).

\section{Congestion-Reducing Techniques}
\label{sec:congest}

\subsection{Representative Hash Functions}

The crux of our results is a procedure for communicating parties to estimate the intersection or difference of sets they keep and/or sample elements in that intersection or difference. We do so through hashing, using a family of hash functions we call \emph{representative} due to their statistical properties. For a given parameter $b$, the family is engineered to distort the probabilities of some events by at most $\exp(-\Omega(b))$ compared to fully random hash functions, while being of small enough size $\exp(O(b))$. With $b$ chosen within a constant multiplicative factor of the available bandwidth, i.e., $b \in \Theta(\log n)$, this enables sending the index of a function in the family in a constant number of \CONGEST rounds, while only introducing a manageable distortion compared to a fully random hash function.

\paragraph{Intuition.}
Suppose two parties have access to a shared source of randomness to pick a fully random hash function $h$ from a universe $\rhfspace$ to $[\out]$, without communicating. Having access to such a hash function offers several possibilities. In particular, the parties may now communicate about an element $x \in \rhfspace$ through its image $h(x)$. Suppose the communicating parties each possess a subset of $\rhfspace$, respectively $X$ and $Y$. To pick a random element in $X$, the node possessing $X$ can rely on the randomness of the hash function for the selection process: set a threshold $\samp$, consider all the elements of $X$ that hash to a value $\leq \samp$, and pick one of those low-hashing elements at random. When sampling elements jointly, the parties can ensure that they choose two distinct elements $(x,y) \in X \times Y, x \neq y$ by ensuring that $h(x) \neq h(y)$. To sample an element in $X \setminus Y$, it suffices to pick an element in $X \setminus h^{-1}(h(Y))$, i.e., an element of $X$ that hashes to a value that no element of $Y$ hashes to. To pick an element in the intersection $X \cap Y$, the parties may look at the intersection $h(X) \cap h(Y)$ and take the preimages of a hash in the intersection. When bandwidth is limited, the parties can adapt to this constraint by adjusting the threshold $\samp$: $\samp$ bits suffice for each party to encode, for each value $\leq \samp$, whether it has an element hashing to it.

The hash function introduces errors: the set $h(X)\setminus h(Y)$ can be empty though $X\setminus Y$ is not, due to collisions; the set $h(X) \cap h(Y)$ might be non-empty even when $X \cap Y$ is; and our elements of interest might hash to values $>\samp$, resulting in the parties missing them. But with the right ratios between the sizes of the sets, the size of the output space of the hash function ($\out$), and the size of the observation window ($\samp$), it can be argued that only some amount of errors occurs with the needed probability. 

To apply these ideas in \CONGEST, it only remains for the parties to be able to sample and communicate a random hash function, which is achieved by building a small set of hash functions with nearly the same statistical properties as the set of all hash functions from $\rhfspace$ to~$[\out]$.

\paragraph*{Notations.}
For a set $\rhfspace$ and a number $\out\in \bbN$, let $[\out]^{\rhfspace}$ denote the set of all functions from $\rhfspace$ to $[\out]=\{1,\dots,\out\}$. For a function $h$, sets $A,B \subseteq \rhfspace$, and number $\samp$, we define:
\begin{itemize}
    \item $\inj[\leq\samp]{A}{h} = h^{-1}([\samp])\cap A$,
    \item $\collide[\leq \samp]{A}{h}{B} = \set*{\col \in A : h(\col)\in [\samp]\cap h(B\setminus \set{\col})}$,
    \item $\hit[\leq \samp]{A}{h}{B} = (A|_h^{\leq\samp}) \setminus (\collide[\leq \samp]{A}{h}{B})$.
\end{itemize}

Intuitively, $\inj[\leq\samp]{A}{h}$ are the elements of $A$ that hash to a value at most $\samp$ through $h$, $\collide[\leq \samp]{A}{h}{B}$ is the subset of $\inj[\leq\samp]{A}{h}$ that is in collision with some element of $B$, i.e., the elements $x\in \inj[\leq\samp]{A}{h}$ s.t.\ there exists a $x'\in B\setminus \set{x}, h(x') = h(x)$.
$\hit[\leq \samp]{A}{h}{B}$ are the elements of $A$ that hash to a value $\leq \samp$ through $h$ that no distinct element in $B$ hashes to (note that $B$ may contain $A$ or a subset of $A$). The definitions of the sets are most clear when $B=A$: $\collide[\leq \samp]{A}{h}{A}$ are the elements of $A$ that hash to at most $\samp$ through $h$ and collide with another element of $A$; $\hit[\leq \samp]{A}{h}{A}$ are the elements of $A$ that hash to at most $\samp$ through $h$ and do not collide with another element of $A$. Note that $\inj[\leq\samp]{A}{h}$, $\collide[\leq\samp]{A}{h}{B}$, and $\hit[\leq\samp]{A}{h}{B}$ are subsets of the domain of $h$, not its codomain.

If $h$ were fully random, we would expect the size of $\inj[\leq\samp]{A}{h}$ to be within a constant factor of $\samp\card{A}/\out$ w.p.\ $1-\exp(-\Omega(\samp\card{A}/\out))$. Also, with a fully random $h$ and $\out$ sufficiently large w.r.t.\ $A$ and $B$, we would expect at most $\card{A}\card{B}/\out$ elements of $A$ to be in collision with an element of $B$. Our goal is to maintain a relaxed version of these probabilistic guarantees while restricting the space of random hash function we select from, so that communicating the index of a selected function is feasible in $O(1)$ messages.

\begin{figure}[h]
    \centering
    \includegraphics[page=1,width=0.7\linewidth]{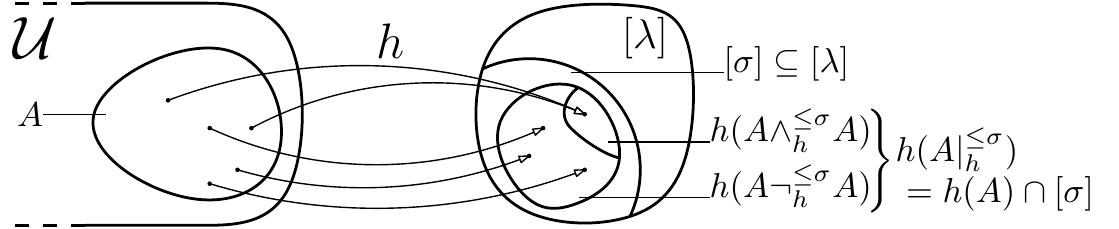}
    \caption{Example of our notation on a symmetric example ($B=A$). $\inj[\leq\samp]{A}{h}$ is the part of $A$ hashing to $\samp$ or less, $\collide[\leq \samp]{A}{h}{A}$ are the elements of $\inj[\leq\samp]{A}{h}$ that collide through $h$, and $\hit[\leq \samp]{A}{h}{A}$ is the rest of $\inj[\leq\samp]{A}{h}$.}
    \label{fig:rhf_blobs_sym}
\end{figure}

\begin{figure}[b]
    \centering
    \includegraphics[page=2,width=0.7\linewidth]{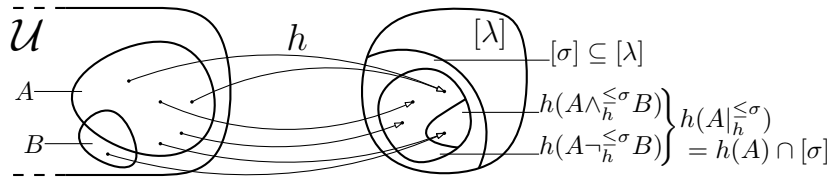}
    \caption{Example of our notation on an asymmetric example ($B\neq A$). $\collide[\leq \samp]{A}{h}{B}$ and $\hit[\leq \samp]{A}{h}{B}$ are the natural non-symmetric generalizations of $\collide[\leq \samp]{A}{h}{A}$ and $\hit[\leq \samp]{A}{h}{A}$ where we focus on collisions between elements of $A$ and $B$ instead of within $A$. $A$ and $B$ can overlap arbitrarily.}
    \label{fig:rhf_blobs_asym}
\end{figure}

For intuition and easier proofs later, a few elementary properties of set operators $\inj[\leq \samp]{}{h}$, $\collide[\leq \samp]{}{h}{}$, and $\hit[\leq \samp]{}{h}{}$ are given in \cref{prop:rhf-elementary}. When $\samp$ is clear from the context, we omit the superscript $\scriptstyle \leq \samp$ and simply write $\inj{A}{h}, \collide{A}{h}{B}$, and $\hit{A}{h}{B}$.

\begin{proposition}
\label{prop:rhf-elementary}
\begin{align}
     \forall A &: & \card*{h(\collide[\leq \samp]{A}{h}{A})} &\leq \card*{\collide[\leq \samp]{A}{h}{A}}/2\ ,
     \label{eq:rhf-collide}\\
     \forall A,B \;&\textrm{ s.t. } A\subseteq B:& \card*{h(\hit[\leq \samp]{A}{h}{B})} &= \card*{\hit[\leq \samp]{A}{h}{B}}\ ,
     \label{eq:rhf-hit}\\
    \forall A,B,C \;& \textrm{ s.t. } B \subseteq C: & (\collide[\leq \samp]{A}{h}{B}) &\subseteq (\collide[\leq \samp]{A}{h}{C})  \;\textrm{ and thus }\; (\hit[\leq \samp]{A}{h}{C}) \subseteq (\hit[\leq \samp]{A}{h}{B})\ .
    \label{eq:rhf-inclusion}
\end{align}
\end{proposition}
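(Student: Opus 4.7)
My plan is to unpack each of the three definitions and verify the three claims directly; all should follow from a careful reading of the symbols $\inj[\leq\samp]{A}{h}$, $\collide[\leq\samp]{A}{h}{B}$, $\hit[\leq\samp]{A}{h}{B}$, with no probabilistic or structural arguments needed.

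For \eqref{eq:rhf-collide}, I will argue that every element of $h(\collide[\leq\samp]{A}{h}{A})$ has at least two preimages inside $\collide[\leq\samp]{A}{h}{A}$, which gives the factor $1/2$. Concretely, pick $\col \in \collide[\leq\samp]{A}{h}{A}$. By definition, $h(\col) \in [\samp]$ and there exists $\col' \in A \setminus \set{\col}$ with $h(\col') = h(\col)$. The symmetric form of the definition now does the work: $\col'$ lies in $A$, satisfies $h(\col') \in [\samp]$, and $\col \in A \setminus \set{\col'}$ witnesses $h(\col) = h(\col')$, so $\col' \in \collide[\leq\samp]{A}{h}{A}$ as well. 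Thus each fiber of $h$ restricted to this set has size $\geq 2$, whence the bound on the image size.

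For \eqref{eq:rhf-hit}, the plan is to show that $h$ is injective on $\hit[\leq\samp]{A}{h}{B}$ once $A \subseteq B$. Take distinct $\col_1, \col_2 \in \hit[\leq\samp]{A}{h}{B}$. Both lie in $A$ and hence in $B$, so $\col_2 \in B \setminus \set{\col_1}$. If we had $h(\col_1) = h(\col_2)$, then by definition $\col_1$ would belong to $\collide[\leq\samp]{A}{h}{B}$, contradicting $\col_1 \in \hit[\leq\samp]{A}{h}{B} = \inj[\leq\samp]{A}{h} \setminus \collide[\leq\samp]{A}{h}{B}$. Hence $h(\col_1) \neq h(\col_2)$, $h$ is injective on the hit set, and equality of cardinalities follows.

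For \eqref{eq:rhf-inclusion}, the plan is a one-line monotonicity check: if $B \subseteq C$, then for every $\col$, $h(B \setminus \set{\col}) \subseteq h(C \setminus \set{\col})$, so the condition $h(\col) \in [\samp] \cap h(B\setminus\set{\col})$ implies $h(\col) \in [\samp] \cap h(C\setminus\set{\col})$, giving the first inclusion; the reverse inclusion on the hit sets follows because $\hit[\leq\samp]{A}{h}{\cdot}$ is obtained by subtracting $\collide[\leq\samp]{A}{h}{\cdot}$ from the common set $\inj[\leq\samp]{A}{h}$, which does not depend on the last argument. I do not anticipate any real obstacle; the only place worth writing out carefully is the symmetry argument in \eqref{eq:rhf-collide}, to convince the reader that the collision witness $\col'$ itself lies in the collide set.
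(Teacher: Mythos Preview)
Your proposal is correct and mirrors the paper's own proof essentially step for step: the symmetry argument for \eqref{eq:rhf-collide}, the injectivity-on-hit-set argument for \eqref{eq:rhf-hit}, and the monotonicity-of-collide argument for \eqref{eq:rhf-inclusion} are exactly what the paper does. Your write-up is in fact slightly more explicit than the paper's, but there is no substantive difference in approach.
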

\begin{proof}
\Cref{eq:rhf-collide} follows from the fact that for each element $x \in \collide[\leq \samp]{A}{h}{A}$, there exists an element $x' \in A\setminus \set{x}, h(x') = h(x)$, which by definition of $\collide[\leq \samp]{A}{h}{A}$ implies that $x' \in \collide[\leq \samp]{A}{h}{A}$. So every hash $y\in h(\collide[\leq \samp]{A}{h}{A})$ has at least two preimages, giving the result.

\Cref{eq:rhf-hit} follows from every element of $\hit[\leq \samp]{A}{h}{B}$ having a unique hash value among elements of $B$, of which $\hit[\leq \samp]{A}{h}{B}$ is a subset.

\Cref{eq:rhf-inclusion} describes the simple fact that the set of elements in collision with elements from a set $B$ is smaller than (and included in) the set of elements in collision with elements from the set $C$, where $C$ is a superset of $B$. Conversely, any element not in collision with all the elements of $C$ is necessarily not colliding with any element of $B$. 
\end{proof}

We now show the existence of a small (compared to $[\out]^\rhfspace$) family of hash function such that a random element from this family behaves similar to a fully random hash function w.r.t\ to the sets $\hit[\leq\samp]{A}{h}{B}$, $\collide[\leq\samp]{A}{h}{B}$, and $\hit[\leq\samp]{A}{h}{B}$, for all $A$ and $B$ in a given size range. The proof is of the probabilistic method type, and bears resemblance to other arguments in the field of pseudorandomness or more generally aimed at reducing the amount of random bits used in a task, such as Newman's theorem~\cite{Newman91} and recent efforts to bring ultrafast distributed coloring algorithms in \CONGEST \cite{HN21}. This construction is the basis of our algorithms for efficiently estimating and sampling in \CONGEST, \estimatesimilarity (\cref{alg:similarity}) and \multitrial (\cref{alg:multitrial}).
Note that using simpler objects like $k$-wise independent hash functions would not suffice here. Indeed, such hash functions do not ensure the required statistical properties with sufficient probability when $k$ is low, and increasing $k$ to the number of elements we generally need to hash ($\max(\Delta,\log n)$) to get the needed probability prohibitively increases the cost of describing a hash function in \CONGEST.

\begin{lemma}[Representative hash functions]
\label{lem:representative_hash_functions}

Let $\alpha,\beta,\nu\in (0,1)$ and $\out\in \bbN$ be s.t.\ $\alpha \leq \beta$ 
and $\out \geq \max(45\alpha^{-1},3\alpha^{-1}\beta^{-2})\ln(12/\nu)$, 
and let $\rhfspace$ be a finite set. There exists a family of $\famsize=\Theta\left(\beta \out\nu^{-1} \log\card{\rhfspace}\right)$ hash functions $\{h_i\}_{i\in[\famsize]}\subseteq [\out]^{\rhfspace}$  and $\samp\leq \out$, $\samp \in \Theta\left(\beta^{-2}\alpha^{-1}\log(1/\nu)\right)$, such that for every $A,B \subseteq \rhfspace$ with $\card{A}, \card{B} \in [0,\beta \out]$, at least $(1-\nu)\famsize$ of the hash functions $h$ satisfy
\begin{align*}
\card*{\inj[\leq \samp]{A}{h}} &\in \frac {\samp \card{A}}{\out} \cdot \range*{1-\beta,1+\beta}
&\textrm{ and }\qquad & 
    \card*{\collide[\leq \samp]{A}{h}{B}} \leq \frac {2\samp \card{A}}{\out} \beta
& \textrm{when $\card{A} \geq \alpha \out$~,}
\\
    \card*{\inj[\leq \samp]{A}{h}} &\leq \samp \alpha\cdot (1+\beta)
&\textrm{ and }\qquad &
    \card*{\collide[\leq \samp]{A}{h}{B}} \leq 2\samp \alpha \beta
    & \textrm{when $\card{A} < \alpha \out$~.}
\end{align*}

\end{lemma}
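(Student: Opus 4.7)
I would prove the lemma by the probabilistic method. Set $\samp$ to be the smallest integer of order $\beta^{-2}\alpha^{-1}\log(1/\nu)$ that makes the Chernoff estimates below go through, and draw $\famsize=\Theta(\beta\out\nu^{-1}\log\card{\rhfspace})$ hash functions $h_1,\dots,h_{\famsize}$ independently and uniformly from $[\out]^{\rhfspace}$. Call a single hash function $h$ \emph{bad for} a pair $(A,B)$ with $\card{A},\card{B}\in[0,\beta\out]$ if it violates any of the four stated inequalities. The plan is (i) to show the single-hash bad probability satisfies $p_{A,B}\leq\nu/2$ for every such pair; (ii) apply a Chernoff bound to the $\famsize$ independent trials to obtain $\Pr[\#\{i:h_i\text{ bad for }(A,B)\}>\nu\famsize]\leq\exp(-\Omega(\nu\famsize))$; and (iii) union-bound over the at most $\card{\rhfspace}^{2\beta\out}$ relevant pairs, which is strictly below $1$ for $\famsize$ of the stated order, yielding the desired deterministic family.

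Two reductions streamline the single-hash analysis. First, both $\card{\inj[\leq\samp]{A}{h}}$ and $\card{\collide[\leq\samp]{A}{h}{B}}$ are monotone non-decreasing in $A$ directly from the definitions, and the collision quantity is also monotone non-decreasing in $B$ (by \cref{eq:rhf-inclusion}); hence all four stated bounds in the regime $\card{A}<\alpha\out$ or $\card{B}<\alpha\out$ follow from the corresponding large-set bounds applied to any supersets of size $\alpha\out$, and it suffices to treat $\card{A},\card{B}\geq\alpha\out$. Second, the injection bound is the easy part: $\card{\inj[\leq\samp]{A}{h}}=\sum_{x\in A}\mathbb{1}[h(x)\leq\samp]$ is a sum of independent Bernoullis of mean $\samp\card{A}/\out\geq\alpha\samp$, so multiplicative Chernoff gives two-sided deviation outside $(1\pm\beta)\samp\card{A}/\out$ with probability at most $2\exp(-\beta^2\alpha\samp/3)\leq\nu/4$ by our choice of $\samp$.

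The main obstacle is the collision bound, whose summands $\mathbb{1}[x\in\collide[\leq\samp]{A}{h}{B}]$ are correlated through the shared randomness $h|_B$. Its expectation satisfies
\[\mathbb{E}\bigl[\card{\collide[\leq\samp]{A}{h}{B}}\bigr]\;\leq\;\card{A}\card{B}\samp/\out^2\;\leq\;\card{A}\beta\samp/\out,\]
so the claimed inequality allows twice the expectation as slack. I see two concrete routes to the required concentration. (a) Condition on $h|_B$: this decomposes $\card{\collide[\leq\samp]{A}{h}{B}}$ into a term determined by $h|_B$ (the collisions of $A\cap B$ against $B$ inside the fixed part) and a binomial in the independent hashes $\{h(x):x\in A\setminus B\}$ with parameter $\card{h(B)\cap[\samp]}/\out$; the injection bound already established for $B$ forces this parameter to be at most $(1+\beta)\beta\samp/\out$ outside small probability, and a symmetric second conditioning on $h|_{B\setminus A}$ handles the first term, with a Chernoff at each step. (b) Alternatively, upper-bound $\card{\collide[\leq\samp]{A}{h}{B}}$ by the ordered pair-collision count $T=\card{\{(x,y)\in A\times B:x\neq y,\;h(x)=h(y)\leq\samp\}}$, and apply Janson's inequality (or a Bernstein-type bound using the dependency graph of the indicators $Y_{x,y}$, which has degree $O(\card{A}+\card{B})$) to conclude $\Pr[T\geq 2\mathbb{E}[T]]\leq\exp(-\Omega(\mathbb{E}[T]))\leq\nu/4$, using $\mathbb{E}[T]\geq\alpha^2\samp=\Omega(\log(1/\nu))$ in the reduced regime. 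Either route uses the hypothesis $\out\geq\max(45\alpha^{-1},3\alpha^{-1}\beta^{-2})\ln(12/\nu)$ precisely to keep every Chernoff exponent above $\log(1/\nu)$.

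Assembling the two single-hash events gives $p_{A,B}\leq\nu/2$; the outer Chernoff and the union bound over set pairs then complete the argument. The main difficulty I anticipate is handling the dependencies in the collision count cleanly---either the nested conditioning of route (a), or the $\Delta_2$ computation and upper-tail Janson application of route (b), is somewhat technical, but both are standard in the pseudorandomness/concentration toolkit and both hinge on the same lower bound on $\out$.
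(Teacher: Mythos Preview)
Your plan is correct and matches the paper's proof almost step for step: pick $\samp\in\Theta(\beta^{-2}\alpha^{-1}\log(1/\nu))$, draw $\famsize$ fully random functions, show a single random $h$ is $(A,B)$-good with probability $\ge 1-\nu/2$, Chernoff over the $\famsize$ draws, and union-bound over the $\le\card{\rhfspace}^{O(\beta\out)}$ pairs. Your injection-bound argument and the outer probabilistic-method wrapper are identical to the paper's.

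The only substantive difference is how you propose to concentrate the collision count. The paper conditions not on $h|_B$ but on the \emph{membership sets} $A'=\inj[\leq\samp]{A}{h}$ and $B'=\inj[\leq\samp]{B}{h}$: under this conditioning each $h(x)$ for $x\in A'$ is uniform on $[\samp]$, and $\Pr[Y_x=1\mid h(y),\,y\neq x]\le\card{B'}/\samp\le\sqrt[3]{2}\,\beta$, so the martingale Chernoff (\cref{lem:chernoff}) applies in one shot. This sidesteps the one loose end in your route~(a): after your ``symmetric second conditioning on $h|_{B\setminus A}$'', collisions \emph{within} $A\cap B$ are still correlated and need a further argument (which you do not spell out). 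Your route~(b) via Janson on the ordered pair count is also viable, but note that the lower bound you quote, $\Exp[T]\ge\alpha^2\samp$, is only $\Theta(\alpha\beta^{-2}\log(1/\nu))$ and need not be $\Omega(\log(1/\nu))$ when $\alpha\ll\beta^2$; you would instead exploit the \emph{slack} between the target $2\beta\samp\card{A}/\out$ and the true mean, which is what the paper's martingale bound does directly. Either of your routes can be made to work, but the paper's conditioning-on-membership trick is the cleanest way to avoid both complications.
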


To prove \cref{lem:representative_hash_functions}, we first prove the following claim. We only consider sets $A,B\subseteq \rhfspace$ satisfying $|A|,|B|\in [0,\beta\out]$. A hash function is \emph{$(A,B)$-good} if it satisfies the requirement of the lemma for a given pair $(A,B)$.
We bound the probability that a random function is $(A,B)$-good, for a fixed pair $(A,B)$. 
\begin{claim}
Let $h\in [\out]^{\rhfSpace}$ be chosen uniformly at random. Then $\Pr\event*{h \text{ is }(A,B)\text{-good}}\ge 1-\nu/2$.
\end{claim}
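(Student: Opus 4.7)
The plan is to express the $(A,B)$-good property as a conjunction of two inequalities (from whichever of the two cases $\card{A} \geq \alpha\out$ or $\card{A} < \alpha\out$ applies to the particular $A$), bound each individual failure probability by $\nu/4$, and union-bound for the overall $\nu/2$. The choice $\samp \in \Theta(\beta^{-2}\alpha^{-1}\log(1/\nu))$ and the lower bound on $\out$ are precisely calibrated so that the Chernoff estimates below hit this $\nu/4$ target. For the $\card*{\inj[\leq\samp]{A}{h}}$ inequality, the analysis is standard: because the values $h(x)$ are independent over distinct $x$ under the uniform law on $[\out]^\rhfspace$, the random variable $X := \card*{\inj[\leq\samp]{A}{h}} = \sum_{x\in A}\mathbf{1}[h(x)\leq\samp]$ is a Binomial with mean $\mu = \samp\card{A}/\out$. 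When $\card{A} \geq \alpha\out$, $\mu \geq \samp\alpha$, and the two-sided multiplicative Chernoff bound $\Pr[|X-\mu|>\beta\mu] \leq 2\exp(-\beta^2\mu/3)$ is at most $\nu/4$ by the lower bound on $\samp\alpha$. When $\card{A} < \alpha\out$, only the upper tail $X \leq \samp\alpha(1+\beta)$ is required, and I obtain it by coupling $X$ with a Binomial of mean exactly $\samp\alpha$ and applying Chernoff's upper tail.

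The $\card*{\collide[\leq\samp]{A}{h}{B}}$ inequality is more delicate because $\collide$ is not a sum of independent indicators---altering $h(y)$ for a single $y \in B$ can flip the collision status of many $x \in A$ simultaneously. My plan is to decompose the randomness by conditioning on the restriction $h|_B$. Once $h|_B$ is fixed, the ``target'' set $T := h(B) \cap [\samp]$ is determined, the contribution from $A \cap B$ to $\card*{\collide[\leq\samp]{A}{h}{B}}$ becomes a deterministic function of $h|_B$ (which I handle separately via a Binomial-type bound on $\card{\{x\in A\cap B:h(x)\leq\samp\}}$ and a union bound over its potential collision partners), and for each $x \in A \setminus B$ the event $x \in \collide[\leq\samp]{A}{h}{B}$ reduces to $h(x) \in T$, so these events are conditionally iid Bernoulli$(\card{T}/\out)$. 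The concentration then proceeds in two nested layers: (i) an auxiliary Chernoff shows that $\card{T}$, dominated by the Binomial $\sum_{y\in B}\mathbf{1}[h(y)\leq\samp]$ of mean $\samp\card{B}/\out \leq \samp\beta$, does not exceed twice this mean except with probability $\nu/8$; (ii) conditional on this, a Chernoff bound on the Binomial$(\card{A\setminus B},\card{T}/\out)$ pushes $\card*{\collide[\leq\samp]{A}{h}{B}}$ down to its target $2\samp\card{A}\beta/\out$ with extra failure probability $\nu/8$.

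The main obstacle is precisely the concentration of $\card*{\collide[\leq\samp]{A}{h}{B}}$: the dependence between collision events is too strong for a direct Chernoff, and a naive union bound over pair-collision indicators is too loose. The conditioning-on-$h|_B$ trick breaks the dependence into two binomial layers---``how many bins in $[\samp]$ are hit by $B$'' and ``how many elements of $A \setminus B$ land in those bins''---each of which can be handled by Chernoff. The delicate part of the bookkeeping is that both layers plus the $A\cap B$ term must collectively lose at most $\nu/4$ probability; ensuring the means involved in each Chernoff are large enough for the desired multiplicative deviation to hold is exactly what the lower bound $\out \geq \max(45\alpha^{-1},\, 3\alpha^{-1}\beta^{-2})\ln(12/\nu)$ in the lemma statement is engineered to guarantee.
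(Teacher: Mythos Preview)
Your overall structure---Chernoff for $\card*{\inj[\leq\samp]{A}{h}}$, then a conditioning argument for $\card*{\collide[\leq\samp]{A}{h}{B}}$---matches the paper's, and your $X$-analysis is essentially identical. The gap is in how you handle the $A\cap B$ contribution to the collision count. Conditioning on the full restriction $h|_B$ does make the $A\setminus B$ contribution a sum of conditionally i.i.d.\ Bernoulli$(\card{T}/\out)$ indicators, and that part is clean (indeed simpler than the paper, no martingale needed). But for $A\cap B$ you only offer ``a Binomial-type bound on $\card{\{x\in A\cap B:h(x)\leq\samp\}}$ and a union bound over its potential collision partners.'' The union bound over partners gives that \emph{each} low-hashing $x\in A\cap B$ collides with probability at most $\card{B}/\out\leq\beta$; that is an expectation bound on the collision count, not a tail bound. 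The indicators $\{Y_x:x\in A\cap B\}$ are positively correlated (if $x$ collides because $h(x)=h(x')$ for some other $x'\in A\cap B$, then $x'$ collides as well), so neither independence-based Chernoff nor negative association applies, and Markov on the expectation loses the $\exp(-\Omega(\cdot))$ tail you need to hit $\nu/8$. Since $A\cap B$ can be all of $A$, this case cannot be absorbed into the $A\setminus B$ analysis.

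The paper avoids this by conditioning differently: it fixes only \emph{which} elements of $A$ and $B$ land in $[\samp]$ (the sets $A',B'$), not their actual hash values; under this conditioning each $h(x)$ for $x\in A'$ is still uniform on $[\samp]$ and independent of the others. Then for every $x\in A'$---regardless of whether $x\in B$---the collision event has conditional probability at most $\card{B'}/\samp\leq\sqrt[3]{2}\,\beta$ given all other hash values, and the martingale Chernoff (\cref{lem:chernoff}) handles $\sum_{x\in A'}Y_x$ in one shot. Your $h|_B$-conditioning trades this uniformity for a simpler $A\setminus B$ step at the cost of an $A\cap B$ step that, as written, does not close; you would need either the paper's conditioning or an explicit additional concentration argument (bounded differences or similar) for the $A\cap B$ collision count as a function of $\{h(y):y\in B\}$.
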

\begin{proof}
We prove the result when $\card{A} \geq \alpha\out$, the case when $\card{A} < \alpha\out$ being analogous.

For $x \in A$, let $X_x,Y_x$ be  indicator r.v.'s such that $X_x=1$  iff $h(x) \in [\samp]$, and $Y_x=1$ iff $h(x) \in [\samp]$ and $\exists x' \in B\setminus \set{x}$, $h(x)=h(x')$.
Let $X = \sum_{x \in A} X_x$ and $Y = \sum_{x \in A} Y_x$. Note that $X = \card*{\inj[\leq \samp]{A}{h}}$ and $Y=\card*{\collide[\leq \samp]{A}{h}{B}}$.
We have:

\[\Exp\event*{X_x} = \samp/\out \qquad\qquad\text{and}\qquad\qquad \Exp\event*{Y_x} = (\samp/\out)\cdot \parens*{1-\parens*{1-1/\out}^{\card{B}-\tau} } ~,\]
where $\tau=1$ if $x \in B$ and $\tau=0$ otherwise. Thus, letting $\mu=\Exp\event*{X}$, we have $\mu=\samp\card{A}/\out\ge \alpha \samp$.
Using the inequality $1-kx\le (1-x)^k$   
(for $n\in \bbZ_+$, $x \in [0,1]$),  
we have:
\[
 \parens*{1-1/\out}^{\card{B}-\tau}\ge  1-(\card{B}-\tau)/\out \ge 1-\beta\ ,
\]
which implies that $\Exp\event*{Y_x}\le \beta\Exp\event*{X_x}$, and hence $\Exp\event*{Y}\le \beta\mu$.

Note that $h$ being $(A,B)$-good corresponds to $\card*{X-\Exp\event*{X}}\le \beta \mu$ and $\card{Y}\le 2\beta \mu$ -- where $2\beta \mu \geq 2\Exp\event*{Y}$. We argue that both hold with sufficient probability through concentration inequalities.

For the first inequality, the independence of the $X_x$ r.v.'s implies that we can apply Chernoff (\cref{lem:basicchernoff}). Thus:
\[\Pr\event*{ \card*{X-\Exp\event*{X}}> \beta \mu} \leq 2 \exp(-\beta^2\mu/3)~.\]

It suffices that $\mu \geq 3\beta^{-2}\ln(8/\nu)$ for $\card*{X-\Exp\event*{X}} \geq \beta \mu$ to hold w.p.\ $\geq 1-\nu/4$. As we assumed $\card{A}\geq \alpha \out$, we have $\mu = \samp \card{A}/ \out \geq \alpha \samp$. Therefore, $\samp \geq 3\beta^{-2}\alpha^{-1}\ln(8/\nu)$, i.e., $\samp \in \Theta(\beta^{-2}\alpha^{-1}\log(1/\nu))$, suffices for the first inequality to hold w.p.\ $1-\nu/4$.

For the second inequality, as $\card{B}\leq\beta\out$, notice that $h(B)$ covers less than a $\beta$ fraction of the hash space below $\samp$ in expectation.

By the same analysis as the one we did just above with $A$ (applying \cref{lem:basicchernoff}) we obtain that at most $\sqrt[3]{2} \beta \samp$ elements of $B$ hash to a value $\leq \samp$, i.e., $\card*{\inj[\leq \samp]{B}{h}} \leq \sqrt[3]{2}\beta \samp$, w.p.\ $1-\nu/12$, when $\samp \geq 45 \beta^{-1}\ln(12/\nu)$ (we use $(\sqrt[3]{2}-1)^{-2}\leq 15$). We condition on this event, as well as on at most $\sqrt[3]{2} \card{A} \samp/\out$ elements of $A$ hashing to $\leq \samp$, which holds w.p.\ $1-\nu/12$, when $\samp \geq 45 \alpha^{-1}\ln(12/\nu)$. More precisely, we fix the subsets $B'= \inj[\leq \samp]{B}{h} \subseteq B$ and $A' = \inj[\leq \samp]{A}{h} \subseteq A$ of respective size at most $\sqrt[3]{2}\beta \samp$ and $\sqrt[3]{2}\card{A} \samp/\out$ containing the elements hashing to values less than $\samp$ (and excluding those hashing higher).

We now analyze $Y=\sum_{x \in A} Y_x$ under this conditioning. For each $x\in A\setminus A'$, we now have that $Y_x=0$ w.p.\ $1$, so $Y=\sum_{x \in A'} Y_x$. Under the conditioning, the value $h(x)$ of an element $x \in A'$ is now picked uniformly at random in $[\samp]$ independently of other elements' values. Hence, for an element $x\in A'$, $Y_x=1$ w.p.\ $\leq \sqrt[3]{2} \beta$ even conditioned on arbitrary random choices for the other $h(y), y\in (B'\cup A') \setminus \set{x}$. Therefore, $\Exp\range*{Y}\leq \sqrt[3]{2}\beta\card{A'}\leq \sqrt[3]{4}\beta\card{A}\samp/\out$, and by the martingale inequality (\cref{lem:chernoff}), \[\Pr\event*{Y>2\beta\card{A}\samp/\out} \leq \exp(-(\sqrt[3]{2}-1)^2\sqrt[3]{4}\beta\card{A}\samp/(3\out)) \leq 1-\nu/12~,\]
where the last step follows from assuming $\samp \geq 45 \alpha^{-1}\beta^{-1}\ln(12/\nu)$. Taking into account the previous conditioning of probability $1-\nu/6$ (setting $A'$ and $A'$), $\Pr\event*{Y>2\beta\card{A}\samp/\out}$ holds w.p.\ $1-\nu/4$.

Putting everything together, the two inequalities hold simultaneously, i.e., $h$ is $(A,B)$-good, w.p.\ $\geq 1-\nu/2$. Note that throughout the analysis, $\samp \leq \out$ was assumed, which constrains $\out$ in terms of $\alpha$, $\beta$, and $\nu$, as in the statement of the lemma.
\end{proof}

\begin{proof}[Proof of \cref{lem:representative_hash_functions}]
Let  $h_1,\ldots,h_\famsize\in [\out]^{\rhfspace}$ be $\famsize$ functions, chosen independently and uniformly at random. For fixed sets $A,B$, let $X_i=1$ if $h_i$ is not $(A,B)$-good,  otherwise $X_i=0$; by the claim above, $\Pr\event*{X_i}\le \nu/2$. By Chernoff (\cref{lem:basicchernoff}), the probability that more than $\nu \famsize$ of them fail to be $(A,B)$-good is
$ \Pr\event*{\sum_{i\in[\famsize]} X_i \geq \nu \famsize} \leq e^{-\nu \famsize/6 }$. There are at most $|\rhfspace|^{\beta \out+1}$ choices for each of the subsets $A$ and $B$, so at most $|\rhfspace|^{2\beta \out+2}$ choices for the pair $(A,B)$. By the union bound, the probability that there are $\nu \famsize$ functions that are not $(A,B)$-good for some pair $(A,B)$ is at most $|\rhfspace|^{4\beta \out}e^{-\nu \famsize/6}<1$, assuming $\famsize>(24\beta \out/\nu)\log |\rhfspace|$. Thus, there is a family of $\famsize$ hash functions such that for every pair $(A,B)$, at least $(1-\nu)\famsize$ hash functions from the family are $(A,B)$-good. 
\end{proof}

\subsection{Estimation and Sampling of Set Intersection, Union, Difference}
\label{sec:estimate-and-sample}

Representative hash function immediately give an efficient way for two nodes $u$ and $v$ possessing two sets $S_u$ and $S_v$ to estimate the size of the intersection $\card{S_u \cap S_v}$ with an accuracy $\eps\card{S_u \cap S_v}$, as long as $S_u \cap S_v$ is a large enough fraction of $S_u \cup S_v$. At a high level, the idea is quite natural: estimate the size of the intersection $\card{S_u \cap S_v}$ through the size of the intersection $\card{h(S_u) \cap h(S_v)}$, which itself is approximated by the subset $\card{h(S_u) \cap h(S_v) \cap [\samp]}$. This, of course, omits a few details, and we give the formal statement in \cref{lem:estimatesimilarity} and its proof. The same idea can be used to sample elements in the intersection rather than estimating its size, by having nodes pick as elements the preimages of a random element in $h(S_u) \cap h(S_v) \cap [\samp]$ (see \cref{lem:jointsample}).

\begin{algorithm}[H]\caption{{\estimatesimilarity}$(\eps)$, for edge $e=uv$, with sets $S_u,S_v$}
\label{alg:similarity}
  \begin{algorithmic}[1]
  \STATE \algorithmicif{ $\min(\card{S_u},\card{S_v})=0$} \algorithmicthen{ \algorithmicreturn{  $0$}}
  \STATE Let $k = \ceil{96\eps^{-3}\ln(12/\nu) / \max(\card{S_u},\card{S_v})}$.
  \STATE \algorithmicif{ $k>1$} \algorithmicthen{ replace $S_u$ and $S_v$ by their scaled up versions $S_u\times[k]$ and $S_v\times[k]$.}  \label{step:enlarge}
  \STATE Let $\HFset$ be a representative family of hash functions with parameters $\out=8\max(\card{S_u},\card{S_v})/\eps$, $\beta = \eps/4, \alpha = \eps^2/8$. Let $\famsize = \card{\HFset}$ be its size, and let us index its elements: $\HFset = (h_i)_{i\in[\famsize]}$. \label{step:params}
  \STATE $u$ and $v$ jointly pick a random number $i_{e} \in [\famsize]$, use $h=h_{i_e}\in \HFset^{\out}$ as shared hash function.
  \STATE $u$ sends $h(T_u)$ to $v$, $v$ sends $h(T_v)$ to $u$, where $T_u=\hit{S_u}{h}{S_u}$ and  $T_v=\hit{S_v}{h}{S_v}$.
  \RETURN $\card{h(T_u) \cap h(T_v)}\out/(\samp\cdot k)$
\end{algorithmic}
\end{algorithm}

\begin{lemma}
\label{lem:estimatesimilarity}
    $\estimatesimilarity(\eps)$ outputs a value within $\eps \max(\card{S_u},\card{S_v})$ of $\card{S_u\cap S_v}$, w.p.\ $1-\nu$. It uses $O(1)$ messages of $O\parens*{
    \eps^{-4}\log(1/\nu) + \log\log\card{\rhfspace} + \log \max(\card{S_u},\card{S_v})
    }$ bits.
\end{lemma}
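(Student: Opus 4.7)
The plan is to verify that the parameters set in \cref{alg:similarity} satisfy the hypotheses of \cref{lem:representative_hash_functions}, then argue correctness under the high-probability event that the sampled $h$ is simultaneously good for a constant number of set pairs, and finally check the bandwidth bound.

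\textbf{Setup.} The trivial case $\min(\card{S_u},\card{S_v})=0$ returns the correct value. Otherwise the scaling in line~\ref{step:enlarge} ensures $k\cdot \max(\card{S_u},\card{S_v}) = \Omega(\eps^{-3}\log(1/\nu))$, so that for the (possibly scaled) sets one has $\out = 8\max(\card{S_u},\card{S_v})/\eps = \Omega(\eps^{-4}\log(1/\nu))$. With $\alpha=\eps^2/8$ and $\beta=\eps/4$, this meets the condition $\out\ge \max(45\alpha^{-1},3\alpha^{-1}\beta^{-2})\ln(12/\nu)$, and we also have $\card{S_u},\card{S_v}\le \max \le \beta\out$. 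Hence \cref{lem:representative_hash_functions} supplies a family of size $F=\Theta(\beta\out\nu^{-1}\log\card{\rhfspace})$ with threshold $\samp=\Theta(\eps^{-4}\log(1/\nu))$.

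\textbf{Correctness.} Write $C=S_u\cap S_v$. By \cref{lem:representative_hash_functions} applied to the four pairs $(S_u,S_u),(S_v,S_v),(S_u,S_v),(S_v,S_u)$ and a union bound (absorbing a constant factor into $\nu$), the sampled $h$ is good for all four pairs with probability $\ge 1-\nu$. Under that event I decompose $\card{h(T_u)\cap h(T_v)}$: each value $y\in[\samp]$ in the intersection has a unique preimage $x_u\in S_u$ and a unique $x_v\in S_v$ hashing to $y$; these are either (i) a common element of $C$---a \emph{good} witness---or (ii) distinct, in which case the no-internal-collision property of $T_u$ forces $x_v\notin S_u$, and symmetrically $x_u\notin S_v$---a \emph{spurious} witness. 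The good witnesses number $\card{C\cap T_u\cap T_v}\ge \card{\inj{C}{h}}-\card{\collide{S_u}{h}{S_u}}-\card{\collide{S_v}{h}{S_v}}$, and the spurious witnesses are at most $\card{\collide{S_u}{h}{S_v}}$. Plugging in the lemma's bounds, each collision term is $O(\samp\beta\max/\out)$, while $\card{\inj{C}{h}}\cdot\out/\samp$ is within $O(\beta\card{C}+\alpha\out)$ of $\card{C}$, handling the regimes $\card{C}\ge\alpha\out$ and $\card{C}<\alpha\out$ separately since the lemma only upper-bounds $\card{\inj{C}{h}}$ in the latter. Multiplying the output by $\out/(\samp\cdot k)$ and undoing the scale-up, the returned value differs from the original $\card{C}$ by $O((\alpha+\beta)\max)=O(\eps\max(\card{S_u},\card{S_v}))$; the choices $\alpha=\eps^2/8$, $\beta=\eps/4$ are precisely what collapse the hidden constant so the final error is at most $\eps\max(\card{S_u},\card{S_v})$.

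\textbf{Bandwidth and main obstacle.} The hash index $i_e$ takes $\log F = O(\log\out+\log(1/\nu)+\log\log\card{\rhfspace})$ bits, while $h(T_u)$ and $h(T_v)$ can each be sent as bitmaps of $[\samp]$ using $O(\samp) = O(\eps^{-4}\log(1/\nu))$ bits. Using $\out=O(\eps^{-4}\log(1/\nu))$ when $k>1$ (because then $k\max=O(\eps^{-3}\log(1/\nu))$), and $\out=O(\max/\eps)$ otherwise, yields $O(\eps^{-4}\log(1/\nu)+\log\log\card{\rhfspace}+\log\max(\card{S_u},\card{S_v}))$ bits across $O(1)$ messages. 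The hardest part will be tightly juggling the several distinct error contributions---two same-side collision terms, one cross-side collision term, and the hash-miss error for $\card{\inj{C}{h}}$---so that their sum actually fits inside $\eps\max$ with leading constant $1$. A related subtlety is the small-$\card{C}$ regime $\card{C}<\alpha\out$, where \cref{lem:representative_hash_functions} provides no lower bound on $\card{\inj{C}{h}}$: there one must argue directly that both $\card{C}$ and the algorithm's output lie in $[0,O(\eps\max)]$ and therefore agree within $\eps\max$ without invoking a two-sided concentration bound.
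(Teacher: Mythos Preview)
Your proposal is correct and takes essentially the paper's approach---verify the parameters for \cref{lem:representative_hash_functions}, bound $\card{h(T_u)\cap h(T_v)}$ above and below via that lemma, handle the bandwidth via $\samp$ and $\log F$---with only a cosmetic difference: the paper obtains the lower bound from the single pair $(S_u\cap S_v,\,S_u\cup S_v)$ using the inclusion $\hit{C}{h}{(S_u\cup S_v)}\subseteq T_u\cap T_v$, rather than your good/spurious-witness split. Two small bookkeeping notes: your union bound also needs a pair with $A=C$ to control $\card{\inj{C}{h}}$, and your split yields a lower-bound error of order $5\beta\max$ (from $\beta\card{C}+2\beta\card{S_u}+2\beta\card{S_v}$) rather than the paper's $3\beta\card{C}$, so $\beta=\eps/4$ is marginally too large---take $\beta\le\eps/5$.
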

\begin{proof}
    Step~\ref*{step:enlarge} ensures that the parameters we set in step~\ref*{step:params} satisfy the hypotheses of \cref{lem:representative_hash_functions}, i.e., $\out \in \Omega(\eps^{-4}\ln(1/\nu))$, by making the sets artificially bigger if needed. This is done by replacing the original sets by their Cartesian products with a simple set of size $k$: $S'_u=S_u \times [k]$ and $S'_v=S_v \times [k]$, living in the bigger universe $\rhfspace \times [k]$. Clearly, $\card{S'_u \cap S'_v} = k\cdot\card{S_u \cap S_v}$ and $\max(\card{S'_u},\card{S'_v}) = k\cdot \max(\card{S_u},\card{S_v})$, so if $s$ is an estimate for $\card{S'_u \cap S'_v}$ accurate up to $\eps \max(\card{S'_u},\card{S'_v})$ then $s/k$ is an estimate for $\card{S_u \cap S_v}$ accurate up to $\eps \max(\card{S_u},\card{S_v})$. As $k$ is at most $O(\eps^{-3}\ln(1/\nu))$, messages describing an element from the representative hash function family remain of order $O(\log(1/\eps) + \log(1/\nu) + \log\log\card{\rhfspace} + \log(\max(\card{S_u},\card{S_v}))$ even as we scale up the sets. From now on, we ignore $k$, i.e., assume its value to be $1$.

    With $\out = 8 \max(\card{S_u},\card{S_v})/\eps$, $\beta = \eps/4$, $\alpha = \eps^2/8$, we have that $\card{S_u\cup S_v} \leq 2\max(\card{S_u},\card{S_v}) \leq \beta \out$. 
    
    We first show that the estimate we get is a good approximate lower bound on the intersection, and then show it is a good approximate upper bound.
    
    Suppose $\card{S_u \cap S_v} \geq \alpha \out = \eps \max(\card{S_u},\card{S_v})$.  Then, by \cref{lem:representative_hash_functions}, $\card{\hit{(S_u \cap S_v)}{h}{(S_u \cup S_v)}} \geq (1-\beta)\samp\card{S_u \cap S_v}/\out$ w.p.\ $1-\nu$. Since (by \cref{eq:rhf-inclusion}):
    \begin{align}
        &\hit{(S_u \cap S_v)}{h}{(S_u \cup S_v)} \quad\subseteq\quad \hit{(S_u \cap S_v)}{h}{S_u} \quad\subseteq\quad \hit{S_u}{h}{S_u}\\
        \textrm{and}\qquad 
        &\hit{(S_u \cap S_v)}{h}{(S_u \cup S_v)} \quad\subseteq\quad \hit{(S_u \cap S_v)}{h}{S_v} \quad\subseteq\quad \hit{S_v}{h}{S_v}\\
    \intertext{it holds that}
    & \hit{(S_u \cap S_v)}{h}{(S_u \cup S_v)} \quad \subseteq\quad  (\hit{S_u}{h}{S_u}) \cap (\hit{S_v}{h}{S_v}) \quad = \quad T_u \cap T_v\ ,\nonumber
    \end{align}
    and $\card{h(T_u)\cap h(T_v)} \geq \card{T_u \cap T_v}$, so $\card{h(T_u)\cap h(T_v)} \geq (1-\beta)\samp\card{S_u \cap S_v}/\out$. Hence, when $\card{S_u \cap S_v}\geq \alpha\out$, the estimate we give is at most $\beta \card{S_u \cap S_v} \leq \eps\max(\card{S_u},\card{S_v})/4$ lower than the true value, w.p.\ $1-\nu$. When $\card{S_u \cap S_v}\leq \alpha\out$, the estimate cannot be lower than $\card{S_u \cap S_v} - \eps\max(\card{S_u},\card{S_v}) \leq 0$, since the estimate is always positive, so is within $\eps\max(\card{S_u},\card{S_v})$ of the true value.
    
    For the other direction, notice that there are at most as many elements in the intersection of $h(T_u)$ and $h(T_v)$ as there are elements in $S_u \cap S_v$ plus elements in $S_u \cup S_v$ in collision with another element of $S_u \cup S_v$:
    \[\card*{h(T_u) \cap h(T_v)} \quad\leq\quad \card*{\inj{(S_u \cap S_v)}{h}} \  + \  \card*{\collide{(S_u\cup S_v)}{h}{(S_u\cup S_v)}} \]
    
    When $\card{S_u\cap S_v}\geq \alpha\out$, this gives that $\card{h(T_u) \cap h(T_v)} \leq (1+\beta)\card{S_u\cap S_v}\samp/\out + \beta \card{S_u\cup S_v}\samp/\out$, hence that the estimate overestimates the result by at most $2\beta\card{S_u\cup S_v} \leq 4\beta \max(\card{S_u},\card{S_v}) \leq \eps \max(\card{S_u},\card{S_v})$. The same holds when $\card{S_u\cap S_v}\leq \alpha\out$.
    
    The communication cost is obtained by adding the cost $\samp$ of sending $h(T_u)$ and $h(T_v)$ (as they are subsets of $[\samp]$), and the cost $\log \famsize$ of sending the index of a representative hash function in the family.
\end{proof}

As evoked before, almost the same algorithm can be used by the nodes $u$ and $v$ to jointly sample elements from the intersection of their sets by selecting a random element $y$ in $h(T_u)\cap h(T_v)$ and (respectively) outputting the single element in $S_u \cap h^{-1}(y)$ and $S_v \cap h^{-1}(y)$. Since when $S_u \cap S_v$ is large -- at least $\alpha \out$ -- a large fraction of the elements of $h(T_u)\cap h(T_v)$ are images of elements of $S_u \cap S_v$, $u$ and $v$ are likely to sample elements from $S_u \cap S_v$ this way.

\begin{algorithm}[H]\caption{{\jointsample}$(\eps)$, for edge $e=uv$, with sets $S_u,S_v$}
\label{alg:sample}
  \begin{algorithmic}[1]
  \STATE \algorithmicif{ $\min(\card{S_u},\card{S_v})=0$} \algorithmicthen{ \algorithmicreturn{  $0$}}
  \STATE Let $k = \ceil{96\eps^{-3}\ln(12/\nu) / \max(\card{S_u},\card{S_v})}$.
  \STATE \algorithmicif{ $k>1$} \algorithmicthen{ replace $S_u$ and $S_v$ by their scaled up versions $S_u\times[k]$ and $S_v\times[k]$.}
  \STATE Let $\HFset$ be a representative family of hash functions with parameters $\out=8\max(\card{S_u},\card{S_v})/\eps$, $\beta = \eps/4, \alpha = \eps^2/8$. Let $\famsize = \card{\HFset}$ be its size, and let us index its elements: $\HFset = (h_i)_{i\in[\famsize]}$.
  \STATE $u$ and $v$ jointly pick a random number $i_{e} \in [\famsize]$, use $h=h_{i_e}\in \HFset^{\out}$ as shared hash function.
  \STATE Let $J=\card{h(T_u) \cap h(T_v)}$. \algorithmicif{ $J = 0$ } \algorithmicthen{ \algorithmicreturn{  nothing}}.
  \STATE $u$ and $v$ jointly pick a random number $j_{e} \in [J]$. \label{step:random-image}
  \RETURN $h^{-1}(j_e) \cap T_u$ on $u$'s side, $h^{-1}(j_e) \cap T_v$ on $v$'s side.
\end{algorithmic}
\end{algorithm}

\begin{lemma}
\label{lem:jointsample}
    When $\card{S_u\cap S_v}\geq \eps\max(\card{S_u},\card{S_v})$, two nodes $u$ and $v$ running $\jointsample(\eps)$ output the same element at the end of the algorithm, w.p.\ $1-5\eps/4 -\nu$.
\end{lemma}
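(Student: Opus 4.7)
The plan is to isolate the event that the two outputs coincide and reduce its probability to a counting problem over the hash values in $h(T_u) \cap h(T_v)$. Write $C = S_u \cap S_v$, $A = S_u \setminus S_v$, $B = S_v \setminus S_u$. Because $T_u \subseteq \hit{S_u}{h}{S_u}$ (and likewise for $T_v$), \cref{eq:rhf-hit} tells us that every $j \in h(T_u) \cap h(T_v)$ has a unique preimage $x_j \in T_u$ and a unique preimage $y_j \in T_v$, so $u$ and $v$ output the same element iff $x_j = y_j$. A short case analysis then shows that agreement forces $x_j = y_j \in C$ and moreover $x_j \in \hit{C}{h}{S_u \cup S_v}$: the common element must avoid hash collisions with anything else in $S_u \cup S_v$, because a collision in $S_u$ would contradict $x_j \in T_u$ and one in $S_v$ would contradict $y_j \in T_v$. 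Conversely, if $x_j \neq y_j$ then $y_j \notin S_u$ (else $y_j$ would witness a collision in $S_u$ with $x_j \in T_u$), so $y_j \in B$, and symmetrically $x_j \in A$.

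Conditional on the hash function $h$, the failure probability for a uniformly random $j_e$ is thus $1 - G/J$, where $G := |\hit{C}{h}{S_u \cup S_v}|$ (equal to $|h(\hit{C}{h}{S_u \cup S_v})|$ by \cref{eq:rhf-hit} since $C \subseteq S_u \cup S_v$) and $J := |h(T_u) \cap h(T_v)|$. The next step is to bound these two quantities via \cref{lem:representative_hash_functions}. The algorithm's parameters satisfy $\alpha \out = \eps\max(|S_u|,|S_v|)$ and $\beta\out = 2\max(|S_u|,|S_v|)$, so the hypothesis $|C| \geq \eps\max(|S_u|,|S_v|)$ places the pair $(C, S_u \cup S_v)$ in the ``large'' regime of the lemma, and $|C|, |S_u \cup S_v| \leq \beta\out$ as required. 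One application of the lemma to this pair gives, with failure probability $\leq \nu/2$, that $G \geq (1 - 3\beta)\samp |C|/\out$; in particular $G > 0$, which also rules out the $J = 0$ branch that returns nothing.

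To bound the number of bad hashes $J - G$, observe from the characterization above that each bad $j$ arises from a hash collision between an element of $A$ and an element of $B$, so $(J - G) \leq |\collide{(S_u \setminus S_v)}{h}{(S_v \setminus S_u)}|$, which is at most $|\collide{S_u}{h}{S_v}|$ by \cref{eq:rhf-inclusion}. A second application of the lemma to $(S_u, S_v)$, with failure probability $\leq \nu/2$, bounds this by $2\beta\samp\cdot\max(|S_u|, \alpha\out)/\out = O(\eps^2)\samp$. Dividing, the failure probability is at most $(J-G)/G = O(\eps)$; a union bound over the two invocations of \cref{lem:representative_hash_functions} contributes the remaining $\nu$. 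The scaling step by $k$ at the start of the algorithm only enlarges the universe while preserving the ratios $|C|/\max(|S_u|,|S_v|)$ and $|C|/\out$, so the probability analysis applies to the scaled sets unchanged.

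The main obstacle is pinning down the stated constant: the clean route through $|\collide{S_u}{h}{S_v}|$ gives a bound of the form $(J-G)/G \leq \eps/(2(1-3\beta))\cdot(|S_u|/|C|)$, and squeezing this down to $5\eps/4$ requires care. A tighter accounting may apply the lemma directly to $(S_u \setminus S_v, S_v \setminus S_u)$ (splitting on whether each difference exceeds $\alpha\out$) so that the bad count scales with $|A|\cdot|B|/\out$ instead of $\max\cdot|S_u|/\out$. The residual arithmetic—combining the $(1-3\beta)$ factor in the lower bound on $G$ with the $2\beta$ factor in the upper bound on the bad count and the ratio $|S_u \cup S_v|/|C|$—is the delicate step that fixes the constant in the final failure probability $5\eps/4 + \nu$.
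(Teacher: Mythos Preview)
Your approach is the paper's: condition on the representative hash function being good for the relevant pairs (probability $\geq 1-\nu$, by the analysis already carried out for \cref{lem:estimatesimilarity}), then argue that a uniformly random $j_e \in h(T_u)\cap h(T_v)$ lands on a hash coming from $S_u\cap S_v$ with large probability. The paper's two-line argument simply quotes the two bounds derived in that earlier proof---that $h(T_u)\cap h(T_v)$ contains at least $(1-\eps/4)\cdot\samp|S_u\cap S_v|/\out$ elements of $h(S_u\cap S_v)$, and that $|h(T_u)\cap h(T_v)|\leq(1+\eps)\cdot\samp|S_u\cap S_v|/\out$---and reads off the ratio $(1-\eps/4)/(1+\eps)\geq 1-5\eps/4$. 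Your explicit characterization that every bad hash is witnessed by a collision between $S_u\setminus S_v$ and $S_v\setminus S_u$ is more careful than what the paper writes, but is aimed at the same ratio.

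The slip is the assertion $(J-G)/G=O(\eps)$. With your own numbers, $J-G\leq 2\beta\samp\max(|S_u|,\alpha\out)/\out=\Theta(\eps^2)\samp$, while at the boundary $|C|=\eps\max(|S_u|,|S_v|)$ you also get $G\geq(1-3\beta)\samp|C|/\out=\Theta(\eps^2)\samp$; the ratio is $\Theta(1)$, not $O(\eps)$. You in fact compute exactly this in your last paragraph---$(J-G)/G\leq \tfrac{\eps}{2(1-3\beta)}\cdot(|S_u|/|C|)$ with $|S_u|/|C|$ potentially as large as $1/\eps$---but leave it unresolved. Bounding the bad count $J-G$ through $\collide{A}{h}{B}$ (or any subset of $\collide{(S_u\cup S_v)}{h}{(S_u\cup S_v)}$) cannot by itself recover the stated constant at the boundary; the paper instead leans directly on the upper bound for $J$ coming from the over-estimation half of \cref{lem:estimatesimilarity}, rather than on $J-G$.
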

\begin{proof}
The result follows naturally from $h(T_u)\cap h(T_v)$ containing at least $(1-\eps/4)\card{S_u\cap S_v}$ elements of $h(S_u\cap S_v)$, and $h(T_u)\cap h(T_v)$ being of size at most $(1+\eps)\card{S_u\cap S_v}$, w.p.\ $1-\nu$, as argued in the proof of \cref{lem:estimatesimilarity}.
\end{proof}

The nodes can even sample multiple elements from the intersection of their sets by picking multiple indices instead of a single one in step~\ref*{step:random-image}. This takes the same number of \CONGEST rounds. The only caveat is that some sampled elements might be duplicates when $k = \ceil{96\eps^{-3}\ln(12/\nu) / \max(\card{S_u},\card{S_v})} >1$.

While \jointsample shows some of the ideas we will use later to try multiple colors in a single round, the fact that it only involves two parties means that the procedure may have been designed in a simpler manner, invoking Newman's theorem~\cite{Newman91}. The way we use representative hash functions later to sample random colors is however very multiparty, and may not be derived from statements about public vs private randomness in 2-party communication complexity.

\subsection{Application: Sparsity}
\label{sec:sparsity}

A number of recent algorithms for distributed coloring and other problems treat nodes differently depending on a measure called \emph{sparsity}. Intuitively, sparsity measures the number of missing edges in a node's neighborhood. Depending on the problem, two definitions of sparsity are in use.

\begin{definition}
\label{def:sparsity}
For any subset of the nodes $S\subseteq V$, let $m(S):= \card{E[S]}$ be the number of edges between nodes of $S$. The \emph{global sparsity} of a node $v$ is defined as:
\[\gspar_v = \frac{1}{\Delta} \parens*{\binom{\Delta}{2} - m(N(v))} = \frac{\Delta-1}{2} - \frac{1}{2 \Delta}\sum_{u\in N(v)} \card{N(u) \cap N(v)}\]
The \emph{local sparsity} of a node $v$ is defined as: 
\[\lspar_v = \frac{1}{d_v} \parens*{\binom{d_v}{2} - m(N(v))} = \frac{d_v-1}{2} - \frac{1}{2 d_v}\sum_{u\in N(v)} \card{N(u) \cap N(v)}\]
\end{definition}

The global sparsity is the definition of sparsity generally used in algorithms that solve a coloring problem in which each node can select its own color from $\Delta+1$ colors. The local sparsity is the definition generally used when nodes have only $\deg+1$ colors to choose from.

\estimatesimilarity immediately gives an efficient way to estimate both definitions of sparsity -- under an assumption for the local sparsity. We give the analysis for the global sparsity, and later explain the caveat with local sparsity.

\begin{algorithm}[H]\caption{{\estimatesparsity}$(\eps)$, for $v$ (for global sparsity)}
\label{alg:sparsity}
  \begin{algorithmic}[1]
  \FORALL{$u \in N(v)$}
  \STATE $v$ runs $\estimatesimilarity(\eps/2)$ with $u$ to get an estimate of $s_u=\card{N(u)\cap N(v)}$.
  \ENDFOR
  \STATE $v$ outputs $\frac{\Delta-1}{2} - \frac{1}{2\Delta}\sum_u s_u$ as estimate for its sparsity. \label{step:output-sparsity}
\end{algorithmic}
\end{algorithm}

\begin{lemma}
\label{lem:estimatesparsity}
    $\estimatesparsity(\eps)$ outputs an estimate of $\gspar_v$ that is $\eps \Delta$-close to the true value, w.p.\ $1-(\nu \Delta)^{\eps \Delta/2}$.
\end{lemma}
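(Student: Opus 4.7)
The plan is to analyze \estimatesparsity by combining the per-edge guarantee of \cref{lem:estimatesimilarity} with a robustness argument: instead of insisting that all $d_v$ calls succeed, I will show that the sparsity estimate degrades gracefully when only a few of them fail, and then use independence of the calls across the edges incident to $v$ to derive a strong tail bound.

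For each $u\in N(v)$, applying \cref{lem:estimatesimilarity} with parameter $\eps/2$ to the sets $N(u),N(v)$ gives a value $s_u$ satisfying $\card*{s_u - \card*{N(u)\cap N(v)}} \leq (\eps/2)\max(d_u,d_v) \leq (\eps/2)\Delta$ with probability at least $1-\nu$. Since each edge samples its own index $i_e$ of the representative hash family, the $d_v$ failure events are mutually independent. By comparing the algorithm's output with the definition of $\gspar_v$, the error is exactly $\frac{1}{2\Delta}\card*{\sum_{u\in N(v)}(s_u - \card*{N(u)\cap N(v)})}$, which is bounded by $\frac{1}{2\Delta}\sum_{u\in N(v)} \card*{s_u-\card*{N(u)\cap N(v)}}$.

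The key observation is that each individual term $\card*{s_u-\card*{N(u)\cap N(v)}}$ is at most $\Delta$ unconditionally (both quantities lie in $[0,\Delta]$), while a \emph{successful} call contributes at most $(\eps/2)\Delta$. Hence if at most $k := \lceil \eps\Delta/2\rceil$ of the calls fail, the total estimation error is at most $\frac{1}{2\Delta}\bigl(k\Delta + d_v\cdot (\eps/2)\Delta\bigr) \leq \frac{k}{2}+\frac{\eps\Delta}{4}\leq \eps\Delta$, as required (absorbing an additive constant from the ceiling). By independence of the failure events and the crude inequality $\binom{n}{k}\leq n^k$, the probability that more than $k$ calls fail is at most $\binom{d_v}{k}\nu^k \leq \Delta^k\nu^k \leq (\nu\Delta)^{\eps\Delta/2}$, where the last step uses the nontrivial regime $\nu\Delta\leq 1$ (otherwise the claimed bound is vacuous).

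The main point requiring care is the balance between error and failure count. A naive union bound demanding that every one of the $d_v$ estimates succeed yields only probability $1-d_v\nu$, which is far weaker than the stated bound. Exploiting that $\gspar_v$ is an average over $\Delta$ quantities—so a handful of bad per-term estimates shift the average by a negligible amount—is what allows trading ``all good'' for ``most good'' and obtaining the much stronger $1-(\nu\Delta)^{\eps\Delta/2}$ probability, at the cost of only a constant factor in the per-term accuracy parameter.
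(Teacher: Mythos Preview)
Your proof is correct and follows essentially the same approach as the paper's: tolerate up to $\eps\Delta/2$ per-edge estimates that err by more than $(\eps/2)\Delta$, then union-bound over all size-$\eps\Delta/2$ subsets using the independence of the per-edge randomness. One small caveat (which the paper's own proof also glosses over): the output $s_u$ of \estimatesimilarity is not a~priori in $[0,\Delta]$, so your unconditional bound $\card*{s_u-\card*{N(u)\cap N(v)}}\le\Delta$ needs the trivial fix of clipping $s_u$ to $[0,\Delta]$, which can only decrease the error since the true value lies in that range.
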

\begin{proof}
For the estimate to be off by $\eps \Delta$ or more, at least $\eps \Delta/2$ neighbors of $v$ must give an estimate $s_u$ that is at least $\eps \Delta/2$ off. For a subset of $\eps \Delta/2$ neighbors of $v$, the probability that they all give a bad estimate is at most $\nu^{\eps \Delta/2}$. So the probability that such an all-failing subset exists is at most:
\[\binom{\Delta}{\eps \Delta/2} \nu^{\eps \Delta/2} \leq (\nu \Delta)^{\eps \Delta/2}\ .\qedhere\]
\end{proof}

Note that this means \estimatesparsity works w.h.p.\ when $\nu \in 1/\poly(n)$, as well as when $\nu \in 1/\poly(\Delta)$ and $\eps \Delta \in \Omega(\log n/\log\log n)$.

\paragraph{Estimating local sparsity.}
Local sparsity can be similarly estimated, with a caveat. The accuracy of \estimatesimilarity depends on the sizes of the sets we are dealing with. For global sparsity, i.e., in the $\Delta+1$ setting, the global bound of $\Delta$ on the degrees of all nodes implies that $\estimatesimilarity(\eps)$ gives an estimate of $\card{N(u)\cap N(v)}$ within $\eps \Delta$ of the true value.

The difficulty with local sparsity comes from higher degree neighbors. If we could guarantee that each estimate $s_u$ of $\card{N(u) \cap N(v)}$ for $u \in N(v)$ was accurate up to $\eps d_v$ w.p.\ $1-\nu$, we would only need to replace $\Delta$ by $d_v$ in the formula at the end of \cref{alg:sparsity}, i.e., output $\frac{d_v-1}{2} - \frac{1}{2d_v}\sum_u s_u$, to get an estimate of the local sparsity within $\eps d_v$ w.p.\ $(\nu d_v)^{\eps d_v/2}$. Unfortunately, $\estimatesimilarity(\eps)$ only gives an estimate of $\card{N(u)\cap N(v)}$ within $\eps\max(d_u,d_v)$ of the true value, which might be completely off if $d_u \gg d_v$, e.g., with $d_u \geq d_v / \eps$. We hence only claim that we are able to estimate the local sparsity of nodes which do not have too many neighbors of much higher degree.

\begin{lemma}
    Let a node $v$ have less than $\eps d_v/3$ neighbors of degree $\geq 2d_v$. $\estimatesparsity(\eps)$ can be tweaked to output an estimate of $\lspar_v$ that is $\eps d_v$-close to the true value, w.p.\ $1-(\nu d_v)^{\eps d_v/3}$.
\end{lemma}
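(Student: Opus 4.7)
The plan is to partition $N(v) = L \cup H$ with $L = \{u \in N(v) : d_u < 2 d_v\}$ and $H = \{u \in N(v) : d_u \geq 2 d_v\}$, and handle the two sets differently in estimating $\sum_{u\in N(v)} \card{N(u) \cap N(v)}$. For $u \in L$ we run $\estimatesimilarity(\eps')$ with $\eps' = \eps/6$, obtaining $s_u$ within $\eps' \max(d_u, d_v) \leq 2\eps' d_v = \eps d_v / 3$ of the true $\card{N(u) \cap N(v)}$, w.p.\ $1-\nu$. For $u \in H$ we give up on a good estimate and simply use $d_v/2$; since $\card{N(u) \cap N(v)} \in [0, d_v]$, this introduces error at most $d_v/2$ per $u \in H$ (equivalently, one may still call $\estimatesimilarity$ and clamp its output to $[0, d_v]$). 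The tweaked algorithm outputs $\frac{d_v-1}{2} - \frac{1}{2 d_v}\bigl(\sum_{u \in L} s_u + \sum_{u \in H} d_v/2\bigr)$ in place of step~\ref*{step:output-sparsity} of \cref{alg:sparsity}.

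To get the sparsity estimate within $\eps d_v$ of $\lspar_v$, it suffices that the total sum error $|E| := \bigl|\sum_u (\text{output}_u - \card{N(u) \cap N(v)})\bigr|$ be at most $2 \eps d_v^2$. Three contributions feed into $|E|$: the good-$L$ calls contribute at most $|L| \cdot \eps d_v / 3 \leq \eps d_v^2 / 3$; the set $H$ contributes at most $|H| \cdot d_v < \eps d_v^2 / 3$ by the hypothesis on $|H|$; and each failed $L$ call contributes at most $d_v$. Hence $|E| > 2\eps d_v^2$ forces at least $(2\eps d_v^2 - 2\eps d_v^2/3)/d_v = 4\eps d_v / 3 \geq \eps d_v/3$ failed calls among $L$. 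A union bound over subsets of $L$ of size $\eps d_v / 3$ then yields
\[\Pr\bigl[|E| > 2\eps d_v^2\bigr] \;\leq\; \binom{d_v}{\eps d_v / 3}\,\nu^{\eps d_v /3} \;\leq\; (\nu d_v)^{\eps d_v/3},\]
which is the claimed failure probability (the statement is trivial when $\nu d_v \geq 1$).

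The main subtlety is the three-way budget bookkeeping: the $|H| < \eps d_v/3$ hypothesis is precisely what bounds the $H$ contribution by $\eps d_v^2/3$, and the choice $\eps' = \eps/6$ is what makes the good-$L$ contribution also $\leq \eps d_v^2/3$, leaving just enough room in the $2\eps d_v^2$ budget for the failure threshold to remain at least $\eps d_v/3$. Beyond this constant tracking, the proof introduces no new idea: \estimatesimilarity is used as a black box exactly as in the global sparsity analysis, and the only genuinely new observation is that high-degree neighbors can simply be handled via the trivial bound $\card{N(u) \cap N(v)} \leq d_v$, which is affordable precisely because the hypothesis caps their number.
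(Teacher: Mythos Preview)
Your proof is correct and follows essentially the same approach as the paper: separate out the $<\eps d_v/3$ neighbors of degree $\geq 2d_v$ and handle them trivially, then run $\estimatesimilarity(\eps/6)$ on the remaining low-degree neighbors (which is precisely what $\estimatesparsity(\eps/3)$ calls internally). The paper phrases the second part as a black-box invocation of the global-sparsity lemma on the induced subgraph of max degree $\leq 2d_v$, whereas you redo the union-bound bookkeeping inline; the constants and the final $(\nu d_v)^{\eps d_v/3}$ bound coincide.
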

\begin{proof}
$\eps d_v/3$ nodes can contribute at most $\eps d_v/3$ to the sparsity, so estimating the number of missing edges within the rest of $v$'s neighborhood with precision $2\eps d_v/3$ suffices to get a $\eps d_v$-accurate estimate of $v$'s local sparsity. Since the rest of $v$'s neighborhood has degree at most $2 d_v$, we can run $\estimatesparsity(\eps/3)$ on the subgraph it induces to get an $2\eps d_v/3$ estimate of $v$'s sparsity in this subgraph, giving the result.
\end{proof}

\subsection{Application: Local Triangle Finding}
\label{sec:triangles}

In standard property testing, the goal is to detect with constant probability if a graph is far from satisfying a property. For example, the task may be to distinguish with constant probability whether a graph contains no triangle vs whether an $\eps$-fraction of the edges needs to be deleted for the graph to contain no triangle. The task is solved distributedly but is global in several ways: the notion of distance between graphs takes into account the whole graph and the goal is only for one node of the graph to detect the property. Our \estimatesimilarity primitive allows us to solve a related but more local task: make every edge involved in many triangles detect that it is so.

\begin{theorem}
\label{thm:triangle-finding}
There exists an $O(\eps^{-4})$-round randomized \CONGEST algorithm that, for each edge, detects w.h.p.\ when it is part of $\eps \Delta$ triangles.
\end{theorem}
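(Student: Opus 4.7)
The plan is to apply \estimatesimilarity in parallel on every edge of the graph. For any edge $e=uv$, the number of triangles containing $e$ equals exactly $\card{N(u)\cap N(v)}$, so measuring the similarity of the endpoints' neighborhoods directly counts triangles. I would set the failure probability $\nu = 1/n^c$ for a sufficiently large constant $c$ so that a union bound over all $\binom{n}{2}$ edges makes the whole procedure succeed with high probability.

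Concretely, on every edge $uv$ the endpoints run $\estimatesimilarity(\eps/2)$ with $S_u = N(u)$, $S_v = N(v)$, and $\rhfspace = V$. By \cref{lem:estimatesimilarity}, the output is within $(\eps/2)\max(d_u,d_v) \leq (\eps/2)\Delta$ of $\card{N(u) \cap N(v)}$ with probability $1-\nu$. Each edge then reports \YES whenever its estimate is at least $\eps \Delta/2$; an edge that participates in $\geq \eps\Delta$ triangles always crosses this threshold, while an edge with no triangles stays below. This yields the one-sided detection guarantee promised by the statement (and in fact a $(\eps/2)\Delta$-additive approximation to the triangle count).

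For the round complexity, \cref{lem:estimatesimilarity} gives a per-edge cost of $O(\eps^{-4}\log(1/\nu)+\log\log\card{\rhfspace}+\log\max(d_u,d_v))$ bits, which with $\nu=1/\poly(n)$ and $|\rhfspace|=n$ becomes $O(\eps^{-4}\log n)$ bits. Spread over \CONGEST messages of bandwidth $O(\log n)$, this is $O(\eps^{-4})$ rounds on each edge. The key point I would emphasize is that the parallel executions on different edges do not interfere: each edge picks its own hash function index, each endpoint's outgoing message on edge $e$ depends only on the data and hash function for $e$, and distinct edges are separate communication channels in \CONGEST. Hence all $|E|$ invocations can be executed simultaneously in $O(\eps^{-4})$ rounds total.

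The main (minor) obstacle is bookkeeping: verifying that the per-edge bandwidth really stays within the $O(\log n)$ budget per round even when one adds the $\log\log|\rhfspace|$ and $\log\max(d_u,d_v)$ contributions, and that the shared randomness needed to ``jointly pick" the hash function index on each edge is either drawn from a common source or communicated as part of the $O(\eps^{-4}\log n)$ bits above — neither changes the stated round count.
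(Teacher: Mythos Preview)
Your proposal is correct and follows exactly the paper's approach: invoke \estimatesimilarity on each edge with $S_u=N(u)$, $S_v=N(v)$, since the triangle count through $uv$ is precisely $\card{N(u)\cap N(v)}$. The paper's proof is a single sentence to this effect; your additional bookkeeping on thresholds, the union bound over edges, and the bits-to-rounds conversion are all sound and simply spell out what the paper leaves implicit.
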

\begin{proof}
On each edge $uv$, estimate the size of the intersection $\card{N(u)\cap N(v)}$.
\end{proof}

Compared to the usual property testing setting, we solve a harder problem in that we solve the problem with high probability instead of constant probability, and solve it on each edge instead of globally. However our condition for detection is incomparable with that of the property testing setting: our algorithm works whenever a single edge is part of $\eps \Delta$ triangles, while property testing typically assumes $\eps$-farness, i.e., that $\eps\card{E}$ edges have to be deleted from the graph to make it triangle-free. The two are incomparable, since our condition being satisfied on some edge only implies that the graph is $(\eps \Delta/\card{E})$-far from being triangle-free, while in an $\eps$-far graph, it can be the case that each edge is only part of at most one triangle while $\Delta \in \Theta(n)$.

\subsection{Application: Local 4-Cycle Finding}
\label{sec:cycles}

Our technique also allows us to detect 4-cycle locally in \CONGEST, with the same tradeoff as in the detection of triangles compared to standard property testing. Our result is stronger in that we detect occurrence of the pattern locally instead of globally, and we have a higher success probability, but on the other end the two settings are incomparable in that an $\eps$-far graph might not satisfy our local threshold for detection anywhere in any meaningful sense and vice-versa.

\begin{theorem}
\label{thm:cycle-detection}
There exists an $O(\eps^{-4})$-round \CONGEST algorithm that, for each pair of edges incident on the same vertex, detects w.h.p.\ when they are part of $\eps \Delta$ 4-cycles.
\end{theorem}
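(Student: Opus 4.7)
The key reduction is that a 4-cycle through both $vu$ and $vw$ has the form $v,u,x,w$ with $x \in N(u) \cap N(w) \setminus \{v\}$, so the number of such cycles equals $|N(u) \cap N(w)|$ up to $\pm 1$. It therefore suffices to produce, at every vertex $v$ and for every pair of its neighbors $(u,w)$, an estimate of $|N(u) \cap N(w)|$ with additive error at most $\eps\Delta/2$, simultaneously and w.h.p. The main obstacle is that $u$ and $w$ need not be adjacent, so \estimatesimilarity cannot be invoked between them directly; my plan is to use their common neighbor $v$ as a coordinator and have every vertex play this role in parallel.

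The plan is: fix $\out=8\Delta/\eps$, $\alpha=\eps^2/8$, $\beta=\eps/4$, $\nu=1/n^c$ for a sufficiently large constant $c$, and let $\HFset = (h_i)_{i\in[\famsize]}$ be the representative family provided by \cref{lem:representative_hash_functions}. In a single round, every $v$ picks a uniform random index $i_v \in [\famsize]$ and broadcasts it to all of its neighbors (the index fits in $O(\log n)$ bits). Then, using $h = h_{i_v}$, each neighbor $u$ of $v$ computes $T_u = \hit{N(u)}{h}{N(u)}$ and sends the subset $h(T_u) \subseteq [\samp]$ back to $v$; since $\samp \in O(\eps^{-4}\log n)$, this costs $O(\eps^{-4})$ rounds on the edge $uv$. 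These exchanges live on disjoint edges of $G$ (each edge $uv$ carries only the two messages $i_v, h_{i_v}(T_u)$ from $v$ to $u$ and $u$ to $v$, plus the symmetric pair for $i_u$), so no congestion arises and the round complexity is $O(\eps^{-4})$ overall. Finally, $v$ locally computes $\hat s_{u,w} := |h(T_u) \cap h(T_w)| \cdot \out/\samp$ for every pair $(u,w)$ of its neighbors and flags $(vu,vw)$ as lying in many 4-cycles iff $\hat s_{u,w} \geq \eps\Delta - 1$.

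Correctness follows by replaying the analysis of \cref{lem:estimatesimilarity} once per triple $(v,u,w)$: conditional on $h_{i_v}$ being $(N(u),N(w))$-good in the sense of \cref{lem:representative_hash_functions}, the same arithmetic gives $|\hat s_{u,w} - |N(u) \cap N(w)|| \leq \eps\max(|N(u)|,|N(w)|) \leq \eps\Delta$. Since a uniform $i_v$ is $(N(u),N(w))$-good with probability at least $1-\nu$, a union bound over the $O(n^3)$ triples $(v,u,w)$, combined with $\nu = 1/n^c$ for $c$ large enough, makes every estimate accurate simultaneously w.h.p.

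The one delicate point I expect is that a single broadcast index $i_v$ must work simultaneously for all $\binom{\deg(v)}{2}$ pairs of $v$'s neighbors, so the failure probability $\nu$ has to absorb a polynomial-in-$n$ union bound. This is painless because both the family size $\famsize$ and the threshold $\samp$ depend only logarithmically on $1/\nu$, so taking $\nu$ polynomially small in $n$ leaves the round complexity at $O(\eps^{-4})$. Beyond this, the bookkeeping is essentially identical to the triangle case of \cref{thm:triangle-finding}.
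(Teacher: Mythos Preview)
Your proposal is correct and follows essentially the same route as the paper: have each vertex $v$ broadcast a single random representative hash function to all its neighbors, collect the sets $h(\hit{N(u)}{h}{N(u)})$ back, and then locally run the \estimatesimilarity arithmetic on every pair $(u,w)$ of neighbors. Your added remarks on the need for a single index $i_v$ to survive a $\binom{\deg(v)}{2}$-wide union bound (handled by taking $\nu$ polynomially small, which only affects $\samp$ and $\famsize$ logarithmically) and on the absence of edge congestion make explicit what the paper leaves implicit, but the underlying idea is identical.
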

\begin{proof}
Let $v$ be a vertex of the graph. $v$ picks a random representative hash function $h$ and sends it to all its neighbors $u\in N(v)$, who answer with $\hit[\leq \samp]{N(u)}{h}{N(u)}$. For each pair of neighbors $u,u'$ of $v$, $v$ then estimates $\card{N(u) \cap N(u')}$ with these hashes, as is done in \estimatesimilarity.
\end{proof}

\section{Ultrafast Coloring in Congest}

The techniques we presented allow us to implement all steps of a recent \degoLC algorithm in \CONGEST. The correctness of the algorithm is in~\cite{HKNT21}. We give a succinct but complete description of the algorithm in \cref{app:deg+1_alg} for reference.

\congestDegolcTheorem*

The algorithm as a whole is bandwidth-efficient, but a larger bandwidth is assumed in four places. Two particularly stand out, and are the focus of the upcoming \cref{sec:multitrial,sec:acd}. We sketch how to adapt the rest of the algorithm in \cref{sec:minor}, with the rigorous treatment of these last minor modifications deferred to \cref{app:details}.

The most challenging step to implement in \CONGEST is a method for sampling and ``trying" a set of colors, called ``MultiTrial". We detail its implementation in \cref{sec:multitrial}. Another non-trivial step of computing an almost-clique decomposition is dealt with in \cref{sec:acd}.

\subsection{MultiTrial}
\label{sec:multitrial}

When breaking down recent ultrafast ($O(\log^*n)$ rounds for graphs with large enough degrees) algorithms, all of them contain a step that stands out in the amount of information it requires. Intuitively, in those algorithms, some nodes try up to $\Theta(\log n)$ colors over the course of the algorithm, with the idea that if each color succeeds with constant probability, then the nodes get colored w.h.p.\ by trying that many colors. However, $\log n$ arbitrary colors take at least $\Theta(\log \Delta \log n)$ bits to describe (and possibly much more when nodes are given lists of colors instead of using $[\deg+1]$ or $[\Delta+1]$), which would require $\Omega(\log \Delta)$ rounds in \CONGEST. A more communication-efficient procedure following the same idea needs to compromise on some front, which we do here by compromising on the randomness and accuracy of the colors that nodes try, using representative hash functions.

We give a procedure -- {\multitrial} -- that within bandwidth $\bw$ allows a node $v$ to try $x$ random colors from its palette, where $x$ can be as large as $\Theta(\bw)$.
While trying $\Theta(b)$ colors in a single round is straightforward in \LOCAL, a na\"ive implementation in \CONGEST would take $\Omega(\log \card{\colSpace})$ rounds for a color space $\colSpace$.
We achieve similar results in \CONGEST by replacing the random sampling of colors by a pseudorandom one.
While the $x$ colors tried are not as random as $x$ independent random samples, enough randomness is used so that one of those colors succeeds w.p.\ $1-\exp(-\Omega(x))-\exp(-\Omega(\bw))$, essentially the same probability as if they were independent. With bandwidth $\bw \in \Theta(\log n)$, this allows up to $\Theta(\log n)$ colors to be tried in a single round, and for a node to be colored with probability $1-1/\poly(n)$.
Previously, this was only known to be possible in the very restricted setting of locally sparse graphs~\cite{HN21}.

To get an intuitive understanding of our approach, let us assume that each node $v$ can sample and communicate to its neighbors a random hash function $h_v : \colSpace \rightarrow [\out]=\set{1,\ldots,\out}$ for a number $\out$ of its choice. To have all nodes try $x$ colors, on each edge $uv$, node $v$ sends to $u$ the hash values of the color it tries through $h_u$ (and vice versa). If $v$ tries a color $\col$ that hashes to a value different from all the hash values it received, $v$ can safely color itself with $\col$. To make the procedure more efficient, we have 
$v$ pick random colors among those with a hash value $\leq \samp = O(\log n)$ through $h_v$. With this restriction, the neighbors of $v$ only need to tell $v$ about the colors they try that hash to a value $\leq \samp$ through $h_v$. 
This uses $\samp = O(\log n)$ bits of communication using a $\samp$-size bitstring.

For this to work, the hash function must satisfy three properties: first, enough colors must hash to a value $\leq \samp = O(\log n)$; second, collisions must be rare enough for a unique hash to be sampled; and third, it should be possible to communicate a hash function in $O(\log n)$ bits so the process takes $O(1)$ rounds. Increasing $\out$ reduces the number of collisions, but reduces how many elements hash to a value $\leq \samp = O(\log n)$, so a balance must be found. This balance is found at $\out \in \Theta(\card{\pal_v})$.

Using families of representative hash functions, whose existence we proved in \cref{lem:representative_hash_functions}, we show how to implement \multitrial efficiently in \CONGEST (\cref{alg:multitrial} and \cref{lem:multitrial-success}).

The pseudocode of {\multitrial} is presented in \cref{alg:multitrial}.
Let $\alpha=1/12$, $\beta=1/3$, and for each $\out\in \bbN$, let $\nu_\out = \max(n^{-c},12\exp(-\alpha\out/45))$ and $\samp_\out \in \Theta\left(\beta^{-2}\alpha^{-1}\log(1/\nu_\out)\right)$, for a constant $c>3$ (hence, even for $n^2$ events of probability $\nu_\out$, when $\out \in \omega(\log n)$, none occurs w.h.p.). We assume that all the nodes know, for each $\out \in [2\beta^{-1}\Delta]=[6\Delta]$, a common family of hash functions $\HFset^\out=\parens{h^{(\out)}_i}_{i\in[\famsize]}\subseteq [\out]^{\colSpace}$ and value $\samp_\out$ with the properties of \cref{lem:representative_hash_functions}. This could be achieved, e.g., by having each node compute the lexicographically first such pair of family and parameter, for each $\out$. Note that $\samp_\out \in O(\log n)$ for all $\out$, and that this parameter can be chosen to be the same $\samp=\Theta(\log n)$ for all values of $\out \in \omega(\log n)$.

\begin{algorithm}[H]\caption{{\multitrial}($x$), for node $v$}
\label{alg:multitrial}
  \begin{algorithmic}[1]
    \STATE Let $\out_v\gets 6\card*{\pal_v}$, pick a random $h_v=h^{(\out_v)}_{i_v}\in \HFset^{\out_v}$, broadcast $\out_v,i_v$ to $N(v)$. 
    \STATE $X_v\gets$ $x$ independently chosen random colors in $\hit{\pal_v}{h_v}{\pal_v}$.\label{st:mulxv}
    \FORALL {$u\in N(v)$ and all $i\in [\samp_{\out_u}]$}
    \STATE \algorithmicif\  $\exists \col \in X_v$, $h_u(\col)=i$ \algorithmicthen\  $b_{v \rightarrow u}[i]\gets 1$\  \algorithmicelse\  $b_{v \rightarrow u}[i]\gets 0$ \ENDFOR
    \STATE Send $b_{v \rightarrow u}$ and receive $b_{u \rightarrow v}$  to/from $u$, for all $u\in N(v)$.
    \IF {$\exists \col \in X_v$ s.t. $\forall u\in N(v)$, $b_{u \rightarrow v}[h_v(\col)]=0$} 
    \STATE Adopt some such $\col$ as permanent color and broadcast to $N(v)$.
    \ENDIF
\end{algorithmic}
\end{algorithm}

\begin{lemma}
\label{lem:multitrial-success}
For every node $v$, if $x \leq \card{\pal_v}/2\card{N(v)}$, then  
an execution of {\multitrial}$(x)$ colors $v$ with probability $1-(7/8)^{x}-2\nu$, where $\nu \leq e^{-\Theta( \card{\pal_v})} + n^{-\Theta(1)}$, even when conditioned on any particular combination of random choices of the other nodes.
\end{lemma}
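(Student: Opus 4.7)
The plan is to first condition on the random choices of all nodes other than $v$, which fixes $B := \bigcup_{u\in N(v)} X_u$ as a subset of $\colSpace$ of size at most $x\card{N(v)} \leq \card{\pal_v}/2$. The only randomness that remains is $v$'s choice of hash function $h_v\in\HFset^{\out_v}$ and the $x$-multiset $X_v$ drawn uniformly with replacement from $A_v := \hit{\pal_v}{h_v}{\pal_v}$. Node $v$ gets colored as soon as some $\col\in X_v$ satisfies $h_v(\col)\notin h_v(B)$, so the task is to lower-bound the probability that at least one such $\col$ exists.

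I would define a good event $G$ for $h_v$ by combining two applications of \cref{lem:representative_hash_functions}, each failing with probability at most $\nu$, so that $\Pr[G]\geq 1-2\nu$ by a union bound. The first uses $A=B=\pal_v$ (sizes $\card{\pal_v}=\out_v/6\leq\beta\out_v$) and gives simultaneous control of $\card{\inj{\pal_v}{h_v}}$ and $\card{\collide{\pal_v}{h_v}{\pal_v}}$, hence a lower bound on $\card{A_v}$. The second uses $A=\pal_v$ against $B^{\ast}=B\cup\pal_v$ (of size at most $3\card{\pal_v}/2\leq\beta\out_v$), upper-bounding $\card{\collide{\pal_v}{h_v}{B^{\ast}}}$, which in turn upper-bounds the unsafe subset $\{\col\in A_v:h_v(\col)\in h_v(B)\}\subseteq\collide{\pal_v}{h_v}{B^{\ast}}$. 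Conditional on $h_v$, the $x$ elements of $X_v$ are i.i.d.\ uniform draws from $A_v$, so the probability that all of them land in the unsafe subset equals $(1-\card{A_v\setminus h_v^{-1}(h_v(B))}/\card{A_v})^x$.

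The main technical point, and where I expect the difficulty, is showing that under $G$ the safe fraction is at least $1/8$, so that the probability of all $x$ samples failing is at most $(7/8)^x$. The algorithm's parameters $\alpha=1/12$, $\beta=1/3$, $\out_v=6\card{\pal_v}$, combined with the hypothesis $x\card{N(v)}\leq\card{\pal_v}/2$, are calibrated for exactly this, but the $\beta$-parameterized collision bound stated in \cref{lem:representative_hash_functions} is lossy here since $\card{B^{\ast}}/\out_v\leq 1/4$ is strictly below $\beta$. Reusing the tighter expectation $\Exp\event*{Y_x}\leq (\samp_{\out_v}/\out_v)\cdot\card{B^{\ast}}/\out_v$ that appears inside the proof of that lemma and re-running the same Chernoff/martingale step at probability level $\nu$ yields a sharper upper bound on $\card{\collide{\pal_v}{h_v}{B^{\ast}}}$, producing the required gap between $\card{A_v}$ and its unsafe subset. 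This refinement is the only step that cannot be carried out as a purely black-box use of \cref{lem:representative_hash_functions}; once it is in hand, the union bound on $G$ together with the $x$-fold product argument gives a failure probability of at most $(7/8)^x+2\nu$. The bound $\nu\leq e^{-\Theta(\card{\pal_v})}+n^{-\Theta(1)}$ is then inherited directly from the algorithm's choice $\nu_{\out_v}=\max(n^{-c},12\exp(-\alpha\out_v/45))$, and the independence of $h_v$ and $X_v$ from all other nodes' randomness makes the estimate hold conditional on any fixing of those choices, as the last clause of the lemma requires.
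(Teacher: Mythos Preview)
Your overall shape is right, but there is a genuine gap in the key inclusion you rely on, and your proposed workaround does not fix it.

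\textbf{The inclusion is false.} You claim that the unsafe subset $\{\col\in A_v:h_v(\col)\in h_v(B)\}$ is contained in $\collide{\pal_v}{h_v}{B^{\ast}}$ with $B^{\ast}=B\cup\pal_v$. This fails whenever $\col\in A_v\cap B$: such a $\col$ is unsafe because $h_v(\col)\in h_v(B)$ trivially, yet since $\col\in A_v=\hit{\pal_v}{h_v}{\pal_v}$ no other element of $\pal_v$ shares its hash, and there need not be any element of $B\setminus\pal_v$ colliding with it either; in that case $\col\notin\collide{\pal_v}{h_v}{B^{\ast}}$. Concretely, if $B\subseteq\pal_v$ then $B^{\ast}=\pal_v$, so $\collide{\pal_v}{h_v}{B^{\ast}}$ is disjoint from $A_v$ by definition, while every element of $B\cap A_v$ is unsafe. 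Since $\card{B}$ can be as large as $\card{\pal_v}/2$, this uncounted set can wipe out your entire safe fraction.

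\textbf{The Chernoff re-run does not apply.} Your second application uses $A=\pal_v$ against $B^{\ast}$ and you observe that the $\beta$-parameterised collision bound from \cref{lem:representative_hash_functions} is too loose. You then propose to reuse the sharper expectation inside that lemma's proof and re-run the martingale step. But $h_v$ is not a uniformly random function: it is a uniform draw from the \emph{fixed} family $\HFset^{\out_v}$, and the only guarantee available is the black-box statement of \cref{lem:representative_hash_functions}. The Chernoff step in that proof analyses a truly random $h$ and is used solely to establish existence of the family; once the family is fixed you cannot re-derive finer bounds for it without proving a stronger version of the lemma from scratch.

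\textbf{How the paper avoids both issues.} The paper applies \cref{lem:representative_hash_functions} with $A=T_v:=\pal_v\setminus Y_v$ (your $B$) and $B=P_v:=\pal_v\cup Y_v$, directly lower-bounding the \emph{safe} set $\hit{T_v}{h_v}{P_v}$. Since $T_v\subseteq\pal_v\subseteq P_v$, \cref{eq:rhf-inclusion} gives $\hit{T_v}{h_v}{P_v}\subseteq\hit{\pal_v}{h_v}{\pal_v}=A_v$, and every $\col\in\hit{T_v}{h_v}{P_v}$ has $h_v(\col)\notin h_v(P_v\setminus\{\col\})\supseteq h_v(Y_v)$, so it is safe. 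The size bound $\card{T_v}\geq\card{\pal_v}/2=\out_v/12=\alpha\out_v$ is exactly what makes the large-$A$ case of the lemma apply as a black box, with no need to sharpen constants.
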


\begin{proof}
    Consider $Y_v=\bigcup_{u\in N(v)} X_u$, the set of colors tried by neighbors of $v$.
    Note that $|Y_v|\le x|N(v)|\le |\pal_v|/2\le \out_v/12$ (recall $\out_v=6 \card*{\pal_v}$), and its composition is independent from $v$'s choice of random colors. Letting  $T_v=\pal_v \setminus Y_v$ and $P_v = \pal_v \cup Y_v$, we have $\card{P_v},\card{T_v},\card{\pal_v} \in [\out_v/12,\out_v/3]$, and so, the triplets $(\out_v,P_v,T_v)$ and $(\out_v,P_v,\pal_v)$ satisfy  \cref{lem:representative_hash_functions} with our parameters $\alpha,\beta,\nu$. Let $\samp = \samp_{\out_v}$. 
    The lemma implies that w.p.\ $1-\nu$,  $\card*{\hit{\pal_v}{h_v}{\pal_v}} \le (1+\beta)\cdot \samp\card{\pal_v}/\out_v \le 2\samp/9$,  
    and similarly, w.p.\ $1-\nu$, $\card*{\hit{T_v}{h_v}{P_v}} \ge (1-2\beta)\cdot \samp\card{T_v}/\out_v \ge \samp/36$. 
    Since additionally $(\hit{T_v}{h_v}{P_v}) \subseteq (\hit{\pal_v}{h_v}{P_v}) \subseteq (\hit{\pal_v}{h_v}{\pal_v})$, we conclude that $\hit{T_v}{h_v}{P_v}$ forms a $(\samp/36) / (2\samp/9) = 1/8$ fraction of $\hit{\pal_v}{h_v}{\pal_v}$, and  any color randomly picked in $\hit{\pal_v}{h_v}{\pal_v}$ is  in $\hit{T_v}{h_v}{P_v}$ w.p.\ at least $1/8$.
    Hence, conditioned on the $1-2\nu$ probability event that $|\hit{T_v}{h_v}{P_v}|\ge |\hit{\pal_v}{h_v}{\pal_v}|/8$,  
    the $x$ colors randomly picked by $v$ in  $\hit{\pal_v}{h_v}{\pal_v}$ all miss $\hit{T_v}{h_v}{P_v}$ w.p.\ at most $(7/8)^{x}$. As any color found in $\hit{T_v}{h_v}{P_v}$ will be successful for $v$, $v$ gets colored w.p.\ $1-(7/8)^{x}$, conditioned on an event of probability $1-2\nu$. 
\end{proof}

\subsection{Almost-Clique Decomposition}
\label{sec:acd}

Almost-clique decompositions are commonly defined and computed according to a relation that classifies two connected nodes as \emph{friends} if they share most of their neighborhoods. Nodes whose neighborhood is almost all friends have dense neighborhoods, i.e., most pairs of nodes in their neighborhood are connected by an edge, and are mostly adjacent to nodes of similar degree. Reciprocally, nodes with few friends are either \emph{uneven}, i.e., adjacent to many nodes of much higher degree, or have a sparse neighborhood, i.e., a large fraction of their neighbors are not directly connected.

\begin{definition}[\cite{AA20}]
    Let $\eps \in [0,1]$. An edge $uv$ is
    \begin{itemize}
    \item \emph{$\eps$-balanced} iff $\min(d_u,d_v) \geq (1-\eps)\max(d_u,d_v)$,
    \item \emph{$\eps$-friend} iff it is $\eps$-balanced and $\card{N(u)\cap N(v)} \geq (1-\eps)\min(d_u,d_v)$.
    \end{itemize}
\end{definition}

Computing $\eps$-\friend predicates exactly would be too costly in many models of computation where almost-clique decompositions are relevant. Fortunately, this much accuracy is not needed: in \CONGEST, it suffices to have access to a procedure $\eps$-\buddy that distinguishes between an edge being $\eps$-friend and it being far from it, i.e., not $c\cdot \eps$-friends for some constant $c>1$ (see, e.g., Appendix~B in~\cite{HKMN20} for details). The idea of computing almost-clique decompositions using a sampling-based approach originated in~\cite{ACK19}.
This can be done easily by testing whether $d_u$ and $d_v$ are approximately equal and then running \estimatesimilarity if this is the case, which works w.h.p.\ with bandwidth $\log n$.

\subsection{Final Minor Modifications}
\label{sec:minor}

While the bulk of adapting the \LOCAL algorithm of~\cite{HKNT21} for $\degoLC$ to \CONGEST is figuring out how to efficiently try up to $\Theta(\log n)$ colors (\multitrial) and compute an almost-clique decomposition (\computeacd) within the $O(\log n)$ bandwidth constraint, a few additional minor modifications are required. We defer their rigorous treatment to \cref{app:details}, and sketch here the essence of those changes.

\paragraph{Leader selection.} In the original \LOCAL algorithm, a node is chosen as leader in each almost-clique based on a quantity called \emph{slackability} (see \cref{app:acd} for its definition). As the slackability of a node is entirely determined by its neighborhood (palettes included), finding the node of minimum slackability within each almost-clique is trivial in \LOCAL, and only takes $O(1)$ rounds. This is no longer the case in \CONGEST. This is circumvented by arguing that it suffices to identify a node of low but not necessarily minimal slackability within each almost-clique, and that the slackability can be estimated with the needed accuracy efficiently in \CONGEST. The details of these two arguments are given in \cref{sec:leader}.

\paragraph{Coloring put-aside sets.} In the \degoLC algorithm we are adapting to \CONGEST, almost-cliques and their nodes are dealt with differently depending on whether their sparsity is above or below some threshold. Very dense almost-cliques are dealt with by putting aside a subset of its nodes to color later, in order to provide temporary slack to the rest of the almost-cliques. To color those nodes at the end of the algorithm, information about their palettes and how they are connected is centralized. How all this information can be centralized in \CONGEST is not as simple as in \LOCAL, and requires in particular more control on the sizes of the put-aside sets. The adaptation of this part of the algorithm is detailed in \cref{sec:coloring-put-aside}.

\paragraph{Handling large colors.} An important aspect of list-coloring problems in models with a bandwidth constraint is that colors may live in a color space bigger than $2^{\Theta(\bw)}$, i.e., too big for the nodes to send a color in a constant number of rounds. Our procedure \multitrial circumvents this, in the case of trying multiple colors, by hashing. This is fortunately possible in other parts of the algorithm, e.g., whenever nodes need to inform their neighbors of their newly adopted color, or need to inform another node of the color it should try. We show that color spaces of size up to $\exp(n^{\Theta(1)})$ can be handled without increasing the complexity of the algorithm. How this is done is explained in \cref{sec:large-colors}.

\section{Uniform Implementation}
\label{sec:uniform}

\Cref{lem:representative_hash_functions} -- on the existence of representative hash functions -- does not give an explicit construction. Hence, an algorithm relying on their existence either needs to assume that the nodes receive a common family of such hash functions ``for free'' at the beginning, or have the nodes find a common family of representative hash functions themselves. This second option requires extensive computational resources, as it involves exploring the space of $F$-element subsets of $\range{\out}^\colspace$, performing expensive statistical tests on each subset.

In this section, we show how our subroutines that use representative hash functions can be modified to not rely on them. With these new implementations, the nodes only have to perform computations polynomial in $n$ and $\Delta$ in our algorithm.
We leave as an open question the explicit construction of families of representative hash functions.

In our new uniform implementations of \multitrial and \buddy, a key idea is to introduce and exploit some asymmetry between the parties. By having one of the parties partially \emph{choose} a hash function instead of taking it fully at random, this party can ensure that not too many collisions occur between the elements it knows of. To remove the reliance on representative hash functions, we make use of other objects with explicit constructions: pairwise-independent hash functions, representative multisets (constructed from averaging samplers), and error-correcting codes. Subroutines other than \multitrial and \buddy do not rely on representative hash functions, and as such do not need to be modified for the algorithm to be uniform.

\subsection{MultiTrial}

The core properties of \multitrial are twofold: first, \multitrial is able to describe up to $\Theta(b)$ colors in a single $b$-sized \CONGEST message; second, those colors are sufficiently random that it is as if they each had a constant probability of success, i.e., when trying $k$ colors, at least one succeeds w.p.\ $1-\exp(-\Omega(k))$.

Let $\HFpwi^{\out}$ be a set of $\eps$-almost pairwise-independent hash functions from $\colspace$ to $\out$ (see, e.g., Problem~3.4 in \cite{Vadhan12}). When selecting a hash function from such a family, \[\Pr_{h \overset{R}{\gets} \HFpwi^{\out}}[h(x_1)=y_1 \textrm{ and } h(x_2)=y_2] \leq \frac{1+\eps}{\out^2}, \qquad\forall x_1,x_2 \in \colspace, y_1,y_2\in[\out] \textrm{ with } x_1\neq x_2~.\]

There exist explicit such families of size $\poly(\out ,\log \card{C}, 1/\eps)$, such that sampling an element from the family only requires to pick $(\log \out + \log \log \card{C} + \log( 1/\eps))$ random bits.

We sketch the argument showing that a \multitrial procedure that only relies on explicit constructs (and thus, does not rely on representative hash functions at the moment) is possible. We give pseudocode of this procedure below. Its core idea is to have each node $v$ select a hash function that has few collisions in its palette. As a result, the image of $v$'s palette through the hash function it picked is almost of the same size as the palette itself. Let us now consider the images of the colors tried by $v$'s neighbor through this hash function. If the total number of colors tried by the neighbors of $v$ is less than half the number of colors in $v$'s palette, then a constant fraction of the image of $v$'s palette is necessarily not the image of any color tried by a neighbor of $v$. Therefore, by sampling $x$ hashes from the image of its palette using an explicit representative multiset over the space of hashes (\cite{HN21}, and \cref{app:representative-multisets}), $v$ can succeed in securing a random color w.p.\ $1-\exp(-\Omega(x))-\exp(-\Omega(\bw))$.

\begin{algorithm}[H]\caption{Uniform {\multitrial}($x$), for node $v$}
\label{alg:unif_multitrial}
  \begin{algorithmic}[1]
    \STATE Let $\out_v\gets 6\card*{\pal_v}$, pick a random $h_v=h^{(\out_v)}_{i_v}\in \HFpwi^{\out_v}$ with at most $\out_v/3$ collisions in $\pal_v$, broadcast $\out_v,i_v$ to $N(v)$. 
    \STATE Let $\samp_v = \min(\bw,\out_v)$, pick a random representative multiset $S_v$ of size $\samp_v$, subset of $\range{\out_v}$. Let $S_v =  \set{s_1^{(v)},\ldots s_{\samp_v}^{(v)}}$.\label{st:repsetxv}
    \STATE $X_v \gets$ $x$ random elements of $\pal_v \cap h_v^{-1}(S_v)$, picked uniformly at random.
    \FORALL {$u\in N(v)$ and all $i\in \range{\samp_u}$}
    \STATE \algorithmicif\  $\exists \col \in X_v$, $h_u(\col)=s_i^{(u)}$ \algorithmicthen\  $b_{v \rightarrow u}[i]\gets 1$
     \algorithmicelse\  $b_{v \rightarrow u}[i]\gets 0$ \ENDFOR
    \STATE Send $b_{v \rightarrow u}$ and receive $b_{u \rightarrow v}$  to/from $u$, for all $u\in N(v)$.
    \IF {$\exists \col \in X_v$ s.t. $\forall u\in N(v)$, $b_{u \rightarrow v}[h_v(\col)]=0$} 
    \STATE Adopt some such $\col$ as permanent color and broadcast to $N(v)$.
    \ENDIF
\end{algorithmic}
\end{algorithm}

\subsection{Almost-Clique Decomposition}

Similar ideas to those that enable a uniform implementation of \multitrial allow for a uniform implementation of $\eps$-\buddy. Again, we provide pseudocode of the procedure below, and sketch the argument for its correctness.

The algorithm takes place between two nodes $u$ and $v$. They first test whether their degrees differ significantly. If they do, the algorithm stops: the edge is not $\eps$-\buddy. Otherwise, one of the nodes chooses an almost pairwise-independent hash function with few collisions between the IDs of its neighborhood. Then, the nodes pick a random representative multiset over the space of hashes, and compute the sampled hashes that are the image of a single of their neighbors.

If the nodes have few hashes in common, they declare the edge non-$\eps$-\buddy, as having few hashes in common is only likely when the nodes' neighborhoods do not mostly intersect.

When the nodes share a lot of hashes, however, two causes are possible: they either share a large part of their neighborhoods, or the hash function that was picked has many collisions between the two neighborhoods. The rest of the algorithm is devoted to distinguishing the two.

To do so, the nodes apply an error-correcting code to the ID of each of their neighbors. As a result, distinct IDs now differ in a constant fraction of their bits. The nodes then each build a bitstring by concatenating the preimages of the hashes they found to have in common. The two resulting bitstrings are guaranteed to be of small Hamming distance if the nodes genuinely share many neighbors, but must differ in a large fraction of indices if the hashes that the nodes found in common were due to collisions. The nodes sample random indices of these bitstrings using representative multisets, exchange the bits at those indices, estimate the Hamming distance between their bitstrings from those bits, and conclude. 
The idea of using an error-correcting code to increase the Hamming distance between distinct bitstrings has been used previously in communication-focused models, for example \cite{Ambainis96}.

In the following pseudocode, $\mathrm{``\text{\textunderscore}"}$ represents an empty bitstring,  $\enc$ is the encoder of an error correcting code, $x_u.\enc(w)$ is the concatenation of bitstrings $x_u$ and $\enc(w)$. The error correcting code is chosen to have parameter, e.g., $[3b,b,b/2]$, where $b\in \Theta(\log n)$ is the number of bits used to write IDs in the graph, meaning that IDs initially written on $b$ get expanded to $3b$ bits, and that two distinct IDs differ by at least $b/2$ bits after the encoding.

\begin{algorithm}[H]\caption{Uniform {$\eps$-\buddy}, for edge $uv$}
\label{alg:unif_buddy}
  \begin{algorithmic}[1]
    \STATE \algorithmicif{ $d_u > d_v/(1-\eps)$} \algorithmicor { $d_v > d_u/(1-\eps)$} \algorithmicthen{ \algorithmicreturn{ \NO}}
    \STATE Let $\out\gets 6\max(d_u,d_v)/\eps$. $v$ chooses an hash function $h=h^{(\out)}_{i}\in \HFpwi^{\out}$ such that at most $\eps d_v/3$ elements in $N(v)$ are involved in a collision, and sends $(\out,i)$ to $u$.
    \STATE Let $\samp = \min(\bw,\out)$, $u$ and $v$ pick a random representative multiset $S$ of size $\samp$, subset of $\range{\out}$. Let $S =  \set{s_1,\ldots, s_{\samp}}$.
    \FORALL {$i\in \range{\samp}$}
    \STATE \algorithmicif\  $\exists ! w \in N(u)$, $h(w)=s_i$ \algorithmicthen\  $b_{u}[i]\gets 1$ \algorithmicelse\  $b_{u}[i]\gets 0$
    \STATE \algorithmicif\  $\exists ! w \in N(v)$, $h(w)=s_i$ \algorithmicthen\  $b_{v}[i]\gets 1$ \algorithmicelse\  $b_{v}[i]\gets 0$ \ENDFOR
    \STATE $u$ and $v$ exchange $b_u$ and $b_v$
    \STATE \algorithmicif{ $\card{\set{i:b_u[i]\cdot b_v[i] = 1}} \leq (1-3\eps)\samp$} \algorithmicthen{ \algorithmicreturn{ \NO}}
    \STATE $x_u \gets \mathrm{``\text{\textunderscore}"}$, $x_v \gets \mathrm{``\text{\textunderscore}"}$
    \FORALL {$i\in \range{\samp}$ s.t.\ $b_u[i] \cdot b_v[i] =1$}
    \STATE $x_u \gets x_u . \enc(w)$ where $w$ is the unique $w \in N(u)$ s.t.\ $h(w)=s_i$
    \STATE $x_v \gets x_v . \enc(w)$ where $w$ is the unique $w \in N(v)$ s.t.\ $h(w)=s_i$
    \ENDFOR
    \STATE Let $\ell = \mathrm{length}(x_u)$, $\samp' = \min(\bw,\ell)$, $u$ and $v$ pick a random representative multiset $S$ of size $\samp'$, subset of $\range{\ell}$.
    \STATE \algorithmicif{  $\card{\set{i\in[\samp']:x_u[i] \neq x_v[i]}} \geq \eps \samp'$ } \algorithmicthen{ \algorithmicreturn{ \NO}} \algorithmicelse{ \algorithmicreturn{ \YES}}
\end{algorithmic}
\end{algorithm}

\section*{Acknowledgements}
This project was supported by the European Union’s Horizon 2020 Research and Innovation Programme under grant agreement no.\ 755839 and by Icelandic Research Fund grants no.\ 174484 and 217965. Part of the work was done while T.~Tonoyan was with the CS Department of the Technion, Israel.


\bibliographystyle{alpha}
\bibliography{references}

\appendix

\section{Concentration Bounds}
\label{app:concentration}

\begin{lemma}[Chernoff bounds]\label{lem:basicchernoff}
Let $\{X_i\}_{i=1}^r$ be a family of independent binary random variables with $\Pr[X_i=1]=q_i$, and let $X=\sum_{i=1}^r X_i$. For any $\delta>0$, $\Pr[|X-\Exp[X]|\ge \delta\Exp[X]]\le 2\exp(-\min(\delta,\delta^2) \Exp[X]/3)$.
\end{lemma}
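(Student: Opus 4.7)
The plan is to prove this via the standard Chernoff / moment-generating-function (MGF) method, handling the upper and lower tails separately and then combining via a union bound on the two.

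For the upper tail $\Pr[X \geq (1+\delta)\mu]$ (where $\mu = \Exp[X] = \sum_i q_i$), I would apply Markov's inequality to $e^{tX}$ for a parameter $t > 0$ to be optimized:
\[
\Pr[X \geq (1+\delta)\mu] = \Pr[e^{tX} \geq e^{t(1+\delta)\mu}] \leq \frac{\Exp[e^{tX}]}{e^{t(1+\delta)\mu}}.
\]
By independence of the $X_i$, the MGF factors as $\Exp[e^{tX}] = \prod_i \Exp[e^{tX_i}] = \prod_i \bigl(1 + q_i(e^t - 1)\bigr)$. Using $1+x \leq e^x$, this is at most $e^{\mu(e^t-1)}$. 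Choosing $t = \ln(1+\delta)$ yields the classical closed form
\[
\Pr[X \geq (1+\delta)\mu] \leq \left(\frac{e^\delta}{(1+\delta)^{1+\delta}}\right)^\mu.
\]

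The main routine step (and the only place where care is needed) is to massage the closed form into the stated exponent $-\min(\delta,\delta^2)\mu/3$. For $\delta \in (0,1]$ one uses the Taylor expansion $(1+\delta)\ln(1+\delta) - \delta \geq \delta^2/3$, giving the bound $\exp(-\delta^2 \mu/3)$. For $\delta > 1$, one uses $(1+\delta)\ln(1+\delta) - \delta \geq \delta(\ln 2)/\ldots$, and the standard estimate yields $\exp(-\delta \mu / 3)$. Together these give $\exp(-\min(\delta,\delta^2)\mu/3)$.

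For the lower tail $\Pr[X \leq (1-\delta)\mu]$ (only meaningful when $\delta \in (0,1]$), I would symmetrically apply Markov to $e^{-tX}$ for $t > 0$, obtaining
\[
\Pr[X \leq (1-\delta)\mu] \leq \left(\frac{e^{-\delta}}{(1-\delta)^{1-\delta}}\right)^\mu \leq \exp(-\delta^2 \mu/2),
\]
which is stronger than the $\exp(-\delta^2\mu/3)$ needed. Finally, a union bound over the two tails introduces the factor of $2$ in the statement, completing the proof. The main obstacle, such as it is, is purely calculational: verifying the two elementary inequalities on $(1\pm\delta)\ln(1\pm\delta)$ that convert the tight closed form into the uniform $\min(\delta,\delta^2)/3$ exponent.
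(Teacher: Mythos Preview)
Your proposal is correct and is the standard textbook MGF argument for Chernoff bounds. The paper does not actually prove this lemma: it is stated in the appendix on concentration bounds as a known fact, with no proof given. So there is nothing to compare against; your write-up simply supplies the omitted standard derivation.
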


\begin{lemma}[Hoeffding's inequality \cite{Hoeffding}]
\label{lem:chernoff-various-ranges}
Let $X_1 \ldots X_n$ be $n$ independent random variables distributed
in $\range{a_i,b_i}$, $X := \sum_{i=1}^n X_i$ their sum. For $t >0$:
\[
\Pr\event{\abs{X-\Exp[X]} > t} \leq 2\exp \parens*{ - \frac {2 \cdot t^2} {\sum_i (b_i - a_i)^2} }~.\]
\end{lemma}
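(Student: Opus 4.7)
The plan is to prove the bound through the classical Cram\'er--Chernoff exponential-moment method, with Hoeffding's lemma on the moment generating function of a bounded, mean-zero random variable as the key auxiliary tool. First I would fix a parameter $\lambda>0$ and apply Markov's inequality to $e^{\lambda(X-\Exp\event{X})}$, yielding
\[
\Pr\event{X-\Exp\event{X}\ge t} \le e^{-\lambda t}\,\Exp\event*{e^{\lambda(X-\Exp\event{X})}}.
\]
Because the $X_i$ are independent, the expectation on the right factorizes as $\prod_{i=1}^n \Exp\event*{e^{\lambda(X_i-\Exp\event{X_i})}}$, reducing matters to a one-variable bound on each factor.

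The heart of the argument is Hoeffding's lemma: if $Y$ is a mean-zero random variable with $Y\in\range{a,b}$, then $\Exp\event*{e^{\lambda Y}}\le \exp(\lambda^2(b-a)^2/8)$. I would derive this by using convexity of $x\mapsto e^{\lambda x}$ to bound $e^{\lambda y}$ for $y\in\range{a,b}$ by the affine interpolant between $(a,e^{\lambda a})$ and $(b,e^{\lambda b})$, taking expectations, and using $\Exp\event{Y}=0$ to get a bound purely in terms of $a,b,\lambda$. Setting $p=-a/(b-a)\in[0,1]$ and $u=\lambda(b-a)$, this reduces to showing $\varphi(u):=\log\parens*{(1-p)e^{-pu}+p\,e^{(1-p)u}}\le u^2/8$; one checks $\varphi(0)=\varphi'(0)=0$ and $\varphi''(u)=q(1-q)\le 1/4$ for a suitable $q\in(0,1)$, so Taylor's theorem gives the claim. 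Applied to each centered summand $X_i-\Exp\event{X_i}$, which lives in an interval of length $b_i-a_i$, this yields
\[
\Exp\event*{e^{\lambda(X-\Exp\event{X})}} \le \exp\parens*{\tfrac{\lambda^2}{8}\sum_{i=1}^n (b_i-a_i)^2}.
\]

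Combining with the Markov step gives the upper-tail estimate $\exp\parens*{-\lambda t+\tfrac{\lambda^2}{8}\sum_i(b_i-a_i)^2}$. I would then optimize over $\lambda>0$: the minimizer is $\lambda^\star = 4t/\sum_i(b_i-a_i)^2$, and substituting produces the target bound $\exp\parens*{-2t^2/\sum_i(b_i-a_i)^2}$. The symmetric argument applied to $-X$ controls the lower tail, and a union bound contributes the factor $2$. The only non-routine ingredient is the Taylor estimate inside Hoeffding's lemma; everything else is a standard exponential-moment optimization, so I do not anticipate a substantive obstacle beyond being careful with the second-derivative bound on $\varphi$.
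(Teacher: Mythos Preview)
Your proof is correct and follows the standard Cram\'er--Chernoff route: Markov on the exponential moment, Hoeffding's lemma via convexity plus the Taylor bound $\varphi''\le 1/4$, optimization over $\lambda$, and a union bound for the two tails. There is nothing to fault in the outline.

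The paper itself does not prove this lemma at all: it is stated in the appendix on concentration bounds with a citation to Hoeffding's original paper and no accompanying argument. So there is no ``paper's own proof'' to compare against; you have supplied a proof where the authors simply invoked the literature.
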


We use the following variants of Chernoff bounds for dependent random variables. The first one is obtained, e.g., as a corollary of Lemma 1.8.7 and Thms.\ 1.10.1 and 1.10.5 in~\cite{Doerr2020}.

\begin{lemma}[Martingales \cite{Doerr2020}]\label{lem:chernoff}
Let $\{X_i\}_{i=1}^r$ be binary random variables, and $X=\sum_i X_i$.
    If $\Pr[X_i=1\mid X_1=x_1,\dots,X_{i-1}=x_{i-1}]\le q_i\le 1$, for all $i\in [r]$ and $x_1,\dots,x_{i-1}\in \{0,1\}$ with $\Pr[X_1=x_1,\dots,X_r=x_{i-1}]>0$, then for any $\delta>0$,
    \begin{align*}
    \Pr\event{X\ge(1+\delta)\sum_{i=1}^r q_i}
    &\le \exp\parens*{-\frac{\min(\delta,\delta^2)}{3}\sum_{i=1}^r q_i}~.
    \intertext{
    If $\Pr[X_i=1\mid X_1=x_1,\dots,X_{i-1}=x_{i-1}]\ge q_i$, $q_i\in (0,1)$, for all $i\in [r]$ and $x_1,\dots,x_{i-1}\in \{0,1\}$ with $\Pr[X_1=x_1,\dots,X_r=x_{i-1}]>0$, then for any $\delta\in [0,1]$,
    }
    \Pr[X\le(1-\delta)\sum_{i=1}^r q_i]
    &\le \exp\left(-\frac{\delta^2}{2}\sum_{i=1}^r q_i\right)\ .
    \end{align*}
\end{lemma}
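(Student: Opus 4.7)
\textbf{Proof plan for \cref{lem:chernoff}.} The plan is to run the standard moment generating function (MGF) argument, but with the tower property in place of independence. For the upper tail, fix $t>0$ and apply Markov's inequality to obtain
\begin{equation*}
\Pr[X\ge (1+\delta)\mu]\le e^{-t(1+\delta)\mu}\,\Exp[e^{tX}],\qquad \mu:=\sum_{i=1}^r q_i.
\end{equation*}
The key step is to bound the MGF. I would write $\Exp[e^{tX}]=\Exp[e^{t\sum_{i<r}X_i}\cdot\Exp[e^{tX_r}\mid X_1,\dots,X_{r-1}]]$ and exploit that for a binary variable $X_r$ with conditional success probability $p_r:=\Pr[X_r=1\mid X_1,\dots,X_{r-1}]\le q_r$ (on any history of positive probability),
\begin{equation*}
\Exp[e^{tX_r}\mid X_1,\dots,X_{r-1}]=1+(e^t-1)p_r\le 1+(e^t-1)q_r\le \exp\bigl((e^t-1)q_r\bigr),
\end{equation*}
using $t>0$ (so that $e^t-1\ge 0$) and $1+x\le e^x$. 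Iterating this argument from $i=r$ down to $i=1$ via the tower property yields $\Exp[e^{tX}]\le \exp\bigl((e^t-1)\mu\bigr)$, exactly as in the independent case.

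Having reduced to the same MGF bound as the standard Chernoff inequality, the remainder is a purely analytic optimization of $t$. The plan is to take $t=\ln(1+\delta)$, giving the classical bound $\Pr[X\ge(1+\delta)\mu]\le\bigl(e^\delta/(1+\delta)^{1+\delta}\bigr)^{\mu}$, and then invoke the well-known inequality $e^\delta/(1+\delta)^{1+\delta}\le\exp(-\min(\delta,\delta^2)/3)$ to conclude $\Pr[X\ge(1+\delta)\mu]\le\exp(-\min(\delta,\delta^2)\mu/3)$. This handles the upper tail.

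For the lower tail, I would apply the analogous argument with $t<0$, so that $e^t-1<0$ and the direction of the inequality on conditional MGFs flips in our favor when $p_r\ge q_r$: specifically,
\begin{equation*}
\Exp[e^{tX_r}\mid X_1,\dots,X_{r-1}]=1+(e^t-1)p_r\le 1+(e^t-1)q_r\le \exp\bigl((e^t-1)q_r\bigr),
\end{equation*}
since $(e^t-1)p_r\le(e^t-1)q_r$ when $e^t-1<0$ and $p_r\ge q_r$. Iterating gives $\Exp[e^{tX}]\le\exp((e^t-1)\mu)$, and Markov yields $\Pr[X\le(1-\delta)\mu]\le\exp((e^t-1)\mu-t(1-\delta)\mu)$. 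Choosing $t=\ln(1-\delta)$ produces the standard bound $\bigl(e^{-\delta}/(1-\delta)^{1-\delta}\bigr)^\mu$, which is at most $\exp(-\delta^2\mu/2)$ for $\delta\in[0,1]$ by the familiar calculus estimate.

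\textbf{Main obstacle.} The only nontrivial point is the very first reduction, namely justifying that the conditional stochastic-dominance bounds $p_r\le q_r$ (resp.\ $p_r\ge q_r$) are enough to replace genuine independence in the MGF computation. Once that is done, everything else is textbook. I would therefore spend care on writing the tower-property induction cleanly (peeling off one $X_i$ at a time, conditioning on the $\sigma$-algebra generated by $X_1,\dots,X_{i-1}$, and using that the inner expectation is a deterministic function of these variables so it can be pulled out against $e^{t\sum_{j<i}X_j}$), and note that the hypothesis ``for all histories of positive probability'' is precisely what is needed to apply the bound almost surely at each step.
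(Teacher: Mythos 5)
Your proposal is correct. Note, however, that the paper does not prove \cref{lem:chernoff} at all: it is quoted as a corollary of Lemma~1.8.7 and Theorems~1.10.1 and~1.10.5 of \cite{Doerr2020}, i.e., the route indicated there is to use the hypotheses on the conditional probabilities to couple $X$ with (and stochastically dominate it by, resp.\ have it dominate) a sum of \emph{independent} Bernoulli variables with parameters $q_i$, and then invoke the standard multiplicative Chernoff bounds for the independent case. Your argument is a genuinely different, self-contained route: you redo the moment-generating-function computation directly, replacing independence by the tower property and the pointwise bound $\Exp[e^{tX_i}\mid X_1,\dots,X_{i-1}]=1+(e^t-1)p_i\le \exp((e^t-1)q_i)$ (with the sign of $e^t-1$ matching the direction of the hypothesis), after which the optimization over $t$ and the elementary estimates $\delta-(1+\delta)\ln(1+\delta)\le-\min(\delta,\delta^2)/3$ and $-\delta-(1-\delta)\ln(1-\delta)\le-\delta^2/2$ are exactly as in the textbook proof. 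The domination route buys modularity (one black-boxes the independent-case bounds), while your route avoids constructing the coupling and makes transparent that only the one-sided conditional bounds are used. Two small points to handle when writing it up: the boundary case $\delta=1$ of the lower tail, where $t=\ln(1-\delta)$ is not finite, should be treated by a limiting argument or directly via $\Pr[X\le 0]\le\prod_i(1-q_i)\le e^{-\mu}\le e^{-\mu/2}$; and when peeling off $X_i$ you should state, as you do, that the conditional-MGF bound holds almost surely because the hypothesis covers every history of positive probability, so it can be pulled out against $e^{t\sum_{j<i}X_j}$.
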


A function $f(x_1,\ldots,x_n)$ is  \emph{$c$-Lipschitz} iff changing any single $x_i$ affects the value of $f$ by at most $c$, and $f$ is  \emph{$r$-certifiable} iff whenever $f(x_1,\ldots,x_n) \geq s$ for some value $s$, there exist $r\cdot s$ inputs $x_{i_1},\ldots,x_{i_{r\cdot s}}$ such that knowing the values of these inputs certifies $f\geq s$ (i.e., $f\geq s$ whatever the values of $x_i$ for $i\not \in \{i_1,\ldots,i_{r\cdot s}\}$).
\begin{lemma}[Talagrand's inequality~\cite{DP09}]
\label{lem:talagrand}
Let $\{X_i\}_{i=1}^n$ be $n$ independent random variables and $f(X_1,\ldots,X_n)$ be a $c$-Lipschitz $r$-certifiable function; then for $t\geq 1$,
\[\Pr\event*{\abs*{f-\Exp[f]}>t+30c\sqrt{r\cdot\Exp[f]}}\leq 4 \cdot \exp\parens*{-\frac{t^2}{8c^2r\Exp[f]}}~.\]
\end{lemma}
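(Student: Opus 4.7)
The plan is to derive this bound from Talagrand's classical convex-distance concentration inequality on product measures. Recall that for $\Omega = \prod_i \Omega_i$ with a product measure $\mu$, the convex distance from $x$ to a measurable set $A$ is
\[d_T(x,A) = \sup_{\alpha \in \mathbb{R}_{\geq 0}^n,\ \|\alpha\|_2 \leq 1}\ \inf_{y \in A}\ \sum_i \alpha_i \mathbf{1}\event*{x_i \neq y_i}~,\]
and Talagrand's inequality states that $\mu(A)\cdot\mu(\set*{x : d_T(x,A) \geq s}) \leq \exp(-s^2/4)$. I would let $m$ be a median of $f$ and apply the inequality to $A_- = \set*{f \leq m}$ and $A_+ = \set*{f \geq m}$, each of $\mu$-measure at least $1/2$.

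The main technical step is to show that $f(y) \geq m+u$ with $u > 0$ forces $d_T(y, A_-) \geq u/(c\sqrt{r\cdot f(y)})$. By $r$-certifiability, fix a set $W$ of at most $r\cdot f(y)$ coordinates whose values at $y$ certify the lower bound $f \geq f(y)$. For an arbitrary $x \in A_-$, define the hybrid $y'$ agreeing with $y$ on $W$ and with $x$ elsewhere. Certifiability yields $f(y') \geq f(y) \geq m+u$, while $c$-Lipschitzness gives $f(x) \geq f(y') - c\cdot \abs*{\set*{i \in W : x_i \neq y_i}}$. Since $x \in A_-$ forces $f(x) \leq m$, we conclude $\abs*{\set*{i \in W : x_i \neq y_i}} \geq u/c$. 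Picking $\alpha$ to be the unit vector uniformly supported on $W$ converts this count into the stated lower bound on the convex distance. Observing that $(f(y)-m)/\sqrt{f(y)}$ is monotone increasing in $f(y)$, this is at least $u/(c\sqrt{r(m+u)})$ uniformly on the event. Substituting into Talagrand's inequality gives $\Pr\event*{f \geq m+u} \leq 2\exp(-u^2/(4c^2 r(m+u)))$, and the symmetric argument using $A_+$ handles the lower tail.

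The remaining step, which I expect to be the main bookkeeping obstacle rather than a conceptual hurdle, is the passage from the median $m$ to the mean $\Exp\event*{f}$. Integrating the tail bound just obtained shows $\abs*{m - \Exp\event*{f}} = O(c\sqrt{r\cdot \max(m,\Exp\event*{f})})$, so both quantities are within the same order of magnitude up to lower-order terms. The constants $30$, $4$, and $8$ in the statement are chosen precisely to absorb this discrepancy: a deviation of more than $t + 30c\sqrt{r\cdot \Exp\event*{f}}$ from $\Exp\event*{f}$ forces a deviation of order at least $t$ from $m$, at which point the median-based bound yields the claimed exponential decay, with the prefactor $4$ accounting for the union bound over the two tails together with the factor-$2$ loss from $\mu(A_\pm) \geq 1/2$.
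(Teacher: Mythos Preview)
The paper does not prove this lemma; it is stated without proof in the appendix of concentration bounds and attributed to~\cite{DP09}. Your sketch is precisely the standard derivation (as found, e.g., in Molloy--Reed or Dubhashi--Panconesi): reduce to Talagrand's convex-distance inequality, use $r$-certifiability plus $c$-Lipschitzness to lower-bound the convex distance to a sublevel set, obtain concentration around the median, and then replace the median by the mean at the cost of the additive $30c\sqrt{r\Exp[f]}$ term and the specific constants. There is nothing to compare against here, and your outline is sound.
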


\begin{lemma}[Lemma 24 in \cite{HKNT21}]
\label{lem:talagrand-difference}
Let $\set*{X_i}_{i=1}^n$ be $n$ independent random variables. Let $\set*{A_j}_{j=1}^k$ and $\set*{B_j}_{j=1}^k$ be two families of
events that are functions of the $X_i$'s. Let $f=\sum_{j\in[k]} \mathbb{I}_{A_j}$, $g=\sum_{j\in[k]} \mathbb{I}_{A_j \cap \overline{B}_j}$,\footnote{$\mathbb{I}$ denotes the indicator random variable of an event.} and $h=f-g$ be such that $f$ and $g$ are $c$-Lipschitz and $r$-certifiable w.r.t.\ the $X_i$'s, and $\Exp[h] \geq \alpha \Exp[f]$ for some constant $\alpha \in (0,1)$. Let $\delta \in (0,1)$. Then for $\Exp[h]$ large enough:
\[\Pr\event*{\abs*{h - \Exp[h]} > \delta \Exp[h]} \leq \exp(-\Omega(\Exp[h]))~.\]
\end{lemma}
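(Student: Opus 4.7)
The plan is to deduce concentration of $h$ from Talagrand's inequality (\cref{lem:talagrand}) applied separately to $f$ and to $g$, then combine via the triangle inequality. Writing $h = f - g$ gives
\[
\abs*{h - \Exp[h]} \;\le\; \abs*{f - \Exp[f]} + \abs*{g - \Exp[g]},
\]
so it suffices to show, with probability $1 - \exp(-\Omega(\Exp[h]))$ each, that $\abs*{f - \Exp[f]} \le \delta \Exp[h]/2$ and $\abs*{g - \Exp[g]} \le \delta \Exp[h]/2$, and then union bound.

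The hypothesis $\Exp[h] \ge \alpha \Exp[f]$ is the crucial ingredient: it yields $\Exp[f] \le \Exp[h]/\alpha$, and since $g \le f$ pointwise (because $\mathbb{I}_{A_j \cap \overline{B}_j} \le \mathbb{I}_{A_j}$), also $\Exp[g] \le \Exp[f] \le \Exp[h]/\alpha$. Both expectations are thus at most a constant times $\Exp[h]$. Applying \cref{lem:talagrand} to $f$ with the threshold $t := \delta \Exp[h]/2 - 30 c \sqrt{r \Exp[f]}$: once $\Exp[h]$ exceeds a constant depending on $c, r, \alpha, \delta$, the subtracted term is at most $\delta \Exp[h]/4$, so $t \ge \delta \Exp[h]/4$. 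The resulting bound is
\[
\Pr\event*{\abs*{f - \Exp[f]} > \delta \Exp[h]/2} \;\le\; 4 \exp\!\parens*{-t^2/(8 c^2 r \Exp[f])} \;\le\; 4 \exp\!\parens*{-\alpha \delta^2 \Exp[h]/(128 c^2 r)},
\]
which is $\exp(-\Omega(\Exp[h]))$. The analogous calculation with $\Exp[g] \le \Exp[h]/\alpha$ gives the same bound for $g$ (the corner case $\Exp[g] = 0$ is trivial, as then $g \equiv 0$).

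The main obstacle, and the reason the statement is nonobvious, is that one cannot apply Talagrand directly to $h$: even though $f$ and $g$ are assumed $r$-certifiable, their difference is generally not, since a certificate for $h \ge s$ would have to simultaneously certify a lower bound on $f$ and an upper bound on $g$. The indirect route via triangle inequality sidesteps this, but pays a price: the Talagrand additive correction $30 c \sqrt{r \Exp[f]}$ must fit comfortably below $\delta \Exp[h]/4$, and this is exactly where $\Exp[h] \ge \alpha \Exp[f]$ earns its keep, by keeping $\Exp[f]$ within a constant factor of $\Exp[h]$. Without this hypothesis, the deviation threshold for $f$ could dwarf $\delta \Exp[h]$ and no useful concentration for $h$ would follow.
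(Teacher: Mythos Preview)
The paper does not actually prove this lemma; it is quoted verbatim from \cite{HKNT21} as a concentration bound used elsewhere in the argument. Your proof is correct and is exactly the intended one: apply Talagrand separately to the two $c$-Lipschitz, $r$-certifiable functions $f$ and $g$, use $\Exp[h] \ge \alpha \Exp[f]$ (and $g \le f$) to control both $\Exp[f]$ and $\Exp[g]$ by $\Exp[h]/\alpha$, absorb the $30c\sqrt{r\Exp[f]}$ correction into $\delta\Exp[h]/4$ once $\Exp[h]$ is large enough, and combine via the triangle inequality and a union bound. Your explanation of why one cannot apply Talagrand directly to $h$ (lack of certifiability for a difference) is also the right diagnosis.
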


\section{Explicit Representative Multisets}
\label{app:representative-multisets}

For completeness, we give an explicit construction of representative multisets in this section.

Intuitively, a \emph{sampler} for a domain of size $M$ is a function that takes some number $N$ of perfect random bits as input and outputs $t$ elements $z_1,\ldots,z_t$ in $\range{M}$. A \emph{hitting sampler} gives the guarantee that the sampled outputs hit any large enough subset of $\range{M}$ with some probability, while an \emph{averaging sampler} gives the guarantee that for any function $f$ with output in $[0,1]$, the average value of $f$ on the sampled elements is close to the average value of $f$ over the whole domain $\range{M}$.

\begin{definition}[Averaging Samplers]
A function $\Sampler:\range{N} \rightarrow\range{M}^t$ is a $(\delta,\eps)$-\emph{averaging sampler} if for every function $f: \range{M} \rightarrow \range{0,1}$, we have:
\[\Pr_{(z_i)^t_{i=1} \gets \Sampler(U_{\range{N}})}\event*{\frac 1 t \sum_{i=1}^t f(z_i) - \frac{1}{M} \sum_{x\in \range{M}} f(x) > \eps} \leq \delta~.\]
\end{definition}

Averaging samplers are relevant to our setting in the following way: let a node $v$ of palette $\pal_v$ in a color space $\colspace$ of size $\in \range{\Delta, 2\Delta}$ have slack at least $4\eps \card{\colspace}$ for some constant $\eps \in (0,1)$. Let its uncolored neighbors try a total of at most $2\eps \card{\colspace}$ colors in a given round. Let $S$ be the set of colors in $\pal_v$ that is not tried by any of its neighbors, of size at least $2\eps \card{\colspace}$. Let $f$ be the indicator function for $S$. Then, if we use a $(\delta,\eps)$-averaging sampler over $\colspace$ to sample $t$ elements, then with probability at least $1-\delta$, at least $\eps t$ of the sampled elements are in $S$. This means that $v$ trying any of the sampled colors succeeds with probability at least $1-\eps$, conditioned on an event of probability $1-\delta$. If we furthermore assume that the palette of $v$ has size comparable to its slack, e.g., at most $8\eps\card{\colspace}$, and condition on the event that the random sampler does not over-sample elements from $\pal_v$ by more than $\eps$, then having $v$ try a random sampled color that is in its palette succeeds with constant $\Omega(1)$ probability (i.e., independent of $\eps$).

Used that way, a averaging sampler replaces representative sets in most use cases. It may be interpreted as being a family of multisets, by considering the output of an averaging sampler on all possible choices of random bits. Taking $\delta = 1/\poly(n)$ and $\eps\in \Theta(1)$, there exists explicit averaging samplers that use $N=\Theta(\log n)$ random bits as input and sample $t=\Theta(\log \card{\colspace} + \log n)$ elements.

\section{Definitions Related to Almost-Clique Decompositions}
\label{app:acd}

\begin{definition}[Sparsity]
\label{def:sparsity-alt}
The \emph{(local) sparsity} $\zeta_v$ of node $v$ is defined as $\frac{1}{d_v}\cdot\left[\binom{d_v}{2}-m(N(v))\right]$. Node $v$ is \emph{$\zeta$-sparse} if $\zeta_v\ge \zeta$, and \emph{$\zeta$-dense} if $\zeta_v\le \zeta$.
\end{definition}

\begin{definition}[Disparity, Discrepancy \& Unevenness]
\label{def:unevennes}
The \emph{disparity} of $u$ towards $v$ is defined as $\disc_{u,v} = \card{\pal_u \setminus \pal_v} / \card{\pal_u}$. The \emph{discrepancy} of node $v$ is defined as $\disc_v=\sum_{u \in N(v)} \disc_{u,v}$, and
its \emph{unevenness} is defined as $\unev_v=\sum_{u \in N(v)} \frac{\max(0,d_u - d_v)}{d_u+1}$. Node $v$ is \emph{$\disc$-discrepant} if $\disc_v\ge \disc$, \emph{$\unev$-uneven} if $\unev_v\ge \unev$.
\end{definition}

\begin{definition}[($\deg+1$) ACD \cite{AA20}] \label{def:acd}
Let  $G=(V,E)$ be a graph and $\eacd,\espa\in (0,1)$ be parameters. A partition $V=\Vsp \sqcup \Vun \sqcup \Vdense$ of $V$, with $\Vdense$ further partitioned into $\Vdense = \bigsqcup_{C \in \acset} C$,
is an \emph{almost-clique decomposition (ACD)} for $G$ if: \begin{compactenum}
    \item Every $v \in \Vsp$ is $\espa d_v$-sparse\ ,
    \item Every $v \in \Vun$ is $\espa d_v$-uneven\ ,
    \item For every $C \in \acset$ and $v\in C$, $d_v \leq (1+\eacd)\card{C}$\ ,
    \item For every $C \in \acset$ and $v\in C$, $(1+\eacd)|N_{C}(v)|\ge \card{C}$\ .
\end{compactenum}
\end{definition}

The slackability of an almost-clique is defined as $\barsigma_C = \min_{v\in C} \barsigma_v$. As we deal with nodes of degree between $\log^7 \Delta$ and $\Delta$, we set a threshold $\ell = \log^{2.1} \Delta$ and declare almost-clique of lower slackability to be \emph{low-slack}, and almost-clique of higher slackability to be \emph{high-slack}.

\section{Final Details of the Algorithm for (Degree+1)-List-Coloring in CONGEST}
\label{app:details}

We give here the remaining details of the implementation of a $(\deg+1)$-list-coloring algorithm in \CONGEST.
We explain how to find good enough ``leaders" of almost-cliques in \cref{sec:leader}, and compute ``put-aside" sets in \cref{sec:coloring-put-aside}, and finally how to deal with large color values in \cref{sec:large-colors}.
We first give an informal description of the \degoLC algorithm of~\cite{HKNT21}.

\paragraph{Algorithm overview.} The \degoLC algorithm of~\cite{HKNT21} consists of up to $O(\log^* n)$ phases in which all the nodes whose degree falls within a range of the form $\range{\log^7 x, x}$ get colored, w.h.p. Each phase takes at most $\poly(\log \log n)$ rounds to complete, but a phase dealing only with nodes of degree $\log^7 n$ or higher can achieve its task in merely $O(\log^* n)$ rounds, resulting in an $O(\log^* n)$ algorithm when given a graph of minimum degree $\log^7 n$.

Each phase starts with computing an almost-clique decomposition, which partitions the nodes in \emph{uneven}, \emph{sparse}, and \emph{dense} vertices. Dense vertices are themselves partitioned into \emph{almost-cliques}, highly connected (clique-like) subgraphs of diameter at most $2$. See \cref{app:acd} for formal definitions of $\Vun$, $\Vsp$, $\Vde$, and of almost-clique decompositions. Our primitive \estimatesimilarity can be directly used to compute such a decomposition, in a process which we was explained in \cref{sec:acd}.

Within each phase, the algorithm first deals with all the sparse and uneven nodes, then all the dense nodes. In both cases, slack is generated by first having each node try a random color of its palette with some constant probability. The only possible difficulty in implementing this part in \CONGEST is that the nodes may have to communicate colors whose description does not fit in $O(\log n)$ bits. We explain how we can deal with a color space of size up to $\exp(n^{\Theta(1)})$ in \cref{sec:large-colors}. From there, the algorithms differ between the dense and the non-dense case. 

When dealing with sparse and uneven nodes, the algorithm identifies a set $\Vst$, consisting of sparse nodes for which \slackgeneration might not generate permanent slack but which can get temporary slack by being colored early. Instead of identifying this set before slack generation as in the \LOCAL algorithm, which might be hard to do in \CONGEST, we simply let the success of the slack generation process guide our partitioning. More precisely, we let each node $v$ join $\Vst$ if it received less than $\ehat d_v$ permanent slack but is adjacent to at least $\ehat d_v$ uncolored nodes that did. A node that neither received permanent slack nor is adjacent to many nodes that did is added to a set \BAD, which is either empty or shattered due to the probability of success of slack generation.
Indeed, while this process may fail at providing the needed slack to some nodes, \cref{L:slackgen-sparse} (from \cite{HKNT21}) shows that this happens to a node with probability $d^{\omega(1)}$ when we consider nodes of degree in the range $[\log^7 d, d]$. This probability is low enough w.r.t.\ $d$ that all nodes receive the slack they need w.h.p.\ when they are all of degree $\log^7 n$ or more. It is also low enough w.r.t.\ $d$ that when considering nodes of lower degree, the subgraph induced by nodes that do not receive the slack they need is \emph{shattered}, i.e., has $\poly(\log n)$-sized connected components which can be efficiently colored with by a deterministic algorithm later.
Our process may add some extra nodes to $\Vst$ (in the unlikely event that a node supposed to get permanent slack does not obtain it) and remove some nodes from it (in the event that a node of $\Vst$ gets some permanent slack) compared to the original, fixed definition of $\Vst$ as a set of nodes adjacent to many nodes likely to get permanent slack, but it guarantees nonetheless what matters, that every node gets slack (temporary or permanent) with probability $1-d^{-\omega(1)}$.

\begin{restatable}{proposition}{slacklemmasparse}[Proposition 1 in \cite{HKNT21}]\label{L:slackgen-sparse}
Assume all nodes have degree at least $s \geq C\cdot \ln^2 \Delta$ for some universal constant $C$. There is an $O(1)$-round procedure that identifies a subset $\Vst\subseteq \Vsp$ such that after running {\slackgeneration} in the subgraph induced by $\Vsp \cup \Vun$:
\begin{compactenum}
    \item Each node $v$ in $\Vst$ has $\Omega(d_v)$ uncolored neighbors in $V \setminus \Vst$ w.p.\ $1 - \exp(-\Omega(d_v))$, and
    \item Each node $v$ in $\Vun \cup \Vsp\setminus \Vst$ has slack $\Omega(d_v)$, 
    w.p.\ $1 - \exp(-\Omega(\sqrt{s}))$.
\end{compactenum}
For each node, the probability bounds hold even when conditioned on arbitrary random choices outside its 2-hop neighborhood.
\end{restatable}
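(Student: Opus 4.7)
The plan splits into (a) defining the $O(1)$-round identification of $\Vst$, (b) analyzing the permanent slack for nodes outside $\Vst$, and (c) analyzing the number of uncolored neighbors in $V\setminus\Vst$ for nodes inside $\Vst$. I would take \slackgeneration to be the standard one-round procedure: every node in $\Vsp \cup \Vun$ independently becomes active with a fixed constant probability $p$, each active node tries a uniformly random color from its palette, and a trial succeeds iff no active neighbor tries the same color. I would define $\Vst$ to consist of those $v \in \Vsp$ for which fewer than a constant fraction of $v$'s neighbors are ``robust slack contributors,'' where a neighbor $u\in N(v)$ is robust if $u$ is $\Omega(d_v)$-uneven at $v$'s scale or is $\Omega(d_v)$-sparse with degree comparable to $d_v$. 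This predicate depends only on the 2-hop neighborhood of $v$ and is decidable in $O(1)$ \CONGEST rounds via the sparsity estimator of \cref{sec:sparsity} and degree comparisons. The decisive structural property that this definition delivers is that every $v \in \Vun \cup \Vsp \setminus \Vst$ has $\Omega(d_v)$ robust contributing neighbors, while every $v \in \Vst$ has $\Omega(d_v)$ neighbors in $V \setminus \Vst$.

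For part~(2), fix $v \in \Vun \cup \Vsp \setminus \Vst$ and let $\sigma_v$ count colors removed from $\pal_v$ by \slackgeneration. A color is removed either when an active neighbor $u$ permanently adopts a color in $\pal_u \setminus \pal_v$ (the unevenness mechanism) or when two non-adjacent active neighbors of $v$ try the same color in $\pal_v$ (the sparsity mechanism). Summing the appropriate mechanism over the $\Omega(d_v)$ robust contributors yields $\Exp[\sigma_v] = \Omega(d_v)$. To concentrate, I would invoke \cref{lem:talagrand-difference}: $\sigma_v$ can be written as the difference of an $O(1)$-Lipschitz, $O(1)$-certifiable count of candidate slack events and an $O(1)$-Lipschitz, $O(1)$-certifiable count of cancelling collisions, so $\Pr[\sigma_v < \Exp[\sigma_v]/2] \leq \exp(-\Omega(\Exp[\sigma_v])) \leq \exp(-\Omega(d_v))$, which is well below the advertised $\exp(-\Omega(\sqrt{s}))$ bound since $d_v \geq s$.

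For part~(1), fix $v \in \Vst$. By construction $v$ has a set $M \subseteq N(v) \setminus \Vst$ of size $\Omega(d_v)$. Each $u \in M$ is permanently colored in this round only if $u$ becomes active, which happens with probability at most $p$, and the participation bits of distinct nodes are independent. A Chernoff bound (\cref{lem:basicchernoff}) then gives that at least a $(1-2p)$ fraction of $M$ stays uncolored with failure probability $\exp(-\Omega(d_v))$. The 2-hop conditioning statement is immediate in both parts, since the only random variables on which $\sigma_v$ or the uncolored-count depends are the participation bits and color trials of nodes within $v$'s 2-hop neighborhood.

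The main obstacle I anticipate is pinning down the predicate defining $\Vst$. The tension is that for sparse $v$ the natural slack mechanism is \emph{pairs} of colliding active neighbors, so the ``robustness'' of a single neighbor $u$ depends on the other neighbors of $v$ that $u$ can collide with; avoiding circularity while keeping the predicate both decidable in $O(1)$ \CONGEST rounds and strong enough that every $v\in \Vst$ retains many neighbors in $V\setminus\Vst$ is the technical crux. The analogous predicate in the \LOCAL proof of \cite{HKNT21} uses a fuller 2-hop inspection; the constant-accuracy sparsity estimator of \cref{sec:sparsity} provides enough fidelity to port it to \CONGEST, after which the Talagrand and Chernoff computations above are routine.
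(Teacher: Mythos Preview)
The paper does not prove this proposition; it is quoted verbatim from \cite{HKNT21} and cited as such. What the paper \emph{does} is describe a \CONGEST-compatible replacement for the identification of $\Vst$, and that replacement is quite different from your a~priori structural predicate. Rather than trying to predict, before \slackgeneration, which nodes are likely to receive permanent slack, the paper runs \slackgeneration first and defines $\Vst$ \emph{post hoc}: a node $v$ joins $\Vst$ exactly when it failed to receive $\ehat d_v$ permanent slack but has at least $\ehat d_v$ uncolored neighbors that did; a node failing both criteria is placed in a \BAD{} set that is empty or shattered. With this definition, items (1) and (2) of the proposition hold essentially by construction for nodes not in \BAD, and the probabilistic content of the proposition is relocated to the statement that each node lands in \BAD{} with probability $\exp(-\Omega(\sqrt{s}))$, which follows directly from the slack-generation analysis of \cite{HKNT21}. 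This sidesteps entirely the obstacle you flag.

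Your proposal is closer in spirit to the original \LOCAL argument in \cite{HKNT21}, but the gap you yourself identify is real and unresolved: from your definition of $\Vst$ (few ``robust'' neighbors) it does not follow that $v\in\Vst$ has $\Omega(d_v)$ neighbors in $V\setminus\Vst$. A non-robust neighbor $u$ of $v$ need not be in $\Vst$---$u$ could be dense, uneven, or sparse with many robust neighbors of its own---so the inclusion goes the wrong way for the argument you sketch. Resolving this requires the kind of careful 2-hop structural case analysis carried out in \cite{HKNT21}, which you gesture at but do not supply. The paper's post-hoc route is both simpler and avoids this difficulty; your concentration arguments (Talagrand for part~2, Chernoff for part~1) are fine once the structural claim is in hand, but as written the proposal does not establish that claim.
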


Sparse and uneven nodes then run a procedure \slackcolor that, given nodes each with slack $s_v \in \Omega(d_v)$ and $s_v \geq \smin$, where $\smin$ is a lower bound of the slack of every participating nodes known by all of them, colors them in $O(\log^* \smin)$ rounds with probability 
$1-\exp(-\smin^{\Omega(1)})-\Delta\exp(-\Omega(\smin))$. The procedure directly works in \CONGEST if we can give a \CONGEST implement of its main subroutine, \multitrial, which was done in \cref{sec:multitrial}. In \LOCAL, this subroutine simply consists of each node trying $x$ random colors from its palette, which, when the nodes have the needed slack, results in each node getting colored w.p.\ $1-\exp(-\Omega(x))$ (as if each color had an independent constant success probability). The \CONGEST subroutine we give achieves essentially the same performance using representative hash functions, which allow us to (imperfectly) communicate $x$ colors in less than the na\"ive $x \log \card{\colspace}$ bits.

Dense nodes follow a more involved algorithm. In the original \LOCAL algorithm, each almost-clique elects a leader according to a metric called slackability, which combines two measures: sparsity and discrepancy (a measure of how much one's palette differs from those of one's neighbors).
To adapt this step to \CONGEST, we instead elect the leader through a different process which leverages the relative uniformity of sparsity inside each almost-clique and the fact that discrepancy can be estimated by its contribution to a node's slack. We give details of the leader selection process in \cref{sec:leader}.
The almost-clique is then partitioned into inliers (direct neighbors of the leader, sharing many of its neighbors, and of not too high degree) and outliers (other nodes). This partitioning is easily done in \CONGEST, as it only requires nodes to announce whether they are directly connected to the leader, count how many of their neighbors are direct neighbors of the leader, send this count and their degree to the leader, and let the leader pick its inliers according to the $O(\log \Delta)$-bit information it received from each of its in-clique neighbors.

Almost-cliques in which the leader has small slackability (below a threshold related to the degree range) compute so-called ``put-aside" sets, which provide temporary slack for the remaining nodes of the almost-clique. These put-aside sets are sampled by a simple procedure: each inlier joins the put-aside set of its almost-clique according to a biased coin flip, and leaves it if one of its neighbors in another almost-clique also had a positive coin flip. This is readily implemented in \CONGEST.

In addition to using \slackcolor as previously with sparse and uneven nodes, dense nodes use a procedure \synchronizedcolortrial in which the leader randomly distributes colors from its palette to uncolored inliers (not in the put-aside set). The only obstacle in \CONGEST is possibly the size of colors, which is treated in \cref{sec:large-colors}. In the final step of the randomized part of the algorithm, the leader of each almost-clique with a put-aside set learns enough of the palettes of the put-aside elements to color them. Learning enough colors from each palette is done using other nodes of the almost-clique as relays, which we explain in \cref{sec:coloring-put-aside}.

Finally, a part of our randomized algorithm that deals with nodes of degree $o(\log n)$ will likely fail at coloring some nodes. Such nodes are colored with a deterministic algorithm, following the standard shattering framework \cite{BEPSv3}. To deal with large colors in this last phase (the \emph{post-shattering} phase), we compute a network decomposition and have each component compute a hash function without any collision in each node's palette to reduce the space of colors. This allows us to apply a deterministic algorithm whose runtime depends on the size of the space of colors. We explain this in \cref{sec:large-colors}.

A full pseudocode description of the algorithm is available in \cref{app:deg+1_alg} for completeness. 

\subsection{Leader Selection}
\label{sec:leader}

The original \LOCAL algorithm takes as leader of an almost-clique $C$ the node $w:=\arg\min_{v\in C} \barsigma_v$ of minimum slackability within $C$. In addition, it defines the slackability $\barsigma_C$ of an almost-clique $C$ as the slackability of this minimal node. Both selecting the leader properly and estimating its slackability accurately are important for the algorithm, as the leader has unique duties within the clique that not all nodes of the clique are fitted for, and cliques are assigned different behaviors depending on their slackability.

Computing the slackability of a node exactly would be too expensive in \CONGEST, as would computing the slackabilities of all the nodes to find the node with the minimum value. The slackability might also be hard to estimate accurately in low-slackability almost-cliques (in very much the same way that our procedure \estimatesparsity does not give accurate results for nodes of sublinear sparsity).

Fortunately, taking as leader the node of minimum slackability is not necessary. What is necessary is selecting a leader whose slackability is of the same order of magnitude or less than the amount of slack nodes in the almost-clique later have when running \slackcolor. This suffices as the number of nodes staying uncolored after \synchronizedcolortrial is bounded by the slackability of the leader, in expectation. As nodes get slack from a different source depending on whether they are in a low- or high-slack almost-clique, what constitutes a good leader differs slightly between the two types of almost-cliques. It is also important that almost-cliques estimate their slackability sufficiently well so that a very low-slack almost-clique does not consider itself high-slack, or vice-versa.

In low-slack almost-cliques ($\barsigma_C \leq \ell = \log^{2.1} \Delta$), it suffices that we select a leader of slackability $O(\ell)$. This leader might be significantly worse than the true leader $w$ at coloring the almost-clique when running \synchronizedcolortrial. However, it will bring down the number of uncolored nodes to $O(\ell)$, which is sufficient, as put-aside sets provide $\Omega(\ell)$ slack in low-slack almost-cliques. Second, in almost-cliques of higher slack, it suffices that the leader we pick has slackability of order $O(\barsigma_C)$ instead of exactly $\barsigma_C$. As in the low-slack case, the leader will be worst at coloring the almost-clique during \synchronizedcolortrial, but a leader of slackability $O(\barsigma_C)$ preserves that \synchronizedcolortrial likely colors all but $O(\barsigma_C)$ nodes of the almost-clique, which guarantees that the nodes have slack linear in their uncolored degree.

We show that good-enough leader can be selected through a combination of three metrics: anti-degree, external-degree, and a quantity we call \emph{chromatic slack}.

\begin{definition}[Chromatic slack]
\label{def:chromatic-slack}
Let $v$ be a dense node in an almost-clique $C$. Its (in-clique) \emph{chromatic slack} $\kappa_v$ is defined as the number of $v$'s neighbors that adopted a permanent color outside of $v$'s original palette $\pal_v$ during \slackgeneration.
\end{definition}

Our method for selecting a good-enough leader is summarized in \cref{lem:leader-selection}.

\begin{lemma}
\label{lem:leader-selection}
For an almost-clique $C$, let it pick as leader the node $x$: $x = \arg\min_{v \in C} (e_v + a_v + \kappa_v)$. Then:
\begin{itemize}
    \item If $C$ is high-slack, $x$ has slackability $\barsigma_x \in O(\barsigma_C)$, w.p.\ $1-\exp(-\Omega(\barsigma_C))$.
    \item If $C$ is low-slack, $x$ has slackability $\barsigma_x \in O(\ell)$, w.p.\ $1-\exp(-\Omega(\ell))$.
\end{itemize}
\end{lemma}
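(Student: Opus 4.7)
\medskip

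\noindent\textbf{Proof plan for \cref{lem:leader-selection}.}

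The plan is to show that the proxy $Q_v := e_v + a_v + \kappa_v$ is a two-sided, constant-factor approximation of the slackability $\barsigma_v$ simultaneously for every $v \in C$, with the failure probabilities required by the statement. Once this is established, picking $x := \arg\min_{v\in C} Q_v$ immediately yields $Q_x \leq Q_w$ for any true minimizer $w$ of $\barsigma$ in $C$, and translating back through the proxy gives $\barsigma_x = O(\barsigma_w + \ell) = O(\barsigma_C + \ell)$, which matches the two conclusions.

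First I would justify the equivalence $\Exp[Q_v] = \Theta(\barsigma_v) + O(1)$. The structural term $e_v + a_v$ captures the sparsity contribution: for a dense node inside an almost-clique, every missing in-clique edge or out-of-clique edge incident to $v$ contributes a constant to $v$'s sparsity (this is the standard ACD-based bound on $\zeta_v$ that is already used inside \cite{HKNT21} to define slackability). The random term $\kappa_v$ captures the discrepancy contribution: each neighbor $u \in N(v)$ participates in \slackgeneration independently with constant probability $\pgen$, picks a uniformly random color in $\pal_u$, and keeps it if no neighbor of $u$ picks the same; a standard calculation shows that the probability that $u$ contributes to $\kappa_v$ is $\Theta(\disc_{u,v})$, so $\Exp[\kappa_v] = \Theta(\disc_v)$. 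Summing, $\Exp[Q_v]$ is a constant-factor approximation of the combined sparsity+discrepancy quantity used to define $\barsigma_v$.

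Second I would concentrate $\kappa_v$ around $\Exp[\kappa_v]$. Each indicator summand of $\kappa_v$ depends only on the random choices of $u \in N(v)$ and of $u$'s neighbors during the single round of \slackgeneration, so $\kappa_v$ is a $1$-Lipschitz and $O(1)$-certifiable function of the relevant per-vertex random choices. Using Talagrand's inequality (\cref{lem:talagrand} or the variant \cref{lem:talagrand-difference}), or alternatively the martingale Chernoff of \cref{lem:chernoff} in the high-mean regime, I obtain
\[
 \Pr\event*{\,|\kappa_v - \Exp[\kappa_v]| \,>\, \tfrac{1}{2}\Exp[\kappa_v] + O(\ell)\,} \;\leq\; \exp\!\parens*{-\Omega\bigl(\max(\Exp[\kappa_v],\ell)\bigr)} .
\]
A union bound over the $|C|\le \Delta+1$ nodes of the clique loses only a $\Delta$ factor, which is absorbed by the exponent because $\ell = \log^{2.1}\Delta \gg \log \Delta$ and, in the high-slack case, $\barsigma_C \ge \ell$. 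Call the resulting event $\mathcal{E}$; it has probability $1-\exp(-\Omega(\max(\barsigma_C,\ell)))$ and on $\mathcal{E}$ we have $Q_v = \Theta(\barsigma_v) + O(\ell)$ for every $v\in C$ simultaneously.

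On $\mathcal{E}$ the conclusion is immediate: with $w$ a true minimum-slackability node, $Q_x \le Q_w \le c_1 \barsigma_C + c_2 \ell$, and by the lower bound direction of the proxy equivalence, $\barsigma_x \le c_3 Q_x + c_4 \ell \le c_5 \barsigma_C + c_6 \ell$; this gives $\barsigma_x = O(\barsigma_C)$ in the high-slack regime and $\barsigma_x = O(\ell)$ in the low-slack regime. The main obstacle is the first step: pinning down that $\Exp[Q_v]$ really matches $\barsigma_v$ up to constants inside an almost-clique, which requires carefully accounting for the constants coming from \slackgeneration (the try probability $\pgen$ and the \emph{keep} probability $\pdisj$) and matching them against the precise definition of $\barsigma_v$ used in \cite{HKNT21}. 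Everything else is a concentration argument plus a union bound, with the chosen scale $\ell$ engineered precisely so that both directions of the union bound fit inside the failure budgets claimed in the lemma.
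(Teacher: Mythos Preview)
Your plan contains a genuine gap in the ``lower bound direction of the proxy equivalence,'' specifically for the sparsity contribution. You assert that $e_v+a_v$ captures $\zeta_v$ up to constants, but this is only true in one direction: \cref{lem:low-ext-degree,lem:low-anti-degree} give $e_v+a_v \le O(\barsigma_v)$, not $\zeta_v \le O(e_v+a_v)$. A node $v\in C$ with $e_v=a_v=0$ has $N(v)=C\setminus\{v\}$, so $\zeta_v d_v$ equals the number of missing edges inside $C$, which is $\tfrac12\sum_{u\in C} a_u$. Thus $\zeta_v$ is governed by the \emph{average} anti-degree in $C$, not by $a_v$; it can be $\Theta(\eacd|C|)$ while $e_v+a_v=0$ and $\kappa_v=0$. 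Hence the step ``$\barsigma_x \le c_3 Q_x + c_4\ell$'' does not follow, and the whole two-sided proxy strategy breaks down.

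The paper's proof sidesteps exactly this. It never tries to recover $\zeta_x$ from $Q_x$. Instead, from $Q_x \le Q_w$ (with $w$ the true minimizer) it extracts only that $e_x+a_x$ is small, and combines this with $e_w+a_w \le O(\barsigma_C)$ to bound $|N(x)\triangle N(w)| \le e_x+a_x+e_w+a_w$. Since changing one neighbor perturbs sparsity by at most $1$, this transfers the bound $\zeta_w \le \barsigma_C$ to $\zeta_x \le \barsigma_C + |N(x)\triangle N(w)| = O(\barsigma_C)$. The discrepancy part is then handled as you suggest (via the contrapositive of the lower bound in \cref{lem:chromatic-slack}: small $\kappa_x$ forces small $\discC_x$, and the external part of $\disc_x$ is at most $e_x$). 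Your concentration/union-bound machinery is fine and in fact cleaner than the paper's handling of the random node $x$, but you need this neighborhood-comparison step to close the sparsity direction.
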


We prove \cref{lem:leader-selection} through the combination of two structural results from previous works (\cref{lem:low-ext-degree,lem:low-anti-degree}) and a statement on the distribution of chromatic slack.

\begin{lemma}[Lemma 2 in \cite{HKNT21}]
\label{lem:low-ext-degree}
There is a constant $\cext=\cext(\eacd)$
such that $e_v\le \cext\cdot \sigma_v$ holds for every node $v$ in an almost-clique $C$. 
\end{lemma}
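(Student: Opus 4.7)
My plan is to show that every external neighbor of $v$ contributes $\Omega(d_v)$ missing edges in $N(v)$, so the total number of missing edges is $\Omega(e_v \cdot d_v)$, which upon dividing by $d_v$ yields $\sigma_v \ge \Omega(e_v)$. Let $C$ be the almost-clique containing $v$, let $N_C(v) = N(v) \cap C$, and let $V_{\mathrm{ext}}(v) = N(v) \setminus C$, so $|V_{\mathrm{ext}}(v)| = e_v$. From the ACD properties in \cref{def:acd}, $(1+\eacd)|N_C(v)| \ge |C|$ and $d_v \le (1+\eacd)|C|$, which together give $|N_C(v)| \ge d_v/(1+\eacd)^2$. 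Hence $|N_C(v)| = \Theta(d_v)$, where the hidden constant depends only on $\eacd$.

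The heart of the argument is the following structural claim: for every $u \in V_{\mathrm{ext}}(v)$, the number of non-edges between $u$ and $N_C(v)$ is $\Omega(|N_C(v)|) = \Omega(d_v)$. This is where the almost-clique definition must be used in earnest. Since $u \notin C$ but $u$ is adjacent to $v \in C$, $u$ and $v$ are not ``friends'' under the relation that drives the ACD (otherwise $u$ would have been absorbed into $C$ by the almost-clique formation rule of \cite{AA20}). Concretely, either $u$ and $v$ are not $\eacd$-balanced---in which case a suitable degree inequality lets us charge $\Omega(d_v)$ non-edges in $N(v)$ incident to $u$---or $|N(u) \cap N(v)| \le (1-\eacd)\min(d_u,d_v)$, which directly produces $\Omega(d_v)$ nodes of $N(v)$ that are not neighbors of $u$. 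Because $N_C(v)$ makes up a $\Theta(1)$ fraction of $N(v)$, at least $\Omega(|N_C(v)|) = \Omega(d_v)$ of those non-neighbors land in $N_C(v)$, as required.

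Summing the claim over all $u \in V_{\mathrm{ext}}(v)$, the set of non-edges in $N(v)$ with one endpoint in $V_{\mathrm{ext}}(v)$ and the other in $N_C(v)$ has size at least $c \cdot e_v \cdot d_v$ for a constant $c = c(\eacd) > 0$. Each such non-edge contributes to $\binom{d_v}{2} - m(N(v)) = d_v \cdot \sigma_v$, and each is counted at most twice (once per endpoint), so $d_v \cdot \sigma_v \ge (c/2) \cdot e_v \cdot d_v$, giving $e_v \le (2/c) \cdot \sigma_v = \cext \cdot \sigma_v$ with $\cext = \cext(\eacd)$.

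The main obstacle I anticipate is the ``external neighbors are not friends of $v$'' step: the ACD construction in~\cite{HKNT21} assigns nodes to cliques based on a \buddy predicate that only approximately decides the $\eacd$-friend relation, and the constant $\eacd$ is shared across multiple inequalities. The technical work is in unifying these tolerances so that the non-friendship of every $u \in V_{\mathrm{ext}}(v)$ with $v$ yields a clean $\Omega(d_v)$ lower bound on non-edges, and in handling the degenerate case where $d_u$ is much smaller than $d_v$ (where the balance clause, rather than the common-neighborhood clause, does the work). Once this case analysis is wrapped up, the summation and division by $d_v$ are routine.
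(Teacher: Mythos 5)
The paper never proves this lemma; it imports it wholesale from \cite{HKNT21}, so your proposal has to stand on its own, and it has a genuine gap at its central claim. You want every external neighbor $u\in N(v)\setminus C$ to be non-adjacent to $\Omega(d_v)$ nodes of $N_C(v)$, and you effectively read $\sigma_v$ as the local sparsity (missing edges in $N(v)$ divided by $d_v$). Under \cref{def:acd} alone that is false, and so is the inequality $e_v\le \cext\cdot(\text{sparsity of }v)$ itself: take $C$ a clique of size $k$, $C'$ a clique of size $k'\ge k/\eacd$, and let $t=\Theta(\eacd k)$ nodes $u_1,\dots,u_t\in C'$ be adjacent to every node of $C$. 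All four ACD conditions hold for $\acset=\{C,C'\}$, $\Vsp=\Vun=\emptyset$ (for $u_i$ one needs $d_{u_i}=k+k'-1\le(1+\eacd)k'$, which holds by the choice of $k'$), yet for $v\in C$ the neighborhood $N(v)=(C\setminus\{v\})\cup\{u_1,\dots,u_t\}$ is a clique: its sparsity is $0$ while $e_v=t=\Theta(\eacd d_v)$. This is exactly the case your sketch glosses over with ``a suitable degree inequality lets us charge $\Omega(d_v)$ non-edges'': when the balance clause fails because $d_u\gg d_v$ (not $d_u\ll d_v$, which is the easy direction — a low-degree $u$ trivially misses most of $N(v)$), $u$ can be adjacent to \emph{all} of $N_C(v)$ and there are no missing edges to charge. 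Such neighbors must be paid for by something other than sparsity — in \cite{HKNT21} essentially by the disparity/discrepancy contribution of much-higher-degree neighbors (whose palettes are much larger than $v$'s) to the slackability-type quantity $\sigma_v$ — so a pure missing-edge charging scheme cannot establish the lemma.

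A secondary issue is the step ``since $u\notin C$ and $uv\in E$, $u$ and $v$ are not $\eacd$-friends, otherwise $u$ would have been absorbed into $C$.'' This is not a consequence of \cref{def:acd}, which is all the lemma is stated from; it is a property of a particular construction, and even typical constructions do not guarantee it: $u$ may be a friend of $v$ yet fail the density test itself, or be attached to a different almost-clique, and the approximate \buddy predicate deliberately classifies borderline pairs either way. The part of your case analysis driven by the common-neighborhood clause can be recovered directly from conditions 3--4 of \cref{def:acd} for external neighbors of comparable degree, but the high-degree case above is a true obstruction, not a technicality, and it is where the argument must change shape.
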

\begin{lemma}[Lemma 3 in \cite{HKNT21}]
\label{lem:low-anti-degree}
There is a constant $\cant=\cant(\eacd)$
such that $a_v\le \cant\cdot \sigma_v$ holds for any dense node $v$.
\end{lemma}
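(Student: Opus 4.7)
The goal is $a_v\le \cant\cdot\sigma_v$ for every dense $v\in C\in\acset$, with $\cant$ depending only on $\eacd$. Since $d_v\sigma_v=\binom{d_v}{2}-m(N(v))$ counts the non-edges contained in $N(v)$, the plan is to exhibit $\Omega(a_v\cdot d_v)$ such non-edges by exploiting the near-cliqueness of $C$ encoded in \cref{def:acd}.

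First I would record the immediate ACD bounds. Condition~4 gives $|N_C(w)|\ge |C|/(1+\eacd)$ for every $w\in C$, so the anti-degree of any node of $C$ is at most $\eacd|C|/(1+\eacd)=O(\eacd)|C|$ (in particular $a_v=O(\eacd)|C|$), and the external degree of every $w\in C$ is also $O(\eacd)|C|$ via condition~3, which states $d_w\le (1+\eacd)|C|$. Writing $B_v=N_C(v)$, $E_v=N(v)\setminus C$, and $A_v=C\setminus N[v]$, we obtain $|B_v|=|C|-1-a_v$ and that $d_v,|B_v|,|C|$ are all within a $(1+\eacd)$-factor of one another.

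The core of the argument is a double counting over pairs in $B_v\times A_v$. On one hand, every anti-neighbor $u\in A_v$ satisfies $|N(u)\cap B_v|=|N_C(u)\cap N_C(v)|\ge (1-O(\eacd))|C|$ by inclusion--exclusion inside $C$. On the other hand, for every $w\in B_v$ the degree-budget equation $d_w=1+|N(w)\cap B_v|+|N(w)\cap A_v|+|N(w)\cap E_v|+|N(w)\setminus (C\cup E_v)|$ yields $|N(w)\cap N(v)|\le d_w-1-|N(w)\cap A_v|$, because neighbors of $w$ lying in $A_v$ are excluded from $N(v)$. Summing the pointwise bound over $w\in B_v$ and swapping $\sum_{w\in B_v}|N(w)\cap A_v|=\sum_{u\in A_v}|N(u)\cap B_v|$ yields
\[
2\,\#\mathrm{NE}(B_v)+\#\mathrm{NE}(B_v,E_v) \;\ge\; \sum_{w\in B_v}(d_v-d_w)+a_v(1-O(\eacd))|C|,
\]
where $\#\mathrm{NE}(\cdot)$ counts non-edges. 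Since every $d_w$ differs from $d_v$ by at most $O(\eacd)|C|$, the first summand on the right is bounded in absolute value by $O(\eacd)|B_v||C|$. Provided $a_v$ is above a threshold of the form $C_1\eacd|C|$ with $C_1$ a sufficiently large absolute constant, the $a_v|C|$ term dominates, so $d_v\sigma_v=\#\mathrm{NE}(N(v))\ge \Omega(a_v|C|)=\Omega(a_v d_v)$, i.e., $\sigma_v\ge \Omega(a_v)$.

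The main obstacle is the degenerate regime $a_v\le C_1\eacd|C|$, where the above counting is too coarse to conclude on its own. Here the argument is closed by combining it with \cref{lem:low-ext-degree} ($e_v\le \cext\sigma_v$) together with the decomposition identity $a_v=(|C|-1-d_v)+e_v$, which lets one trade anti-degree for external degree (the gap $|C|-1-d_v$ being itself $O(\eacd)|C|$ and absorbable into the constant). Gluing the two regimes and bookkeeping all $O(\eacd)$-factors gives a constant $\cant=\cant(\eacd)$ depending only on $\eacd$; external neighbors $E_v$ only contribute non-negatively to $\#\mathrm{NE}(N(v))$ and thus never harm the bound.
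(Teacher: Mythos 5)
Your first regime is sound: for $w\in N_C(v)$ the degree-budget bound $|N(w)\cap N(v)|\le d_w-1-|N(w)\cap A_v|$ (writing $A_v=C\setminus N[v]$), summed over $w$ and with the double sum swapped to $\sum_{u\in A_v}|N(u)\cap N_C(v)|\ge (1-O(\eacd))|C|\,a_v$, does give $d_v\sigma_v=\Omega(a_v d_v)$ once $a_v\ge C_1\eacd|C|$. The genuine gap is the other regime, and your patch does not close it. From $a_v=(|C|-1-d_v)+e_v$ and \cref{lem:low-ext-degree} you only get $a_v\le(|C|-1-d_v)+\cext\sigma_v$, and the additive term $|C|-1-d_v$ is merely $O(\eacd)|C|$ -- which is exactly the order of the quantity you are trying to bound in this regime and is not controlled by $\sigma_v$ at all; it cannot be ``absorbed into the constant'', since the constant multiplies $\sigma_v$, which may be $0$. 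In fact, if $\sigma_v$ is read as the local sparsity $\spar_v$, the statement you are proving is false for $(\deg+1)$ ACDs: let $C=K\cup\{u_1,\dots,u_t\}$ where $K$ is a large clique, $v\in K$, the $u_i$ are pairwise adjacent and adjacent to every node of $K$ except $v$, and $t\le \eacd|K|/2$, say. All four conditions of \cref{def:acd} hold, $N(v)=K\setminus\{v\}$ is a clique so $\spar_v=0$, yet $a_v=t$. So the low-$a_v$ regime cannot be rescued by any bookkeeping of $O(\eacd)$ factors.

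The resolution (and what the cited Lemma~3 of \cite{HKNT21} actually states) is that $a_v$ is bounded by the \emph{slackability} $\barsigma_v=\spar_v+\disc_v$, not by sparsity alone; the present paper's $\sigma_v$ is that quantity with the bar dropped. The discrepancy term is precisely what pays for your problematic deficit: in the $\deg+1$ setting a neighbor $w$ with $d_w>d_v$ has a strictly larger palette, so $\disc_{w,v}=|\pal_w\setminus\pal_v|/|\pal_w|\ge (d_w-d_v)/(d_w+1)$. Charging $(d_w-d_v)^+$ to this disparity \emph{pointwise}, rather than bounding $\sum_{w}(d_v-d_w)$ in aggregate by $-O(\eacd)|C|^2$, turns your summed inequality into $2d_v\spar_v+(1+O(\eacd))\,d_v\disc_v\ \ge\ \sum_{u\in A_v}|N(u)\cap N_C(v)|\ \ge\ (1-O(\eacd))|C|\,a_v$, which yields $a_v\le \cant\cdot\barsigma_v$ directly, with no case split and no appeal to \cref{lem:low-ext-degree}. (In the counterexample above, each $w\in N(v)$ indeed contributes disparity $\approx t/|C|$, so $\disc_v\approx t=a_v$, as the corrected statement requires.) In short, your double count is the right engine, but the claim needs the discrepancy component of slackability, and the degenerate regime cannot be closed the way you propose.
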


Let the in-clique discrepancy of a node $v$ be defined as $\discC_v := \sum_{u \in N_C(v)} \disc_{u,v} = \sum_{u \in N_C(v)} \card{\pal_u \setminus \pal_v} / \card{\pal_u}$.

\begin{lemma}
\label{lem:chromatic-slack}
Consider a node $v \in C$ and $\mu_L,\mu_H$ such that $\mu_L \leq \discC_v \leq \mu_H$. During \slackgeneration, $v$ gets chromatic slack $\kappa_v \in O(\mu_H)$ w.p.\ $1-\exp(-\mu_H)$, and $\kappa_v \in \Omega(\mu_L)$ w.p.\ $1-\exp(-\mu_L)$.
\end{lemma}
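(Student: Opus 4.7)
} The plan is to write $\kappa_v$ as a sum of indicators over $N_C(v)$, compute its mean, and then treat the upper and lower tails separately, the latter being the main difficulty because of collision-induced dependencies.

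For each $u \in N_C(v)$, let $X_u$ be the indicator that $u$ trials a color in $\pal_u \setminus \pal_v$ during {\slackgeneration} \emph{and} keeps it as a permanent color, so that $\kappa_v = \sum_{u \in N_C(v)} X_u$. Let $Y_u$ be the relaxation recording only the trial event (regardless of whether it survives collisions), so $X_u \le Y_u$. Writing $p$ for the participation probability of {\slackgeneration}, the $Y_u$ depend on disjoint sets of random bits, are therefore mutually independent, and have $\Exp\event*{Y_u} = p\,\disc_{u,v}$; hence $\Exp\event*{\sum_u Y_u} = p\discC_v \in [p\mu_L, p\mu_H]$. A standard {\slackgeneration} calculation gives that conditional on $u$ trialing any particular color $c \in \pal_u\setminus \pal_v$, each neighbor $w$ of $u$ independently trials $c$ with probability at most $p/\card{\pal_w}$, and summing over $w \in N(u)$ shows the expected number of conflicts is $O(1)$. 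Independence of neighbors' trials then gives a constant success probability $q > 0$, so $\Exp\event*{X_u} \ge pq\,\disc_{u,v}$ and $\Exp\event*{\kappa_v} \in [pq\mu_L,\, p\mu_H]$.

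The upper tail is immediate: the independent Chernoff bound (\cref{lem:basicchernoff}) gives $\sum_u Y_u \le 2p\mu_H$ with probability $1-\exp(-\Omega(\mu_H))$, so $\kappa_v \le \sum_u Y_u = O(\mu_H)$ with the claimed probability. The lower tail is the main obstacle, because flipping a single node $w$'s trial can simultaneously cause several of $w$'s neighbors in $N_C(v)$ to switch between success and failure through a small collision cascade, so $\kappa_v$ is not $O(1)$-Lipschitz in the raw random bits of the algorithm.

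To circumvent this, I would invoke the Talagrand-difference inequality (\cref{lem:talagrand-difference}) with $A_j$ = ``$j$ trials a color in $\pal_j\setminus \pal_v$'' and $B_j$ = ``$j$ keeps its trial'', so that $h = \sum_j \mathbb{I}_{A_j \cap B_j}$ coincides with $\kappa_v$. The required hypothesis $\Exp\event*{h} \ge \alpha\,\Exp\event*{f}$ is exactly the constant survival rate $q$ established above. The function $f = \sum_j \mathbb{I}_{A_j}$ is $1$-Lipschitz and $1$-certifiable in the trial choices. For $g = f - h$ I would first condition on the high-probability event $\mathcal{E}$ that no color is trialed by more than $O(1)$ nodes in the $2$-hop neighborhood of $v$ (a routine Chernoff plus union bound argument over colors); conditional on $\mathcal{E}$, flipping any single node's trial alters $g$ by $O(1)$ and changes in $g$ are certified by constantly many other trials. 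Applying \cref{lem:talagrand-difference} then yields
\[
\Pr\event*{\kappa_v \le \tfrac{1}{2}\Exp\event*{\kappa_v}} \;\le\; \exp\bigl(-\Omega(\Exp\event*{\kappa_v})\bigr) \;=\; \exp(-\Omega(\mu_L))~,
\]
which combined with $\Exp\event*{\kappa_v} = \Omega(\mu_L)$ gives $\kappa_v = \Omega(\mu_L)$ w.p.\ $1-\exp(-\Omega(\mu_L))$, as required. The delicate step is the conditioning on $\mathcal{E}$ needed to pin down the Lipschitz constant of $g$; everything else is a routine application of the already-established expectation bounds and the concentration machinery in \cref{app:concentration}.
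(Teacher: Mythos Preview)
Your upper-tail argument matches the paper's: dominate $\kappa_v$ by the sum of independent ``tries-outside-$\pal_v$'' indicators and apply Chernoff.

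The lower tail, however, has a genuine gap. You apply \cref{lem:talagrand-difference} with events indexed by \emph{nodes} $j\in N_C(v)$, so that $g=\sum_j\mathbb{I}_{A_j\cap\overline{B}_j}$ counts nodes that try a color outside $\pal_v$ but fail to keep it. As you correctly observe, this $g$ is not $O(1)$-Lipschitz in the raw random choices. Your proposed fix---condition on the event $\mathcal{E}$ that no color is tried by more than $O(1)$ nodes in the $2$-hop neighborhood---does not work: in an almost-clique, roughly $|C|$ nodes each try a uniform color from a palette of size $\approx|C|$, so by the standard balls-into-bins analysis the maximum load is $\Theta(\log|C|/\log\log|C|)$ with high probability, not $O(1)$. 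A Chernoff-plus-union-bound over colors only yields a bound of $O(\log\Delta)$ on the load, and plugging $c=\Theta(\log\Delta)$ into Talagrand loses a $1/c^2$ factor in the exponent, which is too weak. Moreover, conditioning on $\mathcal{E}$ destroys the independence of the $X_i$'s that \cref{lem:talagrand-difference} requires.

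The paper avoids this by switching the index set from nodes to \emph{colors}: it lower-bounds $\kappa_v$ by $Z=\sum_{\col\notin\pal_v} Z_\col$, where $Z_\col$ indicates that a \emph{unique} node in $N_C(v)$ tries $\col$ and keeps it. With $A_\col$ = ``some node in $N_C(v)$ tries $\col$'' and $A_\col\cap\overline{B}_\col$ = ``some node in $N_C(v)$ tries $\col$ but not uniquely-and-successfully'', both $f$ and $g$ are $O(1)$-Lipschitz (changing one node's trial touches at most two colors) and $O(1)$-certifiable (exhibit one or two triers per color), with no conditioning needed. The re-indexing is the missing idea; once you have it, the rest of your outline goes through.
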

\begin{proof}
Let $\pgen$ be the (constant) probability that a node tries a color during \slackgeneration.
Let us define the random variables $X_u$, $Y_u$, $Z_\col$ and their sums $X$, $Y$, $Z$ as follows:
\begin{compactitem}
    \item For each $u \in N_C(v)$, $X_u$ corresponds to the event that $u$ tries a color outside $v$'s palette.
    \item For each $u \in N_C(v)$, $Y_u$ is the same event as $X_u$, with the addition that $u$ gets to keeps the color it tries as permanent color.
    \item For each $\col \in \colspace \setminus \pal_v$, $Z_\col$ is the event that a unique node in $N_C(v)$ tries $\col$ and keeps it as permanent color.
    \item  $X = \sum_{u \in N_C(v)} X_u$, $Y = \sum_{u \in N_C(v)} Y_u$, and $Z = \sum_{u \in \colspace \setminus \pal_v} Z_u$. 
\end{compactitem}

The upper bound follows directly from $\kappa_v = Y$ and $\Exp[Y] \leq \Exp[X] = \pgen\discC_v \leq \pgen \mu_H$. Since the events $X_u$ are all independent, by Chernoff, $\kappa_v \in O(\mu_H)$ w.p.\ $1-\exp(-\Omega(\mu_H))$.

For the lower bound, we relate the chromatic slack to $Z$. First $\kappa_v = Y \geq Z$ simply by definition, and $\Exp[Z] \in \Omega(\Exp[X])$ as each color try has an $\Omega(1)$ chance of being successful and unique within $C$. Since $\Exp[X] = \pgen \discC_v \geq \pgen \mu_L$, $\Exp[Z] \in \Omega(\mu_L)$.
By \cref{lem:talagrand-difference}, as $Z$ is the difference of two $O(1)$-Lipschitz and $O(1)$-certifiable random quantities, $Z \in \Omega(\Exp[Z])$ w.p.\ $1-\exp(-\Omega(\Exp[Z]))$. Hence, $\kappa_v \in \Omega(\mu_L)$ w.p.\ $1-\exp(-\Omega(\mu_L))$.
\end{proof}

\begin{proof}[Proof of \cref{lem:leader-selection}]
By definition, $e_x + a_x + \kappa_x \leq e_w + a_w + \kappa_w$, where $w$ is the node of minimum slackability within $C$. \Cref{lem:low-ext-degree,lem:low-anti-degree} imply $(e_w + a_w)\in O(\barsigma_C)$. By \cref{lem:chromatic-slack}, since $\discC_w \leq \barsigma_w = \barsigma_C$ by definition, $\kappa_w \in O(\ell)$ w.p.\ $1-\exp(-\Omega(\ell))$ when $C$ is low-slack ($\barsigma_C \leq \ell$), and $\kappa_w \in O(\barsigma_c)$ w.p.\ $1-\exp(-\Omega(\barsigma_C))$ when $C$ is high-slack. This implies overall that for the leader $x$ selected:
\begin{compactitem}
    \item $(e_x + a_x + \kappa_x) \in O(\ell)$ w.p.\ $1-\exp(-\Omega(\ell))$ if $C$ is low-slack,
    \item $(e_x + a_x + \kappa_x) \in O(\barsigma_C)$ w.p.\ $1-\exp(-\Omega(\barsigma_C))$ if $C$ is high-slack.
\end{compactitem}

From now on, let us focus on the high-slack case, the low-slack case being similar. We have $(e_x + a_x + e_w + a_w) \in O(\barsigma_C)$, which means that $x$'s and $w$'s neighborhood may not significantly differ, i.e., $|N(x) \mathbin{\triangle} N(w)| \in O(\barsigma_C)$. Their sparsities therefore only differ by $O(\barsigma_C)$. Their discrepancies also only differ by $O(\barsigma_C)$, as $\kappa_x \in O(\barsigma_C)$ means that $\discC_x \in O(\barsigma_C)$, as a larger in-clique discrepancy would have likely resulted in much higher chromatic slack by \cref{lem:chromatic-slack}, and other differences in discrepancy must come from difference in neighborhoods, which we have shown to be bounded by $O(\barsigma_C)$.
\end{proof}

It only remains for the almost-clique to estimate its slackability with enough accuracy to categorize itself as either high- or low-slack. The aggregate $(e_x + a_x + \kappa_x)$ we used to pick a leader gives us some idea of the slackability of the almost-clique, in that it is upper-bounded by $O(\max(\ell,\barsigma_C))$ as we have seen in the proof of \cref{lem:leader-selection}. However, it does not measure in-clique sparsity. For instance, the aggregate could even be $0$ in a high-slack almost-clique $C$: it suffices that $C$'s slackability is mostly due to sparsity, and that it contains a node $v\in C$ that is connected to all other nodes of $C$, has no external neighbor, and has a palette containing the palettes of other nodes in $C$.

To estimate the sparsity of the almost-clique, we approximate the sparsity of the leader $x$ by counting the number of edges in its in-clique neighborhood. This is easily done in \CONGEST by having each node tell their neighbors whether they are adjacent to $x$, and having each neighbor of $x$ count and transmit to $x$ to how many neighbors of $x$ it is connected.

\begin{lemma}
Let $\hat m =m(N_C(x))=\frac{1}{2}\sum_{u\in N_C(x)}|N(u)\cap N_C(x)|$ count the number of edges in $x$'s in-clique neighborhood. Then $\hat \spar_x = \frac{1}{d_x} (\binom{d_x}{2}) - \hat m$ satisfies $\hat \spar_x \in [\spar_x,\spar_x+e_x]$, and so $(e_x + \hat \spar_x + \kappa_x) \in \Omega(\barsigma_C)$, w.p.\ $1-\exp(-\Omega(\barsigma_C))$.
\end{lemma}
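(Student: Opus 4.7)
I would handle the two claims separately.

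\textbf{The interval bound on $\hat\spar_x$.} The lower bound $\hat\spar_x\ge\spar_x$ is immediate: $N_C(x)\subseteq N(x)$ gives $\hat m=m(N_C(x))\le m(N(x))$, and substituting the smaller $\hat m$ for $m(N(x))$ in the local-sparsity formula only increases the value. For the upper bound, I would write $\hat\spar_x-\spar_x = (m(N(x))-m(N_C(x)))/d_x$ and bound the numerator by counting the edges of $N(x)$ that have at least one endpoint in the external neighborhood $N(x)\setminus N_C(x)$, a set of size $e_x$. Charging each such missed edge to one of its external endpoints and using that any vertex of $N(x)$ has at most $d_x-1$ other neighbors inside $N(x)$, the count is at most $e_x(d_x-1)$, so $\hat\spar_x-\spar_x\le e_x$.

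\textbf{The lower bound on the aggregate.} For the second part I would exploit that $x\in C$ forces $\barsigma_C\le\barsigma_x$ and that slackability decomposes, up to constants, as $\barsigma_v=\Theta(\spar_v+\discC_v)$; the contributions from $e_v$ and $a_v$ are already absorbed into $\spar_v$ via \cref{lem:low-ext-degree,lem:low-anti-degree}. Hence $\barsigma_C\le c(\spar_x+\discC_x)$ for an absolute constant $c$, and by pigeonhole at least one of $\spar_x\ge \barsigma_C/(2c)$ or $\discC_x\ge \barsigma_C/(2c)$ holds. In the first case, $\hat\spar_x\ge\spar_x\ge\barsigma_C/(2c)$, so the aggregate is $\Omega(\barsigma_C)$ deterministically. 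In the second, I invoke \cref{lem:chromatic-slack} with $\mu_L=\barsigma_C/(2c)\le\discC_x$ to deduce $\kappa_x\in\Omega(\barsigma_C)$ with probability $1-\exp(-\Omega(\barsigma_C))$. In either case the aggregate $e_x+\hat\spar_x+\kappa_x$ is $\Omega(\barsigma_C)$ with the claimed probability.

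\textbf{Main obstacle.} The combinatorial half is elementary; the real subtlety is the probabilistic lower bound on $\kappa_x$ in the discrepancy-dominated case. \Cref{lem:chromatic-slack} delivers a tail bound whose failure probability decays with the parameter $\mu_L$ that is fed in, not automatically with $\barsigma_C$. The key trick is to supply $\mu_L=\Theta(\barsigma_C)$, which is legitimate only after the case split guarantees $\discC_x\ge\mu_L$; this is what turns the underlying Talagrand-type decay into the claimed $\exp(-\Omega(\barsigma_C))$ failure probability. A secondary point to verify is that the case analysis can be applied deterministically before drawing the randomness of \slackgeneration, so that the conditional invocation of \cref{lem:chromatic-slack} in the second case is valid.
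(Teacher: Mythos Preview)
Your proposal is correct and follows essentially the same approach as the paper. The interval bound on $\hat\spar_x$ is handled identically, and the probabilistic lower bound on the aggregate uses the same case-split idea, culminating in an application of \cref{lem:chromatic-slack} with $\mu_L=\Theta(\barsigma_C)$ in the discrepancy-dominated case.

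The one minor difference worth flagging is the shape of the case split. You do a two-way split after asserting $\barsigma_C\le c(\spar_x+\discC_x)$; this requires absorbing the out-of-clique discrepancy (bounded by $e_x$) into the right-hand side, i.e.\ $e_x=O(\spar_x+\discC_x)$, and your invocation of \cref{lem:low-ext-degree,lem:low-anti-degree} for this step is not quite airtight as stated (those lemmas bound $e_x,a_x$ by a constant times $\sigma_x$, which itself involves the full discrepancy, so one needs $c_e<1$ or a separate argument to close the loop). The paper avoids this by using a three-way split on $\spar_x$, $e_x$, and the remaining case: since $e_x$ is already a summand in the aggregate $e_x+\hat\spar_x+\kappa_x$, the case $e_x\ge\barsigma_x/3$ gives $\Omega(\barsigma_C)$ trivially, and in the residual case one has directly $\discC_x\ge\disc_x-e_x\ge\barsigma_x-\spar_x-e_x\ge\barsigma_x/3$, which feeds into \cref{lem:chromatic-slack}. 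Adding that extra case to your argument removes the need to justify the absorption and makes the proof go through verbatim.
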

\begin{proof}
The sparsity of $x$ corresponds to the number of missing edges in its neighborhood, divided by $d_x$. As $\hat m$ only counts edges in $x$'s in-clique neighborhood, it undercounts the number of edges in $N(x)$ by $e_x\cdot d_x$ or less, i.e., $\hat m \in [m(N(x)) - e_x\cdot d_x,m(N(x))]$. In turn, this means that the estimate $\hat \spar_x$ of $x$'s sparsity is in the range $[\spar_x,\spar_x+e_x]$.

The discrepancy of $x$ is the sum of the contribution of its in-clique neighbors and that of its external neighbors, that is, $\disc_x \in [\discC_x,\discC_x + e_x]$.

The slackability of $x$ is defined as $\barsigma_x = \spar_x + \disc_x$. If $\spar_x \geq \barsigma_x/3$ or $e_x \geq \barsigma_x/3$, then $(e_x + \hat \spar_x + \kappa_x) \in \Omega(\barsigma_x)$ trivially. Otherwise, $\discC_x \geq \disc_x - e_x \geq \barsigma_x/3$, which implies by \cref{lem:chromatic-slack} that $\kappa_x \in \Omega(\barsigma_x)$ w.p.\ $1-\exp(-\Omega(\barsigma_x))$. The statement can be reformulated with $\barsigma_C$ instead of $\barsigma_x$ as $\barsigma_x \geq \barsigma_C$ by definition of $\barsigma_C$.
\end{proof}

Putting everything together, we get that $(e_x + \hat \spar_x + \kappa_x) \in \Theta(\barsigma_C)$ w.p.\ $1-\exp(-\Omega(\barsigma_C))$ in high slack almost-cliques, and that $(e_x + \hat \spar_x + \kappa_x) \in O(\ell)$ w.p.\ $1-\exp(-\Omega(\ell))$ in low-slack almost-cliques, which is sufficient for our purposes.

\subsection{Coloring the Put-Aside Sets}
\label{sec:coloring-put-aside}

The coloring of the put-asides sets $P_C$ is the only step of the algorithm for dense nodes (\cref{alg:logstar-dense}) that remains to be explained.
In this step of the algorithm, the nodes in each put-aside set $P_C$ transmit the content of their palettes and the topology of $G[P_C]$ to their leader $x_C$. Provided with this information, the leader can then assign each node of $P_C$ a color from its individual palette that does not conflict with the color of its neighbors. Without any adjustment, this process has each node from a put-aside $P_C$ send $\Theta(|P_C|(\log|\colspace| + \log n))$ bits to its leader, which on a single communication link would be too costly in \CONGEST. We adapt this part of the algorithm to \CONGEST through three avenues. First, we have nodes send colors to their leader by using a hash function chosen by said leader. This solves the bandwidth requirements that sending very large colors presents, as explained in \cref{sec:large-colors}. Second, we reduce the size of the put-aside sets to just the size that is needed to get sufficient slack. Finally, we use other nodes of the almost-clique as relays to increase the bandwidth between the leader and each node of the put-aside.

The leader restricts the size of $P_C$ to $\Theta(\ell)$, which is a sufficient amount of slack for the parts of the algorithm that rely on slack from the put-aside sets (invoking \slackcolor).
Recall that $\core_C \subseteq N_C(x_C)$.
The leader enumerates the nodes in $\core_C$
and allocates each node $v\in P_C$ a contiguous interval of $2|P_C|+1$ indices, corresponding to a set $R_v$ of nodes.
Since $|\core_C| \ge 2|P_C|^2+|P_C|$, the nodes receive disjoint intervals.
Each node $v\in P_C$ has $a_v = O(\sigma_C)=O(\ell)\le |P_C|$ non-neighbors in $C$, and hence it has at least $|R_v| - a_v \ge |P_C|$ neighbors in $R_v$.
Now $v$ can send $|N(v)\cap P_C|+1$ colors from its palette to $x_C$ in $O(1)$ rounds, via the relay nodes in $N(v)\cap R_v$. The topology of $P_C$ can similarly be transmitted. The leader can then properly color $P_C$ locally and forward the colors to the nodes.

\subsection{Large Colors}
\label{sec:large-colors}

We have implicitly assumed until now that sending a color over an edge, as nodes do when broadcasting their permanent color to their neighbors, only takes $O(1)$ rounds. This is possible if the color space $\colSpace$ is of size $\card{\colSpace} \in n^{O(1)}$. In \cref{lem:representative_hash_functions}, the dependency of $t$ in $\card{\colSpace}$ is only $\log \card{\colSpace}$, meaning that sending a representative hash function still takes only $O(1)$ rounds even when $\card{\colSpace} \in \exp(n^{\Theta(1)})$. Can we tolerate such a large color space in other parts of the algorithm? We resolve this in the affirmative.

\paragraph{Pre-shattering phase.}
For all parts of the algorithm except the post-shattering phase, we achieve this using a family $\HFset$ of $1+\eps$-approximately universal hash functions, i.e., a set of hash functions $h:[N] \rightarrow [M]$ such that for all $x_1 \neq x_2$, $\Pr_{h\gets\HFset}\event{h(x_1)=h(x_2)}\leq (1+\eps) / M$. There exists small enough families of such hash functions so that specifying an element in the family only takes $O(\log\log N + \log M + \log(1/\eps))$ bits (\cite{BJKS93}, or Problem 3.4 in~\cite{Vadhan12}). Set $\eps=1$ and let us hash to $M=\Theta(n^{d})$ values, where $d \in \Theta(1)$. Under these assumptions, sending a hash value only takes $O(1)$ rounds, and sending an element of $\HFset$ takes $O(\ceil{\log\log\colSpace / \log n})$ rounds -- in particular, $O(1)$ if colors are written on $\poly(n)$ bits. Let each node $v$ pick and broadcast a random $1+\eps$-approximately universal hash function $h_v$ from $\HFset$ at the start of our algorithms. Whenever a node $u$ was previously sending a color $\col$ to a node $v$ in our algorithms, we now have $u$ send $h_v(\col)$ to $v$. Granted no collision occurs in any neighborhood, these hash values perfectly replace the actual colors wherever nodes were previously using the exact colors of their neighbors, such as when updating their palettes, computing their chromatic slack, and when a leader in an almost-clique sends colors to the inliers -- each inlier looks for a color that hashes to the hash sent by the leader, and then tries that color by hashing it using its neighbors' hash functions.

With $\log n$ bandwidth, we ensure no collision occurs in any neighborhood w.h.p., by taking $d$ appropriately large. Consider a node $v$ and its neighborhood. There are at most $(\Delta+1)^2$ distinct colors in the palettes of $v \cup N(v)$. The probability that a collision occurs in these colors with a random hash function from $\HFset$ is bounded by $\binom{(\Delta+1)^2}{2} \cdot n^{-d} \leq n^{-d+4}$. So, w.p.\ at least $1-n^{-d+5}$, there are no collisions in all neighborhoods. Setting $d \ge 6$, this holds w.h.p.

\paragraph{Post-shattering phase.}
For the post-shattering phase, unlike in the \LOCAL model, in general we may not directly use one of the recent deterministic algorithms of Ghaffari and Kuhn~\cite{GK21}, as the complexity of their algorithm in \CONGEST depends on the size of the color space $\colspace$. Indeed, their two \CONGEST algorithms use either $\log^2 \Delta \log n$ rounds of $\Delta \log \card{\colspace}$ bits or $\log^2 \card{\colspace} \log n$ rounds of $\log \card{\colspace}$ bits (note that $n$ and $\Delta$ are the parameters of the shattered connected component, but $\colspace$ is the original color space). When $\card{\colspace} \in \poly(\log n)$ (and therefore, all the degrees of the graph as well) we get an $O(\log^3 \log n)$ algorithm, but handling a larger color space requires additional work. 

We handle larger color spaces by computing a network decomposition on the shattered graph in $O(\log^5 \log n)$ rounds \cite{GGR20}, and coloring each cluster of each color class by first computing a color space reduction before using the deterministic algorithm of \cite{GK21}. The color space reduction simply consists of finding a function that maps each color from $\colspace$ to a $\poly(\log n)$ number such that no collision occurs in any node's palette. This is achieved through derandomizing the random selection of such a function with the method of conditional expectation.

\begin{lemma}[Lemma 3.19 in \cite{HKMN20} (full version)]
Let $N \in \log^{O(1)} n$ be the size of the subgraph on which we compute a network decomposition of diameter $D$.
Consider one cluster $C$ of the network decomposition and let $\pal(u)$ be the palette of vertex $u \in C$ of size $L \leq N$. There is a deterministic $D\cdot \log N$ round algorithm that computes a colorspace reduction $f : \colspace \rightarrow N^{10}$ such that $\card{f(\pal(u))} = \card{\pal(u)}$ for all $u \in C$.
The colorspace reduction $f$ can be described with $O(\log \log n)$ bits.
\end{lemma}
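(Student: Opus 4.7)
The plan is to derandomize a standard probabilistic hashing argument via the method of conditional expectations, orchestrated by the cluster leader along a BFS tree of $C$.

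\textbf{Probabilistic existence.} Fix an approximately pairwise-independent family $\mathcal{F}$ of hash functions $\colspace \to [N^{10}]$ that is small enough (of size $\mathrm{polylog}(n)$) that an index into it fits in $O(\log \log n)$ bits, and for which $\Pr_{f \gets \mathcal{F}}[f(c) = f(c')] = O(N^{-10})$ for every distinct pair $c \neq c'$. Let
\[\Phi(f) := \sum_{u \in C} \bigl|\{\{c,c'\} \subseteq \pal(u) : c \neq c',\ f(c) = f(c')\}\bigr|\]
count same-palette collisions. Since $|C| \leq N$ and $|\pal(u)| \leq L \leq N$, the number of candidate pairs is at most $N \cdot \binom{N}{2}$, so $\Exp[\Phi] \leq O(N^{3}/N^{10}) = O(N^{-7}) < 1$. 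As $\Phi$ is a nonnegative integer, some $f \in \mathcal{F}$ achieves $\Phi(f) = 0$, i.e., is injective on every palette.

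\textbf{Derandomization.} Let the bits of the index of $f$ be $b_1, \ldots, b_k$ with $k = O(\log N) = O(\log \log n)$. The leader fixes them greedily: having fixed $b_1, \ldots, b_{i-1}$, it picks $b_i \in \{0,1\}$ minimizing $\Exp[\Phi \mid b_1, \ldots, b_i]$. This expectation decomposes additively as $\sum_{u \in C} \Exp[\Phi_u \mid \cdots]$, where $\Phi_u$ counts collisions inside $\pal(u)$ and depends only on $u$'s palette together with the bits fixed so far. Each $u \in C$ therefore computes its two candidate contributions locally, and the leader aggregates the $|C|$ sums along a BFS tree of depth $O(D)$, selects the minimizing bit, and broadcasts it back, all at a cost of $O(D)$ rounds per bit. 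After $k$ iterations, the usual method-of-conditional-expectations guarantee gives $\Exp[\Phi \mid b_1, \ldots, b_k] \leq \Exp[\Phi] < 1$, and, being a nonnegative integer, the chosen $f$ satisfies $\Phi(f) = 0$.

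\textbf{Costs and main obstacle.} The total round count is $k \cdot O(D) = O(D \log N)$, and the final $f$ is described by the $O(\log \log n)$-bit index broadcast during the last BFS, as required. The delicate point is producing a family $\mathcal{F}$ simultaneously satisfying (i) an $O(N^{-10})$ pairwise collision probability over the potentially astronomical universe $\colspace$, and (ii) an index length of only $O(\log \log n)$ bits; off-the-shelf pairwise-independent hash families have descriptions of length $\Theta(\log |\colspace|)$, which is too long. The natural remedy is to compose with the globally-known $(1+\varepsilon)$-approximately universal hash of Section~6.3, which collapses $\colspace$ into $n^{O(1)}$ labels, after which the cluster-local derandomized hash maps those labels into $[N^{10}]$ using only $O(\log \log n)$ bits of local information. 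Once such a composed family is fixed, the decomposability of the conditional expectation over palettes and its $O(D)$-round aggregation at the leader are routine, completing the proof.
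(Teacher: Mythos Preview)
The paper does not prove this lemma; it is quoted verbatim from another paper (HKMN20) and used as a black box. The only indication of a proof given here is the one-line remark preceding the lemma: ``This is achieved through derandomizing the random selection of such a function with the method of conditional expectation.'' Your proposal follows exactly this approach and is sound: the expected-collision count, its additive decomposition over nodes, bit-by-bit fixing via BFS aggregation, and the $O(D\log N)$ round count are all correct.

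One small imprecision: the $(1+\varepsilon)$-approximately universal hash you invoke from the ``Large Colors'' section is chosen \emph{per node} ($h_v$), not globally, so it is not literally a ``globally-known'' map from $\colspace$ to $n^{O(1)}$. This is easy to repair---either (i) have the cluster leader broadcast its own $h_{x_C}$ once (costing $O(\log n)$ bits, which is fine bandwidth-wise, though it means the full description of $f$ is $O(\log n)$ rather than $O(\log\log n)$ bits), or (ii) observe that the cited lemma was proved in a setting where $|\colspace|\in\poly(n)$ already, in which case the $\eps$-almost pairwise-independent families of size $\poly(\out,\log|\colspace|,1/\eps)$ mentioned in the paper give a family of size $\poly(N^{10},\log n)=\polylog(n)$ directly, and the $O(\log\log n)$-bit bound holds without composition. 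Either reading resolves the obstacle you correctly flagged.
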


\section{Full Statement of (Degree+1)-List-Coloring Algorithm}
\label{app:deg+1_alg}

\subsection{Broad Structure}

As explained in the algorithm overview in \cref{app:details}, nodes are dealt with in degree ranges of the form $[\log^7 x,x]$. The full algorithm simply consists of $O(\log^* n)$ call to \cref{alg:logstar-both}, which assumes the degrees of nodes in the graph to be within such a degree range.

\begin{algorithm}[H]
\caption{Randomized $\degoLC$ Algorithm ($\forall v, d_v \in[\log^7 \Delta,\Delta]$)}
\label{alg:logstar-both}
  \begin{algorithmic}[1]
    \STATE \computeacd.
    \STATE Apply \cref{alg:logstar-sparse} to sparse nodes.
    \STATE Apply \cref{alg:logstar-dense} to dense nodes.
  \end{algorithmic}
\end{algorithm}

The algorithm for a given degree range (\cref{alg:logstar-both}) itself calls two procedures: one that colors the sparse nodes (\cref{alg:logstar-sparse}), and one that colors the dense nodes (\cref{alg:logstar-dense}). Before that, it computes an almost-clique decomposition, which we explained how to do in \CONGEST in \cref{sec:acd}.

\begin{algorithm}[H]
\caption{Main Procedure for Coloring Sparse Nodes}
\label{alg:logstar-sparse}
    \begin{algorithmic}[1]
    \STATE Identify the set $\Vst \subset \Vsp$
    \STATE {\slackgeneration} in $G[\Vsp\cup \Vun]$.
    \STATE {\slackcolor} $\Vst$. \label{st:sp-multitrial}
    \STATE {\slackcolor} $\Vsp \setminus \Vst$ and $\Vun$. \label{st:o-multitrial2}
\end{algorithmic}
\end{algorithm}

\begin{algorithm}[H]
\caption{Main Procedure for Coloring Dense Nodes}
 \label{alg:logstar-dense}
  \begin{algorithmic}[1]
  \STATE Compute the leader $x_C$ and outliers $O_C$ of each almost-clique $C$. Let $O = \cup_C O_C$. \label{st:outliers-hi}
  \STATE {\slackgeneration}.
  \STATE $P_C \gets \putaside[(C)]$ in each low-slack almost-clique $C$. Let $P = \cup_C P_C$.      \label{st:putaside-hi}
  \STATE {\slackcolor} $O$. \label{st:o-multitrial}
    \STATE {\synchronizedcolortrial} $\Vdense \setminus P$.\label{st:synchtrial}
    \STATE {\slackcolor} $\Vdense \setminus P$. \label{st:lastmultitrial-hi}
    \STATE For each low-slack $C$, let $x_C$ collect the palettes in $P_C$ and color the nodes locally.\label{st:collect-hi}
\end{algorithmic}
\end{algorithm}

Approaching the $\degoLC$ problem by giving two separate algorithms for sparse and dense nodes is natural. Indeed, splitting the problem in that manner results in two \degoLC instances of similar degree ranges (more generally, the problem is self-reducible), and the all-sparse and all-dense cases are possible inputs that need to be considered anyway.

\subsection{Subroutines}

We now detail the subroutines referred to in \cref{alg:logstar-sparse,alg:logstar-dense} above.

\paragraph{Trying colors and slack generation. } \slackgeneration simply consists of each node trying a random color in its palette, with some constant probability. What is trying a color formally means is described in \cref{alg:trycolor}, in particular what it means to try a color when some nodes have priority over other nodes. How trying a color can be done in \CONGEST even when the color space is of order $\exp(n^{\Theta(1)})$ is explained in \cref{sec:large-colors}.

\begin{algorithm}[H]\caption{\slackgeneration[(probability $\pgen$)]}\label{alg:slackgeneration}
\begin{algorithmic}[1]
\STATE $S\gets $ sample each $v\in G$ into $S$ independently w.p.\ $\pgen=1/10$.
\STATE \algorithmicforall\ $v\in S$ in parallel \algorithmicdo\ {\tryrandomcolor}$(v)$.
\end{algorithmic}
\end{algorithm}

\begin{algorithm}[H]\caption{\tryrandomcolor (vertex $v$)}\label{alg:tryrandomcolor}
\begin{algorithmic}[1]
\STATE Pick $\col_v$ u.a.r.\ from $\pal_v$. 
\STATE $\trycolor(v,\col_v)$
\end{algorithmic}
\end{algorithm}

A more refined version gives priority to some nodes over others: for each node $v$, we partition its neighborhood $N(v)$ into $N^+(v)$ -- the nodes whose colors conflict with $v$'s -- and $N^-(v) = N(v)\setminus N^+(v)$. For correctness of \trycolor, $u \in N^-(v) \rightarrow v\in N^+(u)$ should hold for each edge $uv$. 

\begin{algorithm}[H]\caption{\trycolor (vertex $v$, color $\col_v$)}\label{alg:trycolor}
\begin{algorithmic}[1]
\STATE Send $\col_v$ to $N(v)$, receive the set $T^+=\{\col_u : u\in N^+(v)\}$.
\STATE{\textbf{if}} $\col_v\notin T^+$ \textbf{then} permanently color $v$ with $\col_v$.
\STATE Send/receive permanent colors, and remove the received ones from $\pal(v)$.
\end{algorithmic}
\end{algorithm}

\paragraph{Leader, inliers, and outliers of an almost-clique.}
Once the leader $x$ of an almost-clique $C$ is chosen, the \emph{outliers} $O_C$ of this almost-clique are chosen to be:
\begin{compactenum}
    \item the $\max(d_x,\card{C})/3$ nodes in $C$ with the fewest common neighbors with $x$,
    \item the $\card{C}/6$ nodes of largest (original) degree, and 
    \item the anti-neighbors $A_x$ of $x$.    
\end{compactenum}

The \emph{inliers} $I_C$ are the rest of the almost-clique, $I_C=C\setminus O_C$.
Recall that $\ell = \log^{2.1} \Delta$. How the leader selection process is adapted to work in \CONGEST is explained in \cref{sec:leader}.

\paragraph{Put-aside sets.} The construction of put-aside sets is a simple random sample (\cref{alg:pas}). How these sets are colored at the end of the procedure for dense nodes in \CONGEST is explained in \cref{sec:coloring-put-aside}.

\begin{algorithm}[H]\caption{{\putaside}$(C)$} 
\label{alg:pas}
  \begin{algorithmic}[1]
  \STATE $S_C\gets$ each node $v\in \core_C$ is sampled independently w.p.\ $\pdisj = \ell^2/(48\Delta_C)$.
  \RETURN $P_C \gets \{v\in S_C : E_v\cap S = \emptyset\}$, where $S = \cup_{C'} S_{C'}$
 \end{algorithmic}
\end{algorithm}

\paragraph{Synchronized color trials within almost-cliques. }

An important subroutine of the algorithm for dense nodes is \synchronizedcolortrial, in which the leader of each almost-clique $C$ randomly gives a unique color from its palette to the uncolored non-put-aside inliers of $C$. The only possible issue in \CONGEST is that the colors may be too large to send efficiently. How to overcome this hurdle is explained in \cref{sec:large-colors}.

\begin{algorithm}[H]\caption{\synchronizedcolortrial, for almost-clique $C$}
\label{alg:synchtrial}
  \begin{algorithmic}[1]
    \STATE $x_C$ randomly permutes its palette $\pal(x_C)$, sends each neighbor $u \in \core_C$ a distinct color $\col_u$. \label{st:randomorder}
    \STATE Each $u \in \core_C$ calls {\trycolor}($u$, $\col_u$) if $\col_u \in \pal(u)$
    \end{algorithmic}
\end{algorithm}

\paragraph{Coloring with slack. }
Finally, we give the pseudocode for \slackcolor, an important subroutine in all randomized algorithms that achieve complexity $O(\log^* n)$ for graphs of large enough degree without increasing the number of colors polynomially as in Linial's algorithm. This subroutine has nodes with slack linear in their degree try increasing numbers of colors through $O(\log^*n)$ iterations. How to implement in \CONGEST its main building block, \multitrial, is explained in \cref{sec:multitrial}. The nodes can also readily compute their slack in \CONGEST with the techniques for handling large colors described in \cref{sec:large-colors}, making the whole procedure implementable in \CONGEST.

$\kappa\in (1/\smin,1]$ is a parameter, $a \knuthupuparrow b$ denotes tetration ($a \knuthupuparrow 0 = 1$, $a \knuthupuparrow (b+1) = a^{a \knuthupuparrow b}$).

\begin{algorithm}[H]\caption{\slackcolor[($\smin$)], for node $v$} 
\label{alg:slackcoloring}
  \begin{algorithmic}[1]
  \STATE \algorithmicfor\ $O(1)$ rounds \algorithmicdo\  {\tryrandomcolor}($v$).\label{step:slackcolor-begin-init} 
    \STATE \algorithmicif\ $s(v) < 2d(v)$ \algorithmicthen\ terminate.\label{step:slackcolor-end-init}
    \STATE Let $\sminpow\gets \smin^{1/(1+\kappa)}$
    \FOR{$i$ from $0$ to $ \log^* \sminpow$}\label{step:slackcolor-begin-tower}
    \STATE $x_i \gets 2 \knuthupuparrow i$ 
    \STATE $\multitrial(x_i)$ 2 times.
    \STATE \algorithmicif\ $d(v) > s(v) / \min(2^{x_i},\sminpow^{\kappa})$ \algorithmicthen\ terminate.\label{step:slackcolor-termtower}
    \ENDFOR\label{step:slackcolor-end-tower}
    \FOR{$i$ from $1$ to $\ceil*{1/\kappa}$}\label{step:slackcolor-begin-finish}
    \STATE $x_i \gets \sminpow^{i \cdot \kappa}$ 
    \STATE $\multitrial(x_i)$ 3 times.
    \STATE \algorithmicif\ $d(v) > s(v) / \min(\sminpow^{(i+1)\cdot\kappa},\sminpow)$ \algorithmicthen\ terminate.\label{step:slackcolor-termfinishloop}
    \ENDFOR
    \STATE $\multitrial(\sminpow)$.\label{step:slackcolor-end-finish}
\end{algorithmic}
\end{algorithm}

\end{document}